\newif\ifready\readytrue
\newif\ifsubmit\submitfalse

\ifsubmit
    \documentclass[sigconf]{acmart}
\else
    \documentclass{article}
    \usepackage[utf8]{inputenc}
    \usepackage[top=1in,left=1in,right=1in,bottom=1in]{geometry}
\fi
\usepackage{graphicx}
\usepackage{graphics}
\usepackage{booktabs}
\ifsubmit
\else
\usepackage{amsthm}
\usepackage{amssymb}
\fi
\usepackage{amsmath}
\usepackage{bm}
\ifsubmit
\else
\usepackage[colorlinks=true,linkcolor=blue,citecolor=blue,allcolors=blue]{hyperref}
\fi
\hypersetup{
    colorlinks=true,
    linkcolor=blue,
    filecolor=blue,
    urlcolor=blue,
    allcolors=blue
}
\usepackage{relsize}
\usepackage[noend,ruled,linesnumbered]{algorithm2e}
\SetKwProg{Fn}{Function}{}{}
\usepackage[noend]{algpseudocode}
\usepackage{multicol}
\usepackage[most]{tcolorbox}
\usepackage{color}
\usepackage{pgfplots}
\usepackage{subcaption}
\usepackage{tablefootnote}
\pgfplotsset{
    compat=1.3,
    legend image code/.code={
        \draw [#1] (0cm,-0.1cm) rectangle (0.6cm,0.1cm);
    },
}

\usepackage{thm-restate}
\usepackage{enumerate}
\usepackage{enumitem}
\setlist{noitemsep,topsep=0pt,parsep=0pt,partopsep=0pt}
\usepackage{mathtools}
\usepackage{xspace}

\usepackage[capitalize,nameinlink]{cleveref}

\theoremstyle{plain}
\newtheorem{theorem}{Theorem}[section]

\newtheorem{invariant}{Invariant}
\newtheorem{lemma}[theorem]{Lemma}
\newtheorem{claim}[theorem]{Claim}
\newtheorem{corollary}[theorem]{Corollary}
\newtheorem{definition}[theorem]{Definition}

\newcommand{\defn}[1]{\textbf{\textit{#1}}}

\crefname{theorem}{Theorem}{Theorems}
\Crefname{lemma}{Lemma}{Lemmas}
\Crefname{claim}{Claim}{Claims}
\Crefname{observation}{Observation}{Observations}
\Crefname{algorithm}{Algorithm}{Algorithms}
\Crefname{myalgctr}{Algorithm}{Algorithms}
\Crefname{challenge}{Challenge}{Challenges}

\algrenewcommand\algorithmicindent{1em}%

\newcommand{\tO}{\widetilde{O}}

\newcommand{\kc}{$k$-core\xspace}
\mathchardef\mhyphen="2D

\setlength\parindent{\parindent}
\DeclarePairedDelimiter\ceil{\lceil}{\rceil}
\DeclarePairedDelimiter\floor{\lfloor}{\rfloor}
\usepackage{colortbl}

\DeclareMathOperator{\poly}{poly}

\definecolor{mygreen}{RGB}{20,140,80}
\definecolor{linkcolor}{RGB}{0,0,230}
\definecolor{mylightgray}{RGB}{230,230,230}
\definecolor{verylightgray}{RGB}{245,245,245}

\newcommand{\etal}[0]{et al.}

\newcounter{myalgctr}

\newtcolorbox{OuterBox}[1][]{%
    breakable,
    enhanced,
    frame hidden,
    interior hidden,
    left=-5pt,
    right=-5pt,
    top=-5pt,
    float=p,
    boxsep=0pt,
    arc=0pt
#1}%

\newtcolorbox{InnerBox}[1][]{%
    enforce breakable,
    enhanced,
    colback=gray,
    colframe=white,
#1}%

\newenvironment{tbox}{
\vspace{0.2cm}
\begin{tcolorbox}[width=\columnwidth,
                  enhanced,
                  boxsep=2pt,
                  left=1pt,
                  right=1pt,
                  top=4pt,
                  boxrule=1pt,
                  arc=0pt,
                  colback=white,
                  colframe=black,
	              breakable
                  ]%
}{
\end{tcolorbox}
}

\newcommand{\tboxhrule}[0]{\vspace{0.1cm} {\color{black} \hrule} \vspace{0.2cm}}

\newenvironment{titledtbox}[1]{\begin{tbox}#1 \tboxhrule}{\end{tbox}}

\newcommand{\primal}{\textsc{Primal}}
\newcommand{\dual}{\textsc{Dual}}
\newcommand{\optdensity}{D^{*}}

\newcommand{\hw}{\hat{w}}
\newcommand{\zfactor}{Z}

\newcommand{\halpha}{\hat{\alpha}}

\newcommand{\core}{k}

\newcommand{\level}{\text{level}\xspace}

\newcommand{\eps}{\varepsilon}

\newcommand{\gn}{\mathcal{F}}

\newcommand{\kest}{\hat{\core}}

\newcommand{\whp}{whp\xspace}

\newcommand{\prob}{\mathsf{Pr}}
\newcommand{\geom}{\mathsf{Geom}}
\newcommand{\lf}{\psi}
\newcommand{\upexp}{(1+\lf)}
\newcommand{\upexpold}{(2+\lambda)(1+\lf)}

\newcommand{\noise}{\geom(\nparam)}
\newcommand{\hnup}{\widehat{\nup}}
\newcommand{\expect}{\mathbb{E}}

\newcommand{\nparam}{\nparaminside}

\newcommand{\lcur}{r}
\newcommand{\nparaminside}{\eps/(8\log^2n)}
\newcommand{\domain}{\mathcal{D}}

\newcommand{\approxfac}{\frac{8c\log^3n}{\eps}}

\newcommand{\densestsubfactor}{\dualadditivefactor}
\newcommand{\cutoffsubfactor}{\frac{c_2\log^4 n}{\eps}}
\newcommand{\phasesnoise}{\frac{\eps}{6T\log_{(1+\coren)}n}}
\newcommand{\cutoffnoise}{\frac{\eps}{6T(4T+1)\log_{(1+\coren)}n}}
\newcommand{\densecutoffnoise}{\frac{\eps}{3T(4T+1)\log_{(1+\coren)}n}}
\newcommand{\densestphases}{\frac{c_0 \log n}{\coren^3}}
\newcommand{\densestadd}{O\left(\frac{\log^4 n}{\eps}\right)}
\newcommand{\densetime}{O\left((n+m)\log^3 n\right)}

\newcommand{\coreadd}{O\left(\frac{\log^3 n}{\eps}\right)}

\newcommand{\kcoremultfactor}{1-12\coren}
\newcommand{\densestmultconst}{1/12}

\newcommand{\localadjust}{locally adjustable\xspace}
\newcommand{\LA}{Locally Adjustable\xspace}
\newcommand{\dense}{\rho}
\newcommand{\gcount}{s}
\newcommand{\hs}{\hat{s}}
\newcommand{\hatt}{\hat{t}}
\newcommand{\dedge}{X}

\newcommand{\degen}{d}

\newcommand{\stopf}{\mathtt{stop}}
\newcommand{\nodef}{\mathtt{update\mhyphen node \mhyphen state}}
\newcommand{\edgef}{\mathtt{update\mhyphen edge\mhyphen state}}
\newcommand{\outf}{\mathtt{out}}
\newcommand{\neighbf}{\mathtt{adj\mhyphen neighb}}
\newcommand{\adjf}{\mathtt{adj\mhyphen edge}}
\newcommand{\globalstopf}{\mathtt{global\mhyphen stop}}

\newcommand{\globaloutf}{\mathtt{global\mhyphen out}}
\newcommand{\outputs}{\mathcal{O}}
\newcommand{\probfactor}{\frac{1}{n^c}}
\newcommand{\probfactorminusone}{\frac{1}{n^{c-1}}}

\newcommand{\onoise}{\log^3(n)/\eps}

\newcommand{\zbound}{\frac{D^* - \densestsubfactor}{1+2\coren}}
\newcommand{\dualfactor}{(1+2\eta)z + \densestsubfactor}

\newcommand{\cl}{\approxfac}

\newcommand{\prev}{\tilde{p}}
\newcommand{\release}{\textbf{releases}\xspace}

\newcommand{\multfactor}{\phi}

\newcommand{\addfactor}{\zeta}
\newcommand{\order}{D}
\newcommand{\df}{\gs_f}
\newcommand{\gs}{GS}
\newcommand{\interns}{\mathcal{I}}
\newcommand{\hatn}{\hat{n}}
\newcommand{\he}{\hat{e}}

\newcommand{\dty}{unsettled\xspace}
\newcommand{\return}{\textbf{Return}\xspace}
\newcommand{\draw}{\geom(\nparam)}
\newcommand{\downexp}{(1+\lf)}

\newcommand{\const}{\coren}
\newcommand{\lbexp}{\left(\downexp^{g'} - \frac{8c\log^3 n}{\eps} - 1\right) \left(1 -\frac{1}{(2+\lambda)(1+\lf)}\right)}
\newcommand{\lbexpexpect}{\downexp^{g'} \left(1 -\frac{1}{(2+\lambda)(1+\lf)}\right)}
\newcommand{\lbabrv}{J}

\newcommand{\reals}{\mathbb{R}}
\newcommand{\norm}[1]{\left\Vert#1\right\Vert}

\newcommand{\range}{Range}
\newcommand{\rangeout}{\mathcal{Y}}
\newcommand{\adj}{\mathbf{a}}
\newcommand{\coren}{\eta}

\newcommand{\rout}{S}

\newcommand{\integers}{\mathbb{Z}}
\newcommand{\numlevels}{4\log^2 n}
\newcommand{\numgrouplevels}{2\log n}

\newcommand{\lvl}{r}
\newcommand{\mech}{\mathcal{M}}
\newcommand{\nup}{\mathcal{U}}
\newcommand{\dualadditivefactor}{\frac{c\log^3 n}{\eps \coren^3}}
\newcommand{\alg}{\mathcal{A}}

\ifready
\newcommand{\julian}[1]{}
\newcommand{\laxman}[1]{}
\newcommand{\qq}[1]{}
\newcommand{\shangdi}[1]{}
\newcommand{\jessica}[1]{}
\newcommand{\sr}[1]{{}}
\newcommand{\srnote}[1]{{}}
\newcommand{\qqnote}[1]{{}}
\newcommand{\lnote}[1]{{}}
\else
\newcommand{\julian}[1]{{\color{cyan} Julian: #1}}
\newcommand{\laxman}[1]{{\color{brown} Laxman: #1}}
\newcommand{\qq}[1]{{\color{magenta} Quanquan: #1}}
\newcommand{\shangdi}[1]{{\color{orange} Shangdi: #1}}
\newcommand{\jessica}[1]{{\color{purple} Jessica: #1}}
\newcommand{\sr}[1]{{\color{blue} Sofya: #1}}
\newcommand{\srnote}[1]{{\color{blue}\footnote{\color{blue} Sofya: #1}}}
\newcommand{\qqnote}[1]{{\color{magenta}\footnote{\color{magenta} Quanquan: #1}}}
\newcommand{\lnote}[1]{{\color{blue}\footnote{\color{brown} Laxman: #1}}}
\fi

\newcommand{\conffull}[2]{\ifdefined\confversion#1
\else#2\fi
}
\setlength{\textfloatsep}{0.5em}
\setlength{\intextsep}{0.5em}
\addtolength{\parskip}{-0.3mm}
\interfootnotelinepenalty=10000

\title{Differential Privacy from Locally Adjustable Graph Algorithms: $k$-Core
Decomposition, Low Out-Degree Ordering, and Densest Subgraphs}

\date{}

\author{Laxman Dhulipala, Quanquan C. Liu, Sofya Raskhodnikova, \\
Jessica Shi, Julian Shun, Shangdi Yu}

\begin{document}
\sloppy

\renewcommand{\algorithmicrequire}{\textbf{Input:}}
\renewcommand{\algorithmicensure}{\textbf{Output:}}
\algblock{ParFor}{EndParFor}
\algblock{Input}{EndInput}
\algblock{Output}{EndOutput}
\algblock{ReduceAdd}{EndReduceAdd}

\algnewcommand\algorithmicparfor{\textbf{parfor}}
\algnewcommand\algorithmicinput{\textbf{Input:}}
\algnewcommand\algorithmicoutput{\textbf{Output:}}
\algnewcommand\algorithmicreduceadd{\textbf{ReduceAdd}}
\algnewcommand\algorithmicpardo{\textbf{do}}
\algnewcommand\algorithmicendparfor{\textbf{end\ input}}
\algrenewtext{ParFor}[1]{\algorithmicparfor\ #1\ \algorithmicpardo}
\algrenewtext{Input}[1]{\algorithmicinput\ #1}
\algrenewtext{Output}[1]{\algorithmicoutput\ #1}
\algrenewtext{ReduceAdd}[2]{#1 $\leftarrow$ \algorithmicreduceadd(#2)}
\algtext*{EndInput}
\algtext*{EndOutput}
\algtext*{EndIf}
\algtext*{EndFor}
\algtext*{EndWhile}
\algtext*{EndParFor}
\algtext*{EndReduceAdd}

\ifsubmit
    \begin{abstract}
Differentially private algorithms allow large-scale data analytics while preserving user privacy. Designing such algorithms for graph data is gaining importance with the growth of large networks that model various (sensitive) relationships between individuals. While there exists a rich history of important literature in this space, to the best of our knowledge, no results formalize a relationship between certain parallel and distributed graph algorithms and differentially private graph analysis. In this paper, we define  \emph{locally adjustable} graph algorithms and show that algorithms of this type can be transformed into differentially private algorithms. 

Our formalization is motivated by a set of results that we present in the central and local models of differential privacy for a number of problems, including $k$-core decomposition, low out-degree ordering, and densest subgraphs. First, we design an $\varepsilon$-edge differentially private (DP) algorithm that returns a subset of nodes that induce a subgraph of density at least $\frac{D^*}{1+\eta} - O\left(\poly(\log n)/\varepsilon\right),$ where $D^*$ is the density of the densest subgraph in the input graph (for any constant $\coren > 0$). This algorithm achieves a two-fold improvement on the multiplicative approximation factor of the previously best-known private densest subgraph algorithms while maintaining a near-linear runtime.

Then, we present an $\varepsilon$-locally edge differentially private (LEDP) algorithm for $k$-core decompositions. Our LEDP algorithm provides approximates the core numbers (for any constant $\eta > 0$) with $(2+\eta)$ multiplicative and $O\left(\poly\left(\log n\right)/\varepsilon\right)$ additive error. This is the first differentially private algorithm that outputs private $k$-core decomposition statistics. We also modify our algorithm  to return a differentially private low out-degree ordering of the nodes, where orienting the edges from nodes earlier in the ordering to nodes later in the ordering results in out-degree at most $O\left(d + \poly\left(\log n\right)/\varepsilon\right)$ (where $d$ is the degeneracy of the graph). A small modification to the algorithm also yields a $\varepsilon$-LEDP algorithm for $\left(4+\eta, O\left(\poly\left(\log n\right)/\varepsilon\right)\right)$-approximate densest subgraph  (which returns both the set of nodes in the subgraph and its density). Our algorithm uses $O(\log^2 n)$  rounds of communication between the curator and individual nodes. 
\end{abstract}

\fi

\maketitle

\ifsubmit
\else

\fi
\raggedbottom
\thispagestyle{empty}
\newpage
\tableofcontents
\thispagestyle{empty}
\newpage

\setcounter{page}{1}
\section{Introduction}
The \emph{$k$-core decomposition} and %
related objects---densest subgraph and low out-degree ordering---are among the most important and widely used graph statistics in a variety of
\conffull{communities.}{
communities including databases~\cite{Chu20, LiZZ19,ESTW19,BGKV14,MMSS20},
machine learning~\cite{Alvarez2005,Esfandiari2018,Ghaffari2019}, graph
analytics~\cite{Khaouid15, Kabir2017, dhulipala2017julienne,
dhulipala2018theoretically}, graph visualization~\cite{Alvarez2005,carmi2007,yangdefining2015,ZhangZhao10}, theoretical 
computer science~\cite{chan2021,Esfandiari2018,Ghaffari2019,Montresor2013,SCS20}, and other communities~\cite{GBGL20,Luo19,SGJ13}.} The $k$-core decomposition assigns a 
number to each node in a network which captures how well-connected it is to the rest of the network; it is useful in applications
where one wants to find ``influential'' nodes,
or to partition a network based on each node's influence. 
\conffull{Concrete
applications include diffusion protocols in epidemiological studies
\cite{ciaperoni2020,kitsak2010,LiuTang15,malliaros2016},
community detection and computing network centrality
measures (where centers tend to have larger core numbers) \cite{dourisboure09,Healy07,Mitzenmacher2015,WangCao18,ZhangYing17}, and %
network visualization~\cite{Alvarez2005,carmi2007,yangdefining2015,ZhangZhao10}.}{Concrete
applications include diffusion protocols in epidemiological studies
(where nodes with large core numbers tend to be ``super-spreaders'') 
\cite{ciaperoni2020,kitsak2010,LiuTang15,malliaros2016},
community detection and computing network centrality
measures (where centers tend to have larger core numbers) \cite{dourisboure09,Healy07,Mitzenmacher2015,WangCao18,ZhangYing17},%
network visualization and
modeling~\cite{Alvarez2005,carmi2007,yangdefining2015,ZhangZhao10},
protein interactions~\cite{Amin2006,baderautomated2003}, and
clustering (where $k$-core decompositions are often used for preprocessing to achieve
faster performance)
\cite{GiatsidisMTV14,lee2010}.}
Because of these applications, finding fast and scalable algorithms for
exact and approximate $k$-core decompositions is an active research area with many results spanning the past decade (see~\cite{MGPV20} for a survey
\conffull{\unskip).}{ of these results).
and a plethora of results published just
in the past year~\cite{chan2021,GPC21unifying,guo2021multi,linghu_anchored_2021,LZLZT21,LZXX21,Luo21,VAC21,VBK21,WZX22,zhou2021core}.}

However, one increasingly important topic that has been
overlooked thus far in the community is the privacy of the individuals whose data are
used for computing the core numbers. Privacy measures are particularly
important for statistics such as $k$-core numbers, since such statistics output a
value for each individual in the data set, allowing for more effective attacks
that can decipher individual links and also the
information of one or more individuals in the data set. 
Such ominous possibilities call for a formal study of $k$-core decomposition
algorithms that are \emph{privacy preserving}, which is the focus of this paper.

Given an undirected graph $G$ with $n$
nodes and $m$ edges, the $k$-core of the graph is the maximal subgraph $H
\subseteq G$ such that the induced degree of every node in $H$ is at least~$k$. The \emph{$k$-core decomposition} of the graph is a partition of the
nodes of the graph into layers such that %
layer $k$ contains all the nodes that belong
to the $k$-core but not the $(k + 1)$-core. 
We illustrate the kind of attack on a (private) input
data set that can occur provided the (anonymized)
$k$-core decomposition of a social network graph.
In particular, the $k$-core decomposition gives
much more information about the structure of the graph than many other graph statistics.
Consider a social network graph consisting of a 
$k$-clique and a $k$-ary tree. Suppose the attacker does not know the graph
but has access to the exact core numbers of 
individuals in the data set. From the core numbers, the attacker can 
determine the individuals in the clique and the
edges between them. Such an attack is not possible, for example, if instead of the core numbers, the attacker has access to the degree distribution
of the graph. In this case, the attacker cannot differentiate between the clique and the non-leaf 
nodes of the $k$-ary tree.

Consider a graph $G$ that, instead of friendships, represent sensitive
information such as HIV transmissions
\cite{Little14,civ526,Wertheim17} or
cryptocurrency payments~\cite{MB19}.  
Two members of a clique in $G$, users $A$ and $B$,
may not want others to know that they share an edge; however, the $k$-core
decomposition of $G$ reveals that they do indeed share an edge. %
Such an attack emphasizes the need for $k$-core decomposition algorithms that
provide privacy for individuals.

With this goal in mind,  we present novel \emph{differentially 
private} (DP) algorithms
for the $k$-core decomposition and related objects---densest subgraph and 
low out-degree ordering---that match the multiplicative
approximation bounds of the known 
non-private algorithms. Differential privacy~\cite{DMNS06}
is the gold standard for privacy in data analysis. 
We present results in two differential privacy models: \emph{central}~\cite{DMNS06}  
and \emph{local}~\cite{kasiviswanathan2011can}. 
The central model assumes a 
trusted curator that gathers and processes non-private data from users.
In contrast, the local differential privacy model assumes \emph{no} trusted
third-party. Each node represents an individual
device of a user (e.g., phone) which releases privatized data. A third-party can act as an \emph{untrusted curator} for computing
statistics on the released data.  
Such models are particularly important  as individuals become more
wary of central authorities; and it is also a liability for companies to 
keep such sensitive data. 
Local differential privacy is a stronger notion of privacy than 
central differential privacy 
because nobody besides the owner touches any private data. Furthermore, privacy
in the local model implies privacy in the central model. On the other hand, 
local differential privacy is more restrictive for the algorithm designer,
and, for some problems, locally differentially private algorithms must have larger error than DP algorithms. In fact, several
known error lower bounds exhibit gaps between the central and local
models that could be as large as polynomial in the size of the input
(see, e.g., \cite{beimel2008distributed,CSS12,DMNS06}).

\paragraph{Outline}
We provide all definitions and notation for this paper 
in~\cref{sec:prelims}. We summarize our contributions  and provide a
technical overview in~\cref{sec:contributions}. In~\cref{sec:alg}, we
present our locally differentially private \kc decomposition, low out-degree 
ordering, and densest subgraph algorithms. In~\cref{sec:densest}, 
we provide our DP densest subgraph algorithm that achieves
a better multiplicative factor approximation than our 
densest subgraph algorithm in the local model. 
Finally, in~\cref{sec:framework}, we provide
our privacy framework that allows us to
convert a class of algorithms that we call 
\emph{locally adjustable} into differentially
private algorithms. It is open whether one can show that, in general, the utility of 
locally adjustable algorithms does not degrade significantly during the transformation.

\paragraph{Related Works}

DP algorithms have been developed for graph statistics
such as subgraph counts~\cite{KarwaRSY14,BBS13,CZ13,IMC21communication,IMC21locally,10.1007/978-3-642-36594-2_26,Sun19analyzing,Zhang15}, degree
distribution~\cite{5360242,Day16,RS16,ZHANG2021102285}, minimum
spanning tree and clustering~\cite{NRS07,HL18,KS18,NSV16},
spectral properties~\cite{10.1007/978-3-642-37456-2_28,NEURIPS2019_e44e875c}, 
cut problems~\cite{GLM10,NEURIPS2019_e44e875c,EKKL20,KL10,Sealfon16,Stausholm21}, and %
parameter 
estimation~\cite{10.1145/2623330.2623683,Ye2020,Ye2020towards}.

\textit{DP Densest Subgraph.}
The currently best DP algorithms for densest subgraphs are due to  Nguyen and Vullikanti~\cite{NV21}
and Farhadi \etal~\cite{AHS21}. We describe them in detail in~\cref{sec:contributions}.

\textit{Non-Private Algorithms for $k$-Core Decomposition and Related Problems.} 
In the non-DP, fully-dynamic setting, Bhattacharya \etal~\cite{BHNT15},
Henzinger \etal~\cite{HNW20}, and Sawlani and Wang~\cite{sawlani2020near}
provide sequential algorithms that use $\poly(\log n)$ update time
and obtain a $(4+\coren)$-approximate densest subgraph, $O(1)$-approximate
low out-degree ordering, and $(1+\coren)$-approximate
densest subgraph, respectively, for any $\coren > 0$. 
Sun \etal~\cite{SCS20} provide
the first dynamic $(4+\coren)$-approximate
$k$-core decomposition in the sequential model, using a peeling algorithm. 
A similar technique is used by Chan \etal~\cite{chan2021} in
the distributed setting. Ghaffari \etal~\cite{Ghaffari2019} give various algorithms for $(1+\eps)$-approximate
$k$-core decomposition in the massively parallel computation (MPC) model. Finally, Liu \etal~\cite{LSYDS22} formulate 
a parallel, batch-dynamic $(4+\coren)$-approximate $k$-core decomposition algorithm which
we use in our work\footnote{\label{footnote:multiplicative-error} We only consider one-sided multiplicative error. 
Thus, we translate the approximation factors of~\cite{SCS20} and~\cite{LSYDS22}, which give \emph{two-sided} error,
to instead give one-sided error, resulting in an additional factor of $2$ in our statement of their approximation bounds.}.

\section{Preliminaries}\label{sec:prelims}

We provide both DP and LDP algorithms with a focus
on the $k$-core decomposition, low out-degree ordering, and the densest subgraph. 
We define these problems here.
All privacy tools presented in this section are used in both the DP and LDP settings.

\subsection{Graph Definitions}
We use $[n]$ to denote $\{1, \dots, n\}$.
We consider undirected graphs $G = (V, E)$ with $n = |V|$ nodes and $m = |E|$ edges. For ease of indexing, we set $V=[n].$
The set of neighbors of a node $i \in [n]$ is denoted $N(i)$, and the degree of node $i$
is denoted $\deg(i)$. Our algorithms take an input graph $G$ and output an 
approximate \defn{core number} for each node in the graph 
(\cref{def:approx-core number}),
an approximate \defn{densest subgraph} (\cref{def:densest-subgraph}), 
and an approximate \defn{low out-degree ordering} (\cref{def:low-outdegree}).

\begin{definition}[$(\multfactor, \addfactor)$-Approximate Core Number]\label{def:approx-core number}
    The \emph{$k$-core} of a graph $G = (V, E)$
    is a maximal subgraph $H$ of $G$ such
    that the induced degree of every node in $H$ is at least $k$. 
    A node $v \in V$ has \defn{core number} $\kappa$ if $v$ is
    part of the $\kappa$-core but not the $(\kappa+1)$-core. Let $\core(v)$ be the core number of $v$ and 
    $\kest(v)$ be an approximation of the core number of $v$, and let $\multfactor \geq 1, \addfactor \geq 0$.
    The core estimate $\kest(v)$ is a \defn{$(\multfactor,
    \addfactor)$-approximate core number} of $v$ if
    $\core(v) - \addfactor \leq \kest(v) \leq \multfactor
    \cdot \core(v) + \addfactor$.
\end{definition}

Whereas an algorithm that outputs the \emph{exact} $k$-core decomposition does not satisfy the definition of DP (or LDP),
 we obtain an LDP algorithm for
\emph{approximate} $k$-core decomposition which gives $(2+\coren, O(\log^3 n/\eps))$-approximate
core numbers for any constant $\coren > 0$.
We define the related concept of
an \defn{approximate low out-degree ordering} based on the definition of \defn{degeneracy}.

\begin{definition}[Degeneracy]\label{def:degeneracy}
An undirected graph $G = (V, E)$ is $\degen$-degenerate if every induced subgraph of $G$ has a node with 
degree at most $\degen$. The \emph{degeneracy} of $G$ is the smallest value of $\degen$ for which $G$ is $\degen$-degenerate.
\end{definition}

It is well known that degeneracy $\degen = \max_{v \in V}\{\core(v)\}$.

\begin{definition}[$(\multfactor, \addfactor)$-Approximate Low Outdegree Ordering]\label{def:low-outdegree}
    Let $\order = [v_1, v_2, \dots, v_n]$ be a total ordering of nodes in a
    graph $G = (V, E)$. %
    The ordering $\order$ is an \defn{$(\multfactor, \addfactor)$-approximate
    low out-degree ordering} if
    orienting edges from earlier nodes to later nodes in $D$ 
    produces outdegree at most
    $\multfactor \cdot \degen + \addfactor$.
\end{definition}

We denote the \defn{density} of a graph $G = (V, E)$ by $\dense(G) := \frac{|E|}{|V|}$.
The \defn{densest subgraph} problem is defined as follows.

\begin{definition}[Densest Subgraph]\label{def:densest-subgraph}
    The densest subgraph $S_{\max}$ of a graph $G = (V, E)$ is a 
    maximal induced subgraph with maximum density. %
\end{definition}

When defining an approximate densest subgraph, we 
remove the condition on maximality of the subgraph and 
require the density to be within the specified approximation 
factors.

\begin{definition}[$(\multfactor, \addfactor)$-Approximate Densest Subgraph]\label{def:approx-densest-subgraph}
    Let the density of the densest subgraph in $G$ be $D^*$ and $\multfactor \geq 1, \addfactor \geq 0$.
    A \defn{$(\multfactor, \addfactor)$-approximate densest subgraph} $S$
    has density $\dense\left(S\right)$ at least $\frac{D^*}{\multfactor} - \addfactor$.
\end{definition}

\subsection{Differential Privacy}

We consider two models of differential privacy: central~\cite{DMNS06} and local~\cite{kasiviswanathan2011can}.
In the central model, there is a {\em trusted} curator that has direct access to the input, whereas in the local model, the curator is not trusted and gets access only to outputs of private algorithms, called {\em randomizers}. Both notions of privacy require a definition of \emph{neighboring} inputs. We
focus on \defn{edge-neighboring} graphs, defined next.

\begin{definition}[Edge-Neighboring~\cite{NRS07}]\label{def:edge-adjacent-graphs}
Graphs $G_1 = (V_1, E_1)$ and $G_2 = (V_2, E_2)$ are
edge-neighboring if they differ in one edge, namely,
if $V_1 = V_2$ and the size of the symmetric difference of $E_1$ and $E_2$ is 1. 
\end{definition}

\begin{definition}[$\eps$-Edge Differential Privacy~\cite{NRS07}]\label{def:dp}
    Algorithm $\alg(G)$, that takes as input a graph $G$ and outputs 
    some value in 
    \conffull{range $R$,}{
    $\range(\alg)$\footnote{$\range(\cdot)$ denotes the 
    set of all possible outputs of a function.},} is \textbf{$\eps$-edge differentially
    private} ($\eps$-edge DP) if for all \conffull{$\rout \subseteq R$}{$\rout \subseteq \range(\alg)$} 
    and all edge-neighboring graphs $G$ and $G'$, 
    \begin{align*}
        \frac{1}{e^{\eps}} \leq \frac{\prob[\alg(G') \in \rout]}{\prob[\alg(G) \in \rout]} \leq e^{\eps}.
    \end{align*}
\end{definition}

The primary complexity measure for $\eps$-edge DP algorithms is the \emph{running time}.

\subsection{Local Edge Differential Privacy (LEDP)}
The LEDP model is an extension of the local differential privacy (LDP) model %
originally introduced
by~\cite{kasiviswanathan2011can}.
Below we use the definitions
given in~\cite{ELRS22} which are based on definitions
in~\cite{joseph2019role} for non-graph data. %
The LEDP was also defined in~\cite{IMC21communication,IMC21locally} for the special case of one round, and
\cite{IMC21communication} additionally provides a proposition on the sequential composition
of their LEDP algorithms.

Our LEDP algorithms are described in terms of an (untrusted) \emph{curator}, who does not have access to the graph's edges, and individual nodes. During each round, the curator first queries
a set of nodes for information.
Individual nodes, which have access only to their own (private) adjacency lists, 
then \emph{release} information via {\em local randomizers,} defined next. 

\begin{definition}[Local Randomizer (LR)]\label{def:local-randomizer}
    An \defn{$\eps$-local randomizer} $R: \adj \rightarrow \rangeout$ for node $v$ is an $\eps$-edge DP 
    algorithm that takes as input the set of its neighbors $N(v)$, represented by
    an adjacency list $\adj = (b_1, \dots, b_{|N(v)|})$. In other words, $$\frac{1}{e^{\eps}} \leq \frac{\prob\left[R(\adj') \in Y\right]}{\prob\left[R(\adj) \in Y\right]} \leq e^{\eps} $$ for all %
    $\adj$ and $\adj'$  where the symmetric difference
    is $1$ and all sets of outputs $Y \subseteq \rangeout$. The probability is taken over the
    random coins of $R$ (but \emph{not} over the choice of the input). 
\end{definition}

The information released via local randomizers is public to all nodes and the curator. 
The curator performs some computation on the released information and makes the result public. The overall computation is formalized via the notion of the transcript.

\begin{definition}[LEDP]\label{def:LEDP}
A \defn{transcript} $\pi$ is a vector consisting of 4-tuples $(S^t_U, S^t_R, S^t_\eps, S^t_Y)$ -- encoding the set of parties chosen, set of randomizers assigned, set of randomizer privacy parameters, and set of randomized outputs produced -- for each round $t$. Let $S_\pi$ be the collection of all transcripts and $S_R$ be the collection of all randomizers. Let $\perp$ denote a special character indicating that the computation halts.
A \defn{protocol} is an algorithm $\alg: S_\pi \to 
(2^{[n]} \times 2^{S_R} \times 2^{\mathbb{R}^{\geq 0}} \times 2^{\mathbb{R}^{\geq 0}})\; \cup \{\perp\}$
mapping transcripts to sets of parties, randomizers, and randomizer privacy parameters. The length of the transcript, as indexed by $t$, is its round complexity.

Given $\eps\geq 0$,  a randomized protocol $\alg$ on (distributed) graph $G$ is \defn{$\eps$-locally edge differentially private ($\eps$-LEDP)} if the algorithm that outputs the entire transcript generated by $\alg$ is $\eps$-edge differentially private on graph $G.$
 If $t=1$, that is, if there is only one round, then $\alg$ is called \defn{non-interactive}. Otherwise, $\alg$ is called \defn{interactive}.
\end{definition}

Since LEDP algorithms operate in the distributed setting,
the complexity measures that we care about for our LEDP algorithms is the
\defn{number of rounds} of communication and the \defn{node communication complexity}, or the maximum size of 
each message sent from a user to the curator. We assume each user can see the public information
for each round on a public ``bulletin board''. \conffull{Additional differential privacy tools
and definitions can be found in the full version of our paper.}{}

\conffull{}{
\subsection{Differential Privacy Tools}

We define these techniques for DP but the same cited sources also prove they hold under LDP.
Also for simplicity, we give these techniques in graph theoretic terms but they also hold 
for broader applications (e.g., for databases, etc.).
Our algorithms use the notion of \emph{sensitivity}, defined as follows.

\begin{definition}[Global Sensitivity~\cite{DMNS06}]\label{def:global-sensitivity}
    For a function $f: \domain \rightarrow \reals^d$, 
    where $\domain$ is the domain of $f$ and 
    $d$ is the dimension of the output, the 
    \defn{$\ell_1$-{sensitivity}} of 
    $f$ is $\df = \max_{G, G'} \norm{f(G) 
    - f(G')}_1$ for all pairs of $(G, G')$ of edge-neighboring graphs.
\end{definition}

Our algorithms use the \emph{symmetric geometric distribution} 
given in previous papers~\cite{BV18,CSS11,DMNS06,DNPR10,AHS21,SCRCS11}. 
One can think of the symmetric
geometric distribution as the ``discrete Laplace distribution.'' The 
advantage of such a distribution is that we avoid the rounding 
issues of continuous distributions and this distribution is 
sufficient for our needs.

\begin{definition}[Symmetric Geometric Distribution~\cite{BV18,SCRCS11}]\label{def:geom}
    The \defn{symmetric geometric distribution}, denoted $\geom(b)$, 
    with input parameter $b \in (0, 1)$, takes
    integer values $i$ where the probability mass function at $i$ is
    $\frac{e^b - 1}{e^b + 1} \cdot e^{-|i| \cdot b}$.
\end{definition}

We denote a random variable drawn from the distribution as $X \sim \geom(b)$.

\defn{With high probability} or \defn{\whp} is used in this paper to mean with probability at least $1 - \frac{1}{n^c}$ for any constant
$c \geq 1$. As with all private algorithms, privacy
is \emph{always} guaranteed and the approximation factors are guaranteed \whp. 
We can upper bound the symmetric geometric noise \whp using the following.

\begin{lemma}\label{lem:noise-whp-bound}
With probability at least $1 - \frac{1}{n^c}$ for any constant $c \geq 1$, we can upper 
bound $\dedge \sim \geom\left(x\right)$ by $|\dedge| \leq \frac{c \ln n}{x}$. 
\end{lemma}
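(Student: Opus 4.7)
The plan is to bound the tail of the symmetric geometric distribution $\geom(x)$ directly from its probability mass function. First, using Definition~2.7, I would compute the one-sided tail as a geometric series: for any integer $k \geq 1$,
\[
\pp{\dedge \geq k} \;=\; \frac{e^x - 1}{e^x + 1} \sum_{i=k}^{\infty} e^{-ix} \;=\; \frac{e^x - 1}{e^x + 1} \cdot \frac{e^{-kx}}{1 - e^{-x}} \;=\; \frac{e^{-(k-1)x}}{e^x + 1},
\]
where the last equality uses $1 - e^{-x} = (e^x - 1)/e^x$. By the symmetry of $\geom(x)$ about $0$, doubling yields $\pp{|\dedge| \geq k} = \frac{2\, e^{-(k-1)x}}{e^x + 1} \leq e^{-(k-1)x}$, where the final inequality uses $e^x + 1 \geq 2$ for all $x \geq 0$.

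Second, I would invert the bound to solve for the threshold. Any integer $k$ with $(k-1)x \geq c \ln n$ forces $e^{-(k-1)x} \leq n^{-c}$, and choosing $k$ to be the smallest integer at least $c \ln n / x + 1$ satisfies this condition. Since $|\dedge|$ is integer-valued, the event $\{|\dedge| > c \ln n / x\}$ is contained in $\{|\dedge| \geq k\}$ for this $k$, so $\pp{|\dedge| > c \ln n / x} \leq n^{-c}$. Equivalently, $|\dedge| \leq c \ln n / x$ with probability at least $1 - 1/n^c$, where any $O(1)$ ceiling slack is absorbed into the arbitrariness of the constant $c$ in the ``\whp'' convention introduced just before the lemma.

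This is a textbook Chernoff-style tail estimate for the discrete Laplace distribution, so I do not anticipate any substantive obstacle; the only mild care needed is the off-by-one/ceiling handling when $c \ln n / x$ is not an integer, which is routine and does not affect the stated asymptotic bound.
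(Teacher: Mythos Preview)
Your proposal is correct and takes essentially the same approach as the paper: compute the one-sided tail of $\geom(x)$ as a geometric series, double by symmetry to get $\pp{|\dedge| > c\ln n /x} \leq \frac{2e^{-c\ln n}}{e^x+1} \leq n^{-c}$. Your treatment of the integer/ceiling issue is in fact slightly more careful than the paper's own proof, which simply starts the sum at the (possibly non-integer) index $\frac{c\ln n}{x}+1$.
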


\begin{proof}
The probability that $|\dedge| > \frac{c\ln n}{x}$ is given by 
$2\sum_{i = \frac{c\ln n}{x} + 1}^{\infty} \frac{e^x - 1}{e^x + 1} \cdot e^{-|i| \cdot x} = 
2 \cdot \frac{\exp\left(-x \cdot \frac{c\ln n}{x}\right)}{e^{x} + 1} = 
2 \cdot \frac{\exp\left(-c\ln n\right)}{e^{x} + 1} \leq \frac{1}{n^c}$. Thus, the 
probability that $|\dedge| \leq \frac{c \ln n}{x}$ is at least $1 - \frac{1}{n^c}$.
\end{proof}

The \defn{geometric mechanism}
uses the symmetric geometric distribution. 

\begin{definition}[Geometric Mechanism~\cite{BV18,CSS11,DMNS06,DNPR10}]\label{def:sgd-mech}
    Given any function $f: \mathcal{D} \rightarrow \integers^d$, where 
    $\mathcal{D}$ is the domain of $f$ and $\df$ is the $\ell_1$-sensitivity 
    of $f$, the geometric mechanism is defined as
    $\mech_G(x, f(\cdot), \eps) = f(x) + (Y_1, \dots, Y_d)$,
    where $Y_i\sim \geom(\eps/\df)$ are i.i.d.\ random variables
    drawn from $\geom(\eps/\df)$ and $x$ is
    a data set.
\end{definition}

\begin{lemma}[Privacy of the Geometric
    Mechanism~\cite{BV18,CSS11,DMNS06,DNPR10}]\label{lem:sgd-private}
    The geometric mechanism is $\eps$-edge DP. %
\end{lemma}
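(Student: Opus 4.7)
The plan is to prove the lemma by the standard pointwise-ratio argument for noise-addition mechanisms, specialized to the geometric distribution. Fix two edge-neighboring graphs $G$ and $G'$, and let $y = (y_1, \dots, y_d) \in \integers^d$ be any possible output. Since the noise coordinates $Y_i \sim \geom(\eps/\df)$ are independent, the probability that $\mech_G$ outputs exactly $y$ factors across coordinates as $\prod_{i=1}^d \Pr[Y_i = y_i - f(G)_i]$, and similarly for $G'$. Using the pmf formula from \cref{def:geom}, I would form the ratio $\Pr[\mech_G(G) = y] / \Pr[\mech_G(G') = y]$; the normalization constants $(e^{\eps/\df}-1)/(e^{\eps/\df}+1)$ cancel coordinate-wise, leaving a pure exponential.

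Next, I would simplify the exponent. For each coordinate $i$, the ratio contributes a factor of $\exp\!\left(\tfrac{\eps}{\df}\bigl(|y_i - f(G')_i| - |y_i - f(G)_i|\bigr)\right)$. The reverse triangle inequality gives the bound $|y_i - f(G')_i| - |y_i - f(G)_i| \leq |f(G)_i - f(G')_i|$, and summing over $i$ yields an exponent of at most $\tfrac{\eps}{\df}\,\|f(G) - f(G')\|_1$. By the definition of $\ell_1$-sensitivity (\cref{def:global-sensitivity}), $\|f(G) - f(G')\|_1 \leq \df$, so the ratio is at most $e^{\eps}$. Swapping the roles of $G$ and $G'$ gives the matching lower bound $e^{-\eps}$, establishing the pointwise inequality required in \cref{def:dp}.

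To extend from point outputs to arbitrary measurable sets $\rout \subseteq \integers^d$, I would simply sum the pointwise bound: $\Pr[\mech_G(G) \in \rout] = \sum_{y \in \rout} \Pr[\mech_G(G) = y] \leq e^{\eps} \sum_{y \in \rout} \Pr[\mech_G(G') = y] = e^{\eps}\Pr[\mech_G(G') \in \rout]$, and likewise for the other direction. This completes the verification of $\eps$-edge DP.

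I do not expect a real obstacle here: the only subtlety is remembering that the bound must hold symmetrically (both $e^{\eps}$ and $e^{-\eps}$ sides of \cref{def:dp}), which is immediate since the argument is symmetric in $G$ and $G'$. The result is essentially a direct specialization of the standard Laplace-mechanism privacy proof to the discrete geometric distribution, exploiting the fact that the pmf's dependence on the input appears only through a single absolute value in the exponent.
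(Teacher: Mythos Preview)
Your argument is correct and is precisely the standard proof of privacy for the geometric (discrete Laplace) mechanism. Note, however, that the paper does not actually give a proof of this lemma: it is stated as a known result with citations to \cite{BV18,CSS11,DMNS06,DNPR10}, so there is nothing to compare against beyond observing that your pointwise-ratio argument via the reverse triangle inequality is exactly the textbook derivation those references contain.
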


The \emph{composition theorem} guarantees privacy for 
the \emph{composition} of multiple
algorithms with privacy guarantees of their own. %
In particular, this theorem covers the use case where multiple DP
algorithms are used on the \emph{same} dataset. We also use a theorem for 
group differential privacy.

\begin{theorem}[Composition Theorem~\cite{DMNS06,DL09,DRV10}]\label{thm:composition}
    A sequence of DP algorithms, $(\alg_1, \dots, \alg_k)$, with privacy parameters $(\eps_1, \dots, \eps_k)$ form at worst an $\left(\eps_1 + \cdots + \eps_k\right)$-DP algorithm under \emph{adaptive composition} (where the adversary can adaptively select algorithms after
    seeing the output of previous algorithms).%
\end{theorem}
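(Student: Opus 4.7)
The plan is to reduce the claim to a pointwise ratio bound via the chain rule of probability, then lift it to arbitrary output sets. I would fix any two edge-neighboring graphs $G$ and $G'$, let $\alg$ denote the adaptive composition of $\alg_1, \ldots, \alg_k$, and fix any output sequence $\vec{o} = (o_1, \ldots, o_k)$. Using independence of the fresh random coins used in each round, I would write $\prob[\alg(G) = \vec{o}] = \prod_{i=1}^{k} \prob[\alg_i(G) = o_i \mid o_1, \ldots, o_{i-1}]$, and similarly for $G'$, so that the joint ratio telescopes into a product of conditional ratios.

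The key step is to argue that, after conditioning on the prefix $(o_1, \ldots, o_{i-1})$, the algorithm executed in round $i$ is a \emph{fixed} $\eps_i$-edge DP mechanism on the input graph. Under adaptive composition the identity and internal parameters of $\alg_i$ may depend on the transcript so far, but once that transcript is fixed, $\alg_i$ is determined and its privacy guarantee applies to its own fresh coins. Hence each conditional ratio lies in $[e^{-\eps_i}, e^{\eps_i}]$, and multiplying across $i$ yields the pointwise bound
\[
\frac{\prob[\alg(G') = \vec{o}]}{\prob[\alg(G) = \vec{o}]} \in \bigl[e^{-\sum_i \eps_i},\; e^{\sum_i \eps_i}\bigr].
\]
To extend to any measurable output set $\rout$, I would integrate (or sum in the discrete case) the pointwise inequality against the measure induced by $G$; additivity gives $\prob[\alg(G') \in \rout] \leq e^{\sum_i \eps_i} \cdot \prob[\alg(G) \in \rout]$, and the lower bound follows by swapping $G$ and $G'$.

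The main obstacle I anticipate is formalizing adaptive composition cleanly enough that the per-round conditioning argument is rigorous. Specifically, I need a probability space in which the random coins of each $\alg_i$ are mutually independent of those of all other rounds and of the input, so that conditioning on the prefix $(o_1, \ldots, o_{i-1})$ does not perturb the distribution of $\alg_i$'s coins and, crucially, does not carry extra information about $G$ versus $G'$ beyond what the prefix already encodes. Once that model is in place, applying the $\eps_i$-DP guarantee of $\alg_i$ to the conditioned mechanism is immediate. A minor technicality is handling continuous output distributions in the lifting step, which I would address by working with the Radon--Nikodym density of $\alg_i(G')$ with respect to $\alg_i(G)$; for the discrete outputs that appear in our applications this reduces to an elementary summation.
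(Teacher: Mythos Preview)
Your proposal is correct and follows essentially the same route as the paper's proof: factor the joint output probability via the chain rule, bound each conditional ratio by $e^{\eps_i}$ using the per-round privacy guarantee after fixing the transcript prefix, and multiply. The paper's version is actually terser than yours (it works only pointwise in the discrete case and omits the lift to arbitrary output sets), so your added discussion of the coin-independence model and the extension to sets is a welcome refinement rather than a departure.
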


\begin{lemma}[Group Privacy~\cite{DMNS06}]\label{thm:group-dp} 
Let $k\in \mathbb{N}$. Every $\eps$-differentially private algorithm $\alg$ is $\left(k\eps\right)$-differentially 
private for groups of size $k$. That is, for all data sets $X, X'$ such that $\|X - X' \|_0 \leq k$ and all subsets $Y \subseteq \mathcal{Y}$,
\begin{equation*}
    \frac{1}{e^{k\eps}} \leq \frac{\Pr[\alg(X') \in Y]}{\Pr[\alg(X) \in Y]} \leq e^{k\eps}.
\end{equation*}
\end{lemma}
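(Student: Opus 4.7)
The plan is to prove the lemma by a telescoping/hybrid argument that interpolates between $X$ and $X'$ through a sequence of datasets, each pair of consecutive datasets differing in exactly one element, and then to apply the $\eps$-DP guarantee along each step of this sequence.

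First I would exploit the assumption $\|X - X'\|_0 \le k$ to construct a sequence $X = X_0, X_1, \ldots, X_j = X'$ with $j \le k$, where for each $i \in \{1,\ldots,j\}$ the datasets $X_{i-1}$ and $X_i$ are neighboring in the usual sense (they differ in exactly one element). Intuitively, I walk from $X$ to $X'$ one coordinate change at a time: at step $i$, I replace one coordinate of $X_{i-1}$ with the corresponding coordinate of $X'$, so that after $j$ steps all discrepancies are fixed. Since each intermediate dataset is a valid input, and each consecutive pair is neighboring, the $\eps$-DP definition of $\alg$ applies to each step.

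Next, for any measurable output set $Y \subseteq \mathcal{Y}$, I would apply \cref{def:dp} (in its general form, with $\Pr$ in place of the graph-specific notation) to obtain $\Pr[\alg(X_{i-1}) \in Y] \le e^{\eps} \Pr[\alg(X_i) \in Y]$ for each $i=1,\ldots,j$, and then chain these inequalities to conclude
\begin{equation*}
\Pr[\alg(X) \in Y] \le e^{j\eps} \Pr[\alg(X') \in Y] \le e^{k\eps} \Pr[\alg(X') \in Y],
\end{equation*}
using $j \le k$. The reverse inequality follows symmetrically by reversing the roles of $X$ and $X'$ in the same hybrid argument, and together these yield the claimed two-sided bound on the ratio $\Pr[\alg(X') \in Y]/\Pr[\alg(X) \in Y]$.

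I expect no serious obstacle here: the only subtlety is being careful that the hybrid sequence actually has neighboring consecutive datasets, which hinges on the Hamming-type metric $\|\cdot\|_0$ allowing us to fix discrepancies one at a time. One minor technical point is that if $\Pr[\alg(X') \in Y] = 0$ or $\Pr[\alg(X) \in Y] = 0$, the ratio is interpreted in the usual DP convention (the multiplicative bound is equivalent to $\Pr[\alg(X) \in Y] \le e^{k\eps}\Pr[\alg(X') \in Y]$, which is well-defined for all probabilities), so the chained inequality form sidesteps any division-by-zero concerns.
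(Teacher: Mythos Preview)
The paper does not actually prove this lemma; it is stated as a standard cited result from \cite{DMNS06} without proof. Your hybrid/telescoping argument is correct and is exactly the standard proof of group privacy, so there is nothing to compare against and no gap to flag.
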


For completeness, we provide a proof of~\cref{thm:composition}
in~\cref{app:adaptive}.
Finally, the post-processing theorem states that the result of post-processing on the output
of an $\eps$-edge DP algorithm is $\eps$-edge DP.

\begin{theorem}[Post-Processing~\cite{DMNS06,BS16}]\label{thm:post-processing}
Let $\mech$ be an $\eps$-edge DP mechanism and $h$ be an arbitrary (randomized)
mapping from $\range(\mech)$ to an arbitrary
set. The algorithm $h \circ \mech$ is $\eps$-edge DP.\footnote{$\circ$ is notation 
for applying $h$ on the outputs
of $\mech$.}
\end{theorem}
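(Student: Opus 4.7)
The plan is to reduce to the deterministic case first, then bootstrap to randomized $h$ by conditioning on $h$'s internal coins. Fix any two edge-neighboring graphs $G, G'$ and any output set $Y$ in the codomain of $h$. The goal is to show $\pp{h(\mech(G)) \in Y} \leq e^{\eps} \cdot \pp{h(\mech(G')) \in Y}$, with the symmetric inequality following by swapping the roles of $G$ and $G'$.

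First I would treat the deterministic case. Let $Z := h^{-1}(Y) \subseteq \range(\mech)$. Since $h$ is deterministic, the events $\{h(\mech(G)) \in Y\}$ and $\{\mech(G) \in Z\}$ coincide (and likewise for $G'$), so the two probabilities are equal. Applying the $\eps$-edge DP guarantee of $\mech$ from \cref{def:dp} to the subset $Z$ then yields $\pp{\mech(G) \in Z} \leq e^{\eps} \cdot \pp{\mech(G') \in Z}$, which rewrites as exactly the desired bound for $h \circ \mech$. The lower bound $\tfrac{1}{e^{\eps}}$ follows identically by exchanging $G$ and $G'$.

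For randomized $h$, I would represent $h$ by a family $\{h_r\}_r$ of deterministic maps indexed by a random string $r$ drawn independently of the input graph and of $\mech$'s internal coins. For each fixed $r$, the deterministic argument above gives $\pp{h_r(\mech(G)) \in Y} \leq e^{\eps} \cdot \pp{h_r(\mech(G')) \in Y}$. Integrating over the distribution of $r$ preserves this inequality because $e^{\eps}$ is a constant that pulls outside the expectation and the two input-dependent conditional probabilities are affinely aggregated with the same weights on both sides. The only genuine subtlety, which I would state explicitly, is that $r$ must be independent of $\mech$'s randomness (otherwise the factorization of the joint distribution fails); once this standard assumption is made, the result follows directly and no further computation is needed.
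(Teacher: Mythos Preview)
Your proof is correct and is exactly the standard argument for post-processing invariance. Note, however, that the paper does not actually supply its own proof of this theorem; it is stated as a cited result from \cite{DMNS06,BS16} and used as a black box throughout. So there is nothing in the paper to compare against, but what you wrote matches the textbook proof in those references.
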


}

\conffull{}{
\subsection{Notations}

All notations used in our paper are included in~\cref{tab:params}; notation used in a specific section
are labeled with that section.

\begin{table*}[t!]
    \centering
    \footnotesize
    \begin{tabular}{c l}
    \toprule
         Notation & Meaning \\
         \midrule
         $G = ([n], E)$ & initial input graph  \\
         $\eps > 0$ & privacy parameter\\
         $\coren > 0$ & approximation parameter\\
         $i$ & each node $i$ has a unique index $i \in [n]$\\
         $N(i)$ & $i$'s set of neighbors\\
         $[x]_a^b$ for $x,a,b\in\mathbb{R},$ where $a<b$ & $a$ if $x<a$; $x$ if $x\in[a,b]$; and $b$ if $x>b$\\
         $G[V']$ & induced subgraph by nodes in $V'$\\
         $\dense(G[V'])$ & density of induced subgraph by nodes in $V'$\\
         $\geom(b)$ & symmetric geometric distribution (\cref{def:geom}) with parameter $b$\\
         $[n]$ & integers in $\{1, \dots, n\}$\\
         \midrule
         \multicolumn{2}{c}{Notation in~\cref{sec:densest}}\\
         \midrule
         $\ell(e)$ & load on edge $e$\\
         $\hat{\alpha}_{eu}$ & new load added to edge $e$ from node $u$\\
         $T$ & number of phases\\
         \midrule
         \multicolumn{2}{c}{Notation in~\cref{sec:alg}}\\
         \midrule
         $\lambda \in (0, 1)$ & constant parameter used in upper bounds on induced degree \\
         $\lf \in (0, 1)$ & constant parameter used to determine number of levels \\
         $L_{r}$ & array of node levels publicized by curator for round $r$ \\
         $C$ & array containing core number estimates of each node\\
         $\gn(r)$ & function that outputs $\floor{r/(2\log_{(1+\lf)} n)}$\\
         $\level(i)$ & the level of node $i$\\
         $\delta(G[V'])$ & sum of degrees in induced subgraph of $V'$ divided by $|V'|$, i.e. $\frac{\sum_{v \in V'} \deg_{G[V']}(v)}{|V'|}$. \\
         $\nup_i(\lcur)$ & number of neighbors of node $i$ at level $\lcur$ or higher\\
         \bottomrule
    \end{tabular}
    \caption{Notation} 
    \label{tab:params}
\end{table*}
}
\section{Our Contributions and Technical Overview}\label{sec:contributions}
Our main contributions in this paper are
our locally private approximation
algorithms for $k$-core decomposition, low out-degree ordering, and densest subgraphs that achieve bicriteria approximations where the multiplicative factors \emph{exactly match}
the multiplicative approximations of the original non-private algorithms, and they only incur additional small $\frac{\poly(\log n)}{\eps}$
additive error. Our key observation is that the
release of 
noisy \emph{levels} in the
recent \emph{level data structures} of
\cite{BHNT15,HNW20,LSYDS22}
used for these problems allows for privacy
at the cost of only a $\poly(\log n)$ increase in the 
additive error.

We first give $\eps$-LEDP algorithms using 
a number of recent non-private 
algorithms for static and dynamic orientation and 
\kc decomposition algorithms~\cite{BHNT15,HNW20,chan2021,LSYDS22}.
Our algorithms are the first $\eps$-LEDP algorithms for approximate \kc decomposition, low out-degree orientation and densest subgraph. 
For the approximate \kc decomposition problem, we must return an approximate core number for \emph{every} node in the graph.
Even though a single edge addition can cause the \emph{exact} core number of every node to change, 
we are able to give a $\left(2+\coren, \coreadd\right)$-approximate $\eps$-LEDP algorithm.%

One of the challenges in adapting 
known techniques to obtain
LEDP algorithms for the $k$-core decomposition is that the vector of $k$-core numbers has high sensitivity.
Specifically, its global sensitivity is
$n$, as \emph{one} edge update can cause the $k$-core number of \emph{every}
node to increase or decrease by $1$. E.g., all nodes in a cycle have core number $2$, and the deletion of any edge decreases
the core number of \emph{every} node by $1$.
\conffull{}{\footnote{More details on these challenges can be found in~\cref{app:challenges}.}}

The key to our results is to instead consider a function that returns the 
\emph{induced degree} of each node in a special
subgraph of the input graph. The sensitivity of this function is $2$, since one additional edge can increase the degree of at most two nodes, each
by $1$. We bound the number of times the curator queries the value of this function by $O(\log^2 n)$ in the worst case, thus eliminating
the need for the sparse vector technique (SVT) used in previous works~\cite{AHS21,FHO21}. 

Our algorithm runs in $O(\log^2n)$ rounds where each node sends messages
with $O(1)$ bits to the curator. 
Our algorithm is based on a level data structure
investigated by a number of works~\cite{BHNT15,HNW20,LSYDS22}.
In these data structures, nodes are partitioned into levels. The key insight that allowed us to 
achieve privacy in the local model is that each node in our algorithm only requires knowledge of the levels 
their neighbors are on. Thus, nodes can publish a \emph{noisy level} for each phase. Using the noisy level
in multiple phases directly leads to core number estimates. Our algorithm matches the multiplicative approximation
factor of the best known algorithm for this problem by Chan \etal~\cite{chan2021}. We also give a version of our algorithm that uses 
only $O(\log n)$ rounds (with messages of size $O(\log n)$ sent to the 
curator). %
We present this algorithm and its analysis in~\cref{sec:alg}.

\begin{theorem}[$\eps$-LEDP $k$-Core Decomposition]\label{thm:ldp-static}
    There exists an $O\left(\log n\right)$ round
    $\eps$-LEDP
    algorithm that, given a constant $\coren > 0$, returns $\left(2+\coren, \coreadd\right)$-approximate core numbers with high probability.
\end{theorem}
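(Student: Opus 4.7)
The plan is to build on the level-based $k$-core data structures of \cite{BHNT15,HNW20,LSYDS22}, where nodes are assigned to levels $0,1,\dots,O(\log^2 n)$ such that the final level of a node, scaled by an appropriate factor, yields a $(2+\coren)$-approximation to its true core number. The crucial observation for privacy is that once all current levels are public on the curator's bulletin board, node $v$'s decision to move up depends only on $\nup_v(\lcur) = |\{u \in N(v) : \level(u) \geq \lcur\}|$, a count whose $\ell_1$-sensitivity is $1$ with respect to edge-neighboring graphs from $v$'s local randomizer viewpoint, since the only edge change that affects $v$'s own adjacency list alters $\nup_v(\lcur)$ by at most $1$. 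Hence each node can act as a local randomizer that releases the noisy value $\nup_v(\lcur) + X$ with $X \sim \geom(\eps')$ for a suitable $\eps'$, and then commits to a level move based on post-processing this noisy value.

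First I would organize the algorithm in $T = O(\log^2 n)$ conceptual phases, each consisting of a curator broadcast, a batch of node-level randomizer invocations, and updated level assignments published back to the bulletin board. In each phase, node $v$ at level $\ell$ compares its noisy $\nup_v(\ell)$ against the exponential threshold $\upexp^{\gn(\ell)}$ used in \cite{LSYDS22}; if the noisy count exceeds the threshold (widened by a small additive slack to absorb the geometric noise), $v$ commits to moving up. To achieve $O(\log n)$ rounds of communication rather than $O(\log^2 n)$, I would batch together the $O(\log n)$ phases that share the same value of $\gn(\ell)$ into a single communication round, where each node sends a length-$O(\log n)$ summary describing its level trajectory within the group.

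For privacy, I would apply sequential composition (\cref{thm:composition}): allocating budget $\eps' = \eps/T$ to each phase makes the overall transcript $\eps$-LEDP, because each per-phase local randomizer is $\eps'$-edge DP via the Geometric Mechanism (\cref{lem:sgd-private}) with sensitivity $1$, and the level-move decisions derived from the noisy counts inherit the guarantee by \cref{thm:post-processing}. Because the curator operates only on already-privatized randomizer outputs and on the public level assignments, no additional privacy loss is incurred on the curator's side, and all bulletin-board contents are themselves post-processing of private outputs.

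For utility, \cref{lem:noise-whp-bound} bounds each noise draw by $O(\log n / \eps') = O(\log^3 n / \eps)$ with high probability, and a union bound over all nodes and phases keeps this error bound uniform. The main obstacle, and the step that will require the most care, is showing that these per-phase errors accumulate only \emph{additively} into the final core-number estimate rather than compounding across phases. The plan is to exploit the slack parameters $\lambda$ and $\lf$ of the level data structure to absorb the geometric noise at every threshold comparison: a node whose true $\nup_v(\ell)$ clearly exceeds $\upexp^{\gn(\ell)}$ moves up with high probability, and a node whose true count is clearly below stays put, so the invariants of the non-private data structure of \cite{LSYDS22} carry over after widening them by an additive $O(\log^3 n / \eps)$ slack. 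Post-processing the final public levels then yields $\left(2+\coren,\coreadd\right)$-approximate core numbers, completing the proof of~\cref{thm:ldp-static}.
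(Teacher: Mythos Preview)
Your overall plan—levels from \cite{LSYDS22}, geometric noise on the per-level neighbor counts, composition over phases, and then re-proving the level invariants with an additive $O(\log^3 n/\eps)$ slack—matches the paper's approach and is correct for the $O(\log^2 n)$-round version. Two small points: you should account for the factor of $2$ coming from the two endpoints of the differing edge (the paper invokes group privacy and sets the per-randomizer budget to $\eps/(8\log^2 n)$ rather than $\eps/(4\log^2 n)$); and the ``non-accumulation'' of error is not really a separate obstacle—once each phase's noise is bounded by $O(\log^3 n/\eps)$ whp, the degree upper- and lower-bound invariants of \cite{LSYDS22} simply hold with that same additive slack at every level, and the core-number estimate inherits a single additive $O(\log^3 n/\eps)$ term rather than a sum over phases.

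The genuine gap is in your round-reduction step. You propose to batch ``the $O(\log n)$ phases that share the same value of $\gn(\ell)$'' into one round—that is, all levels \emph{within} a single group. But those phases are sequentially dependent: whether $v$ moves from level $\ell$ to $\ell{+}1$ depends on which of its neighbors also reached level $\ell$, which in turn depends on their decisions at levels $\ell{-}1,\ell{-}2,\dots$ within the same group. A node cannot locally precompute its ``level trajectory within the group'' without first learning its neighbors' trajectories, so one communication round does not suffice. The paper parallelizes in the \emph{orthogonal} direction: it decouples the $O(\log n)$ groups into independent copies of the level structure (each with its own fixed threshold $(1+\psi)^g$ and its own $O(\log n)$ levels), and in each round every node advances at most one level in \emph{every} group simultaneously, releasing a length-$O(\log n)$ bit vector indicating in which groups it moved up. Since each independent copy has only $O(\log n)$ levels, this terminates in $O(\log n)$ rounds; the total number of randomizer calls is still $O(\log^2 n)$ per node, so the privacy budget is unchanged. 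The core-number estimate is then read off as $(2+\lambda)(1+\psi)^{g'}$ where $g'$ is the largest group in which the node reached the top level, and the approximation proof goes through with the same per-group invariants.
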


As a consequence of our algorithm, we give an $\eps$-LEDP algorithm for \emph{approximate
low out-degree ordering}.
Edge orientations obtained from such orderings
are useful in graph algorithm design 
and contribute to a variety of faster algorithms
for fundamental problems, such as coloring~\cite{Matula83}, 
matching~\cite{CHS09}, 
triangle counting~\cite{CN85}, independent set~\cite{onak2020fully}, 
dominating set~\cite{AG09},
and many others. 
Furthermore, many real-world graphs have small 
degeneracy~\cite{bera2020degeneracy,eppstein2013listing,shun2015multicore,dhulipala2018theoretically}, %
making such algorithms practically useful.
Note that (perhaps obviously) 
an algorithm that explicitly outputs an orientation for every edge cannot be edge DP.
Instead,
the output of our private algorithm is an ordering on the vertices, and the edge orientation can be obtained by orienting each edge from the lower to the higher endpoint in the ordering. Private orderings have been considered in previous
works: e.g., Gupta \etal~\cite{GLM10} give an $\eps$-edge DP ordering for the vertex cover problem 
(where the earliest endpoint of an edge covers that edge).

\begin{theorem}[$\eps$-LEDP Low Out-Degree Ordering]\label{thm:ldp-ordering}
    There exists an $O(\log^2 n)$-round
    $\eps$-LEDP %
    algorithm that, given a constant $\coren > 0$ and a graph of degeneracy $\degen$, returns a total ordering of the nodes such that orienting edges from nodes
    earlier in the ordering to nodes later in the ordering results in out-degree at most $(4+\coren)\degen + O\left(\frac{\log^3 n}{\eps}\right)$, with high probability.
\end{theorem}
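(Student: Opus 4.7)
The plan is to obtain the ordering as a post-processing of the basic $O(\log^2 n)$-round $\eps$-LEDP $k$-core procedure underlying \cref{thm:ldp-static}. Concretely, I would run that procedure so that every node releases, on the public bulletin board, its noisy level $\level(v)$ from the level data structure. I then define the ordering $D = [v_1, \ldots, v_n]$ by sorting nodes in decreasing order of $\level(v)$, breaking ties by node ID. Orienting each edge from the earlier to the later endpoint of $D$ therefore orients it from the higher-level endpoint to the lower-level endpoint (with same-level ties broken by ID). Privacy is immediate by the post-processing theorem: the noisy levels are part of the transcript of an $\eps$-LEDP protocol, so any deterministic function of them (including $D$ and the induced orientation) is $\eps$-LEDP, and the round complexity is inherited from the basic $k$-core procedure.

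For the out-degree bound, I would invoke the structural invariants of the level data structure used in the non-private dynamic $k$-core line of work and in the underlying algorithm of \cref{thm:ldp-static}. Each node $v$ has, by construction, only a bounded (geometric in the group level) number of neighbors at levels $\ge \level(v)$, and there are $O(\log n)$ populated group levels. Using the identity $\degen = \max_v \core(v)$, summing the contributions across groups yields, in the noise-free analysis, an upper bound of $(4+\coren)\degen$ on the number of same-or-higher-level neighbors of any node, matching the non-private $(4+\coren)$-approximate low-out-degree orientations of prior dynamic work. Since each node's released level has geometric noise with parameter $\nparam = \Theta(\eps/\log^2 n)$, \cref{lem:noise-whp-bound} bounds each noise magnitude by $O(\log^3 n / \eps)$ \whp. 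Propagating this additive slack through the level invariants contributes only an additive $O(\log^3 n / \eps)$ to the out-degree rather than a multiplicative blow-up, giving the claimed bound of $(4+\coren)\degen + O(\log^3 n/\eps)$.

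The main obstacle is verifying that the non-private level invariants survive the addition of noise with only an additive slack. In the non-private algorithm, rebalancing uses exact levels, so a node's noisy level may be off from its intended level by up to the noise magnitude, and two adjacent nodes may disagree about which endpoint is ``higher.'' I would handle this by showing that, with high probability, every node's noisy level is within $O(\log^3 n/\eps)$ of its true level for every round, and then arguing that replacing true levels with noisy levels in the upper-neighbor count perturbs each term of the geometric sum across group levels by at most a small additive constant. Summed over the $O(\log n)$ group levels and $O(\log^2 n)$ phases, and combined with a union bound over the $n$ nodes, this yields the $O(\log^3 n/\eps)$ additive term claimed in the theorem, closing the proof.
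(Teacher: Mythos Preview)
Your ordering is backwards, and this breaks the out-degree bound. In the paper (\cref{alg:insert}, \cref{line:ordering}) nodes are ordered by level from \emph{smaller to larger}, so an edge is oriented from the lower-level endpoint toward the higher-level endpoint. The out-degree of $v$ is then $|\{w\in N(v):\level(w)\ge \level(v)\}|$, which is exactly what \cref{inv:degree-1} upper-bounds by $(1+\psi)^{\gn(\level(v))}+O(\log^3 n/\eps)$. With your decreasing-level ordering, the out-degree of $v$ becomes the number of neighbors at levels $\le\level(v)$, and nothing in the level data structure controls that quantity---a node at the top level may well have $\Omega(n)$ lower-level neighbors. So the direction matters and your choice gives no bound.

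Two further points. First, there is no ``summing contributions across $O(\log n)$ group levels'': each node sits in a single group, and \cref{inv:degree-1} already gives the whole bound in one shot; the remaining work (as in \cref{lem:approximation-outdeg}) is to bound the maximum group index $g$ any node can reach, via the core-number approximation $\kest(v)\le (2+\lambda)(1+\psi)^2\degen+O(\log^3 n/\eps)$, and then evaluate $(1+\psi)^g$. Second, your last paragraph misidentifies where the noise lives: the geometric noise is added to each node's \emph{degree count} $\nup_i$ before the threshold test, not to the published level. The published levels are exact integers determined by those noisy comparisons, so there is no ``true level vs.\ noisy level'' discrepancy to reconcile; the additive slack appears directly in the invariants (\cref{lem:node-move,lem:not-moving,lem:invariants-hold}), not as a post-hoc perturbation of levels.
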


With
an additional tweak, our algorithm also yields 
the first result for the densest subgraph problem in the local model.

\begin{theorem}[$\eps$-LEDP Densest Subgraph]\label{thm:local-densest-subgraph}
    There exists an $O(\log n)$-round $\eps$-LEDP %
    algorithm that returns a set 
    of nodes %
    that induce a $\left(4+\coren, O\left(\frac{\log^3 n}{\eps}\right)\right)$-approximate
    densest subgraph. Our algorithm returns both the nodes in the densest subgraph as well as the approximate density.
\end{theorem}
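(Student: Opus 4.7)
The plan is to obtain the densest subgraph algorithm by post-processing the output of the $\eps$-LEDP $k$-core decomposition algorithm of \cref{thm:ldp-static}, then privately releasing the induced density. The standard connection between the level data structures of \cite{BHNT15,HNW20,LSYDS22} and the densest subgraph is that nodes in the top group of levels induce a subgraph whose density witnesses a $(4+\coren)$-approximation to $D^*$, because (i) $D^* \leq \degen$ and $D^* \geq \degen/2$ for $\degen$-degenerate graphs, and (ii) the level invariant forces each node in the top group to have $\Omega(\degen)$ neighbors at or above its own level. Our LEDP $k$-core algorithm already produces public noisy levels and group assignments $\gn(\level(v))$ for every node, so the hard ``combinatorial'' work has been done; we just need to extract a subset from the transcript and release its density.

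Concretely, I would first run the $\eps/2$-LEDP $k$-core decomposition of \cref{thm:ldp-static}, using the $O(\log n)$-round variant, to obtain a public transcript containing $\level(v)$ and $\gn(\level(v))$ for each $v$. The curator then, as pure post-processing, identifies the largest group index $g^*$ that is realized by at least one node and sets $S = \{v : \gn(\level(v)) \geq g^*\}$ (or takes the best such prefix over all groups). By \cref{thm:post-processing}, this costs no additional privacy. Next, the density of $G[S]$ is released by a one-round local release: each $v \in S$ sends $|N(v) \cap S| + Y_v$ to the curator with $Y_v \sim \geom(\eps/4)$, and the curator outputs $(\sum_v (|N(v)\cap S| + Y_v))/(2|S|)$. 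Since $S$ is public and each node's function of its own adjacency list has sensitivity $1$, each local randomizer is $(\eps/4)$-edge DP; group privacy (\cref{thm:group-dp}) for the two endpoints of an edge gives $\eps/2$ overall for this release. Sequential composition (\cref{thm:composition}) of the $k$-core step ($\eps/2$) and density release ($\eps/2$) yields $\eps$-LEDP.

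For utility, \cref{lem:noise-whp-bound} gives $|Y_v| = O(\log n/\eps)$ \whp, so summing over $|S| \leq n$ nodes and dividing by $2|S|$ adds only $O(\log n/\eps)$ additive error to the reported density. The dominant additive error, $O(\log^3 n/\eps)$, comes from the approximation guarantee of \cref{thm:ldp-static}: because each node in $S$ has at least $(1+\lf)^{g^*} - O(\log^3 n / \eps)$ up-neighbors by the noisy level invariant, and $(1+\lf)^{g^*}$ approximates $\degen$ within a factor of $(2+\coren)$, the induced density is at least $\degen/(2+\coren) - O(\log^3 n/\eps) \geq D^*/(4+\coren) - O(\log^3 n/\eps)$ after rescaling $\coren$. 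The round complexity is $O(\log n)$ from the $k$-core subroutine plus one extra round, matching the claim.

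The main obstacle is making the density-approximation argument in step~3 go through with only a single additive $O(\log^3 n/\eps)$ loss rather than one loss per level query. The key is to read off the density of $G[S]$ from the level-invariant lower bound on up-degrees at a single group boundary, so that the $O(\log^2 n)$ levels traversed by the data structure are collapsed into a single noise term $\subfac = O(\log^3 n/\eps)$ via \cref{lem:noise-whp-bound}, rather than compounded. Arguing this cleanly requires showing that, whp, the invariants established in the analysis of \cref{thm:ldp-static} hold simultaneously at every node in $S$, which follows from a union bound over the $n$ nodes using the high-probability guarantee on the symmetric geometric noise.
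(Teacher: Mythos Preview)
Your privacy accounting is fine, but the approximation argument has a real gap. You claim that each node in $S$ (the top group) has at least $(1+\lf)^{g^*}-O(\log^3 n/\eps)$ neighbors \emph{inside $S$}, and from this deduce a min-degree (hence density) lower bound for $G[S]$. The level invariant you are invoking does not say this. \cref{inv:degree-2} only guarantees that a node at level $r$ has at least $(1+\lf)^{\gn(r-1)}-O(\log^3 n/\eps)$ neighbors in $Z_{r-1}$, i.e.\ at level $\geq r-1$. For a node sitting at the bottom level of the top group, those guaranteed neighbors are one level below and therefore may lie \emph{outside} $S$. So neither ``$S=$ top group'' nor ``$S=$ top level'' comes with the min-degree bound you need, and your passage from the \kc\ guarantee to a density guarantee breaks down. (The parenthetical ``or takes the best such prefix over all groups'' does not help unless you can privately \emph{compare} the densities of the different prefixes, which your single $\eps/2$ density release does not do.)

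The paper's proof proceeds differently and this difference is essential. In \cref{alg:ledp-densest} each node releases its \emph{noisy induced degree} $\hnup_{i,g}$ at every level it occupies (rather than only a $0/1$ move bit), so the curator can, as pure post-processing, peel the levels of the relevant group one by one and estimate the density of every candidate set $Z_r$. The analysis (\cref{lem:densest-subgraph}, following \cite{BHNT15}) shows only that \emph{some} level $r$ in the right group satisfies $\dense(G[Z_r])\geq (1+\lf)^{g}/(2(1+\lf))-O(\log^3 n/\eps)$; it does not identify which one, so a search over levels using the released noisy degrees is required. Selecting the best level via these noisy sums costs an extra factor~$2$, which is exactly where the $4+\coren$ (rather than $2+\coren$) comes from. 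If you want to salvage your route, you would need to release noisy degree sums for all $O(\log n)$ candidate prefixes within the group and pick the maximizer---at which point you have essentially reproduced the paper's algorithm, with the privacy for those releases already amortized into the level-by-level randomizers rather than spent in a separate final round.
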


Interestingly, our \emph{static} LEDP algorithms are inspired by a 
body of non-private works in the dynamic setting~\cite{BHNT15,HNW20,LSYDS22,SCS20}. 
We show that the locality of these dynamic algorithms and the bounded sequential dependency
paths allow us to provide our LEDP guarantees as well as minimize the number of rounds of 
communication. Our techniques  may potentially lead to other 
LEDP algorithms inspired by non-private dynamic algorithms.

There are currently no known $(1+\coren)$-approximate \kc decomposition algorithm 
(even in the non-private setting) that 
use $\poly(\log n)$ phases. (Non-private $(1+\coren)$-approximate
algorithms that use $\omega(\poly(\log n))$
phases do exist.) 
This bound on the
number of phases is essential
for obtaining our additive approximation guarantees. It is open whether we can obtain a framework without additive error dependence on the 
number of phases. Also, independently, it is interesting to see whether non-private
$(1+\coren)$-approximate \kc decomposition algorithms that take $\poly(\log n)$ phases exist.

Then, we give an $\eps$-edge DP densest subgraph algorithm based on the \emph{non-private} parallel
algorithm of Bahmani \etal~\cite{BGM14} together with the modifications made by Su and Vu~\cite{SuVu20} in their non-private
distributed algorithm.
Using this algorithm and our mechanism, we present a $\left(1+\coren, \densestadd\right)$-approximate $\eps$-edge DP
algorithm for the densest subgraph problem with runtime $O((n + m)\log^3 n)$ when $\coren > 0$ is constant. This improves on the multiplicative approximation factors of
the previous $(\eps, \delta)$-edge DP $O\left(2+\coren, \frac{\log (n)\log(1/\delta)}{\eps}\right)$-approximation results of Nguyen and Vullikanti~\cite{NV21} and 
the more recent $\eps$-edge DP $\left(2, O\left(\frac{\log^{2.5} (n)\log(1/\sigma)}{\eps}\right)\right)$-approximation algorithm of 
Farhadi \etal~\cite{AHS21} that obtains this approximation guarantee with probability $1-\sigma$. We achieve this improvement in the approximation factor with only an $O(\log^3 n)$ increase
in the runtime. Furthermore, when $\coren > 0$ is constant, 
our algorithm matches the $((n + m)\log^2 n)$ runtime 
up to a $O(\log n)$ factor
of the best known non-private algorithm of Chekuri \etal~\cite{CQT22}, which obtains a $(1+\coren)$-approximate
densest subgraph algorithm. We present our algorithm and its analysis in~\cref{sec:densest}.

\begin{theorem}[$\eps$-Edge DP Densest Subgraph]\label{thm:densest-subgraph}
    There exists an $\left(1+\coren, \densestadd \right)$-approximate algorithm for the densest subgraph problem
    that is $\eps$-edge DP and runs in $O((n + m)\log^3 n)$ worst-case time for constant $\coren > 0$ that returns an approximation factor 
    within the stated bounds with high probability.
\end{theorem}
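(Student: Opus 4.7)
The plan is to build on the non-private multiplicative-weights densest subgraph framework of Bahmani \etal~\cite{BGM14} as sharpened by Su and Vu~\cite{SuVu20}. Their algorithm proceeds in $T = O(\log n / \coren^2)$ phases; in each phase it maintains a fractional load $\ell(e)$ on every edge $e$, and shifts load from the higher-loaded endpoint to the lower-loaded one, while simultaneously updating a dual assignment $\hatw(v)$ per node. The final approximate densest subgraph is obtained by a suffix of the nodes sorted by $\hatw(v)$. The two facts that make this template amenable to edge-DP are: (i) a single edge insertion/deletion affects only the two endpoints' load counters by $\pm 1$, so the per-phase statistic released per node has $\ell_1$-sensitivity $O(1)$; and (ii) the only information the algorithm needs to proceed is these per-node load summaries plus a final threshold scan over $\hatw$.

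First I would formalize the private variant: at every phase $t \in [T]$, instead of using the exact load-derived quantity $u_t(v)$ at each node $v$, release $\hat{u}_t(v) = u_t(v) + X_{t,v}$ with $X_{t,v} \sim \geom(\eps/(c T))$ for a small constant $c$, and use these noisy values to drive the MWU update and to compute $\hatw(v)$. The final step (selecting the best suffix of nodes sorted by $\hatw$, returning the set and its empirical density) is post-processing and inherits privacy. By \cref{lem:sgd-private} each release is $(\eps/T)$-edge DP, and by the composition theorem~\cref{thm:composition} across the $T$ phases and the post-processing theorem, the overall algorithm is $\eps$-edge DP.

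For utility, I would apply \cref{lem:noise-whp-bound} to conclude that, with high probability, every one of the $O(nT)$ noise draws is bounded by $O(T \log n / \eps) = O(\log^2 n / (\coren^2 \eps))$ in absolute value. I would then re-run the primal-dual convergence analysis of Bahmani \etal with these perturbed loads in place, tracking the noise as a separate additive error term at each step. The $(1+\coren)$ multiplicative factor is produced by the MWU convergence and is unaffected by perturbations that are additive in the load, provided they are controlled on a per-phase basis. Summing across the $T$ phases, and amortizing by the $1/\coren$ loss in translating an approximate dual into a primal density cut (as in Su--Vu), gives the additive error $O(\log^3 n / (\eps \coren^3))$ claimed in the theorem. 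The runtime follows immediately: each of the $O(\log n / \coren^2)$ phases performs $O(n+m)$ work to update loads, sample geometric noise, and rebuild the noisy duals, and the final suffix scan is $O(n \log n)$; for constant $\coren$, the total is $O((n+m)\log^3 n)$ after accounting for the $\log^2 n$ factor introduced by sampling and by sorting nodes per phase.

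The hardest part will be the utility analysis, specifically arguing that the noise does not degrade the multiplicative factor from $(1+\coren)$ to $(2+\coren)$ (which is what the more naive privatizations of~\cite{NV21,AHS21} obtain). The subtlety is that in the primal-dual argument one compares a noisy dual $\hatw$ against the optimal density $D^*$, and a worst-case coupling of noise signs across phases could in principle multiply the dual by a factor rather than just shift it. To avoid this, I would invoke independence of the noise across phases and nodes, together with the fact that the MWU weight updates are exponential in a \emph{sum} of per-phase quantities, so that the accumulated noise enters inside the exponent only additively; the concentration of this sum over $T$ independent draws is $\tO(\sqrt{T} \cdot T/\eps)$, which is absorbed into the $\poly(\log n)/\eps$ additive error without touching the multiplicative factor. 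With this step in hand, the proof assembles from existing analyses plus the privacy composition argument.
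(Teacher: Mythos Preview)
Your approach correctly identifies the MWU framework of~\cite{BGM14,SuVu20} as the right starting point and correctly observes that the per-phase, per-node statistics have $O(1)$ sensitivity on edge-neighboring graphs. However, there is a genuine gap in the privacy argument.

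You assert that ``selecting the best suffix of nodes sorted by $\hatw$, returning the set and its empirical density'' is post-processing. It is not: deciding which suffix is \emph{best} requires computing the density of each candidate, which means counting edges inside it and therefore touches private data. Moreover, there are $n$ candidate suffixes, so privatizing this selection na\"ively (e.g., one noisy density per suffix) would blow up the privacy budget by a factor of $n$. The paper avoids this by a different extraction step: it runs an outer search over $O(\log_{(1+\coren)} n)$ density guesses $z=(1+\coren)^i$; for each $z$ and each of the $O(T)$ integer load thresholds $\ell$ it forms a candidate set $V'_\ell$ via a per-node noisy test and then performs a \emph{single} noisy density check on $G[V'_\ell]$. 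This yields only $O(T^2\log n)=\poly(\log n)$ density checks in total, so composition goes through. The cost of this outer $z$-loop together with the $O(T)$ inner load-level loop is precisely why the additive error is $O(\log^4 n/\eps)$ rather than the $O(\log^3 n/\eps)$ you claim; your tighter bound is an artifact of not having budgeted for the selection step at all.

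A second, related issue: you never specify what $u_t(v)$ is or how releasing only per-node scalars can ``drive the MWU update,'' since the update must decide \emph{which} incident edges of $v$ receive new load, and that depends on the per-edge load ordering. The paper handles this by adding geometric noise to the \emph{count} of low-load incident edges that receive load (interpreted as ``dummy edges'') and then proves, via a careful conditioning argument, that once these noisy counts are fixed the entire edge-load trajectory on $E\cap E'$ is identical in $G$ and $G'$. This is more delicate than composing local-randomizer releases and treating the rest as post-processing; your sketch does not supply an analogue.
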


Finally, we present our general framework. 
This framework is a formalization of a category of graph algorithms that we call \defn{\localadjust}. We present a simple mechanism for converting \localadjust graph algorithms 
into $\eps$-edge DP and $\eps$-LEDP algorithms. Our framework and privacy proofs are given in~\cref{sec:framework}. 

Three crucial observations about \localadjust algorithms allow us to obtain private algorithms with small error. 
First, such algorithms proceed in several phases where the state of each node or edge is updated via a function that uses
\emph{only} information from its immediate neighbors and incident edges. We are thus able to bound the sensitivity of such functions
by a small constant in edge-neighboring graphs. Second, these local functions often only require the \emph{number} of neighbors and incident edges
that satisfy a condition without requiring knowledge of the states of these neighbors. This property allows us to easily add
noise via the geometric mechanism to the \emph{count} of neighbors/incident edges that satisfy the condition. The noise can be viewed as either hiding actual edges or representing \defn{dummy edges} which may additionally satisfy the condition. In edge-neighboring graphs $G$ and $G'$,
a dummy edge incident to node $v$ can account for the additional neighbor of $v$ that is present in $G'$ but not in $G$.
Finally, we see that the additive error of each of our algorithms depends on its total number of \defn{phases} (or the number of times the
algorithm is run before returning an output). We show that state-of-the-art parallel and distributed algorithms naturally run in 
$\poly(\log n)$ phases, thus leading to only a $\poly(\log n)$ factor in the additive error.
It is as an
interesting open question whether general utility guarantees can be obtained for our framework.

\section{Local Algorithms for Core Decomposition, Low Out-Degree Ordering, and Densest Subgraph}\label{sec:alg}

In this section, we give local algorithms for core decomposition, low out-degree ordering, and densest subgraph.
Our local algorithms are based on extending algorithms for a  
recently developed non-private batch-dynamic level data structure to the private setting.
We start with our local core decomposition algorithm and its 
privacy guarantees and accuracy, and then present our local low
out-degree ordering and densest subgraph algorithms. 
\conffull{All proofs are provided in the full version of our
paper.}{}

\subsection{$\eps$-LEDP $k$-Core Decomposition}

We present our algorithm in this section, 
\conffull{and provide our analyses in the
full version of our paper.}{
provide its privacy analysis in~\cref{sec:privacy}, 
and finally analyze its approximation factor 
and round complexity in~\cref{sec:approx-runtime}.}
Our LEDP algorithm is inspired by the sequential level data structure
algorithms of~\cite{BHNT15,HNW20} and the parallel level data structure of~\cite{LSYDS22}
used to obtain the densest subgraphs and
bounded degeneracy orientations of a graph in the dynamic setting. We crucially use the
observation made by~\cite{LSYDS22} that the longest path of sequential dependencies
is $O(\log^2 n)$ for \emph{any number} of updates. We use this, along with 
our new private procedures for releasing \emph{multiple} outputs \emph{simultaneously} for 
a phase at once,  to prove both our privacy
guarantees and our $O(\log^2 n)$ rounds, $O(1)$ node communication complexity bound.
A simple extension allows us to obtain an LEDP algorithm that uses $O(\log n)$ rounds and $O(\log n)$ communication complexity per node.

\ifsubmit
\else
Finally, we are able to improve the multiplicative approximation factor 
from $(4+\coren)$ in~\cite{LSYDS22} to $(2+\coren)$ (see \cref{footnote:multiplicative-error})
because~\cite{LSYDS22} is a dynamic algorithm
while ours is a static algorithm. The additional factor of $2$ was useful in reducing the parallel work 
in the fully batch-dynamic algorithm (when accounting for both insertions
and deletions). In our static setting, we only need to maintain insertions (from inserting all the edges of the 
graph as a batch) which allows us to reduce the approximation factor by $2$.
However, in order to ensure our privacy guarantees in the LEDP model, we compute
a new noise for every node. In the sequential computation setting, this incurs an additional
factor of $n$ in the running time, which also means that we no longer need the amortized
analysis from previous works, since the additional factor of $n$ dominates
the running time. But in the distributed setting,
we are still able to show that the  worst-case number of rounds is $O(\log^2 n)$ (and $O(\log n)$
rounds with a modification).
\fi

\paragraph{Non-Private Dynamic Algorithms of~\cite{BHNT15,HNW20,LSYDS22,SCS20}}
The level data structure of~\cite{BHNT15,HNW20} partitions the nodes of the
graph into $O(\log^2 n)$ \emph{levels}.
The levels are partitioned into
\emph{groups} of $O(\log n)$ levels each. Nodes move up and down the levels
according to a set of rules on the induced degree of a node $i$
with respect to the number of neighbors in the same or higher level than $i$ (sometimes the set
of nodes in the level just below the level of $i$ is also considered).
Specifically, if the induced degree is too high, $i$ moves one level up;
if the induced degree is too low, $i$ moves one level down. One crucial 
aspect of this algorithm is that \emph{a node only needs to know the 
levels of its immediate neighbors.}

In addition to the above structure, we use
a crucial aspect of the algorithm of~\cite{LSYDS22} that 
allows us to move nodes from the same level
\emph{simultaneously} without causing additional nodes
below them to move. Since we only have $O(\log^2 n)$ levels, we can process them in a bottom-up fashion while accounting 
for all vertices in the current level of the iteration simultaneously,
achieving the $O(\log^2 n)$ round complexity of our $\eps$-LEDP algorithm.
A further modification that accounts for different groups \emph{simultaneously} allows us to 
decrease the round complexity.

\subsection{Detailed Algorithm}

\conffull{}{
Our algorithms use the notation
shown in~\cref{tab:params}.}
For the remainder of this section, $\log n$ means $\log_{(1+\lf)} n$ 
for a parameter $\lf > 0$ that affects our approximation factor.
There are $4\log^2 n$ levels in the structure. As in previous works~\cite{BHNT15,HNW20,LSYDS22},
we call this structure a \emph{level data structure}. Our descriptions use terminology 
from these previous works. However, because we are in the static setting, we are able to simplify
the algorithm of~\cite{LSYDS22} in some ways by considering the input 
graph as a single batch of edge insertions.

Levels in the level data structure are partitioned
into $2 \log n$ \defn{groups} of equal size. 
Each group $g_i$ contains $2\log n$ \emph{consecutive} levels. We number the
levels starting with $0$ as the bottommost level and $\numlevels - 1$ as the topmost level. 
Each group $g_i$ has an associated index $i$, and contains levels in $[i \cdot 2\log n, (i+1)\cdot 2\log n)$.
Let the group index that a level $\lcur$ belongs to be $\gn(\lcur)$. 
In other words, $\gn(\lcur) = f$ if $\lcur \in g_f$, which can be computed 
as $\gn(\lcur) = \floor{\lcur/2\log n}$. We denote the level of a node $i$ as $\level(i)$.
Finally, a node is \defn{\dty} if it must move to a higher level. 

We now describe our algorithm.
The pseudocode for our algorithm is given 
in~\cref{alg:insert}. 
The algorithm is given a sequence of adjacency lists, 
$(a_1, \dots, a_n)$, constant parameters $\lambda, \psi > 0$ that will determine
the approximation factor, and privacy parameter $\eps \in (0, 1)$ as input.
The algorithm outputs $\left(2+\coren, O\left(\frac{\log^3 n}{\eps}\right)\right)$-approximate
core numbers and a low out-degree ordering with the same guarantee on the out-degree. The approximation
parameter $\coren$ is determined from $\lambda$ and $\psi$.
All nodes start in level $0$ (\cref{line:initial}).
Throughout the algorithm, the curator maintains and publishes
the current levels of the nodes in a 
set of $\numlevels$ lists, each of size $n$ where the 
$i$-th index of a list contains the level of node $i$ (\cref{line:levels}).
\SetKwFunction{FnUpdateLevels}{UpdatenodeLevels}
\SetKwFunction{FnInsert}{Incremental}
\SetKwFunction{FnStatic}{LEDPCoreDecomp}
\SetKwFunction{FnLEDPDensestSubgraph}{LEDPDensestSubgraph}
\SetKwFunction{FnDelete}{Decremental}
\SetKwFunction{FnCoreNumber}{EstimateCoreNumbers}
\SetKwFunction{FnLowOutdegree}{EstimateOrdering}

\begin{algorithm}[!t]
\conffull{
\small}{}
    \textbf{Input:} Adjacency lists $(\adj_1, \dots, \adj_n)$, constant $\coren \in (0, 1)$, and
    privacy parameter $\eps \in (0, 1)$.\\ 
    \textbf{Output:} $\eps$-LEDP $\left(2+\coren, O\left(\frac{\log^3 n}{\eps}\right)\right)$-approximate core numbers and low out-degree ordering of each node in $G$.\\
    \Fn{\FnStatic{$(\adj_1, \dots, \adj_n), \eps, \coren$}}{
        Set $\psi = 0.5$ and $\lambda = \frac{2}{9} (2 \coren - 5)$.\\ %
        Curator initializes $L_0, \dots, L_{\numlevels - 1}$ with $L_\lcur[i] \leftarrow 0$ for every $i \in [n], \lcur \in [0, \dots, \numlevels - 1]$.\label{line:initial}\\
        \For{$\lcur = 0$ to $\numlevels - 2$}{\label{line:iterate}
            \For{$i = 1$ to $n$}{
                $L_{\lcur + 1}[i] \leftarrow L_{\lcur}[i]$.\label{line:old-level} \\
                \If{$L_\lcur[i] = \lcur$}{\label{line:level-cur-level} 
                    Let $\nup_i$ be the number of neighbors $j \in \adj_i$ where $L_\lcur[j] = \lcur$.\label{line:compute-up} \\
                    Sample $X \sim \geom(\nparam)$.\\
                    Compute $\hnup_i \leftarrow \nup_i + X$.\label{line:compute-noisy-up}\\
                    \If{$\hnup_i > \upexp^{\gn(\lcur)}$}{\label{line:move-up-condition}
                        $i$ \release $1$\label{line:release-noisy-up}.\\
                        $L_{\lcur + 1}[i] \leftarrow L_{\lcur}[i] + 1$.\label{line:update-list} \Comment{Curator moves $i$ up one level.}\\
                    }\Else{\label{line:stay-same}
                        $i$ \release $0$\label{line:release-noisy-bool}.\\
                    }
                }
            }
            
            Curator publishes $L_{\lcur + 1}$.\label{line:levels}\label{line:new-pub-list}\\
        }
        Curator calls $C \leftarrow \FnCoreNumber(L_{\numlevels - 1}, \lambda, \psi)$.\\
        Curator orders nodes in $D$ by $L_{\numlevels-1}$ (from smaller to larger) breaking ties by node index.\label{line:ordering}\\
        \return $(C, D)$.
        \caption{\label{alg:insert} $\eps$-LEDP Decomposition and Ordering}
    }
\end{algorithm}
First, the curator iterates through the levels starting from level $0$ to level $\numlevels - 1$ (\cref{line:iterate}). 
Let $\lcur$ be the current level.
The curator asks each node, $i$, in level $\lcur$
to compute its \emph{noisy} number of neighbors in level $\lcur$, denoted
$\hnup_i$ (\cref{line:level-cur-level}). 
To compute its $\hnup_i$, node
$i$ first computes $\nup_i$ using the most recently
published levels of its neighbors (\cref{line:compute-up}) where $\nup_i$ is the number of its
neighbors in level $\lcur$. If this is the
first round of the algorithm, then all neighbors of $i$ are on its level, level $0$. 
Then, $i$ computes $\hnup_i \leftarrow
\nup_i + X$, where $X \sim \geom(\nparam)$ denotes
a noise drawn i.i.d.\ from the symmetric geometric distribution with parameter
$\nparaminside$ (\cref{line:compute-noisy-up}).
The curator moves $i$ \emph{up a level} (to level $\lcur + 1$) if
and only if the released bit is $1$ (i.e., when $\hnup_i > \upexp^{\gn(\lcur)}$ in 
\cref{line:move-up-condition,line:release-noisy-up,line:update-list}).
(The curator performs this step for each node in level $\lcur$.)
Otherwise, $i$ releases $0$ (\cref{line:release-noisy-bool}) and it stays in the same level.
Then, the curator publicizes a new list, $L_{\lcur+1}$, that contains the
new levels of each node (\cref{line:new-pub-list}). 
If the node does not move up, then its old level is included in $L_{\lcur+1}$ (\cref{line:old-level}). 
This repeats in subsequent rounds until we reach the final level $\numlevels - 1$.

After processing the final level, $\numlevels - 1$, the curator estimates the
core numbers of nodes using their levels. This computation is shown in~\cref{alg:estimate}.
We use Definition 3.14 of~\cite{LSYDS22} to calculate this
estimate. Intuitively, the core number estimate for node $i$ is calculated to be
$(1+\lf)^{g}$, where $g$ is the maximum group index, where $L_{\numlevels - 1}[i]$ is the
topmost level in group $g$ or is higher than the topmost level in group~$g$.

We provide some brief intuition for the privacy of this algorithm. 
Each node moves up at most $\numlevels - 1$ levels. Thus, there will be at most $\numlevels - 1$ rounds of
communication.
We show that the \emph{sensitivity} of $\nup_i$ for any $0\leq \lcur \leq \numlevels - 1$ of 
any node $i$ is $1$; not only that, but the sensitivity of the vector of these values 
is $2$ for edge-neighboring graphs. Hence,
we add sufficient noise each round to maintain LEDP using the geometric \conffull{mechanism.}{(mechanism \cref{def:sgd-mech}).} We show that by the
adaptive composition \conffull{theorem }{theorem~(\cref{thm:composition})}over $O(\log^2 n)$ rounds that our
algorithm is a $\eps$-LEDP algorithm. This is a simplification of 
our privacy proofs; full details can be found in \conffull{the full version of our paper.}{\cref{sec:privacy}.}

\paragraph{Core Number Estimation}
Our algorithms here use the core number estimation
algorithm presented in~\cref{alg:estimate}, which takes $\lcur$ 
as input and computes the estimate of the core number of 
all nodes according to $\lcur$. This estimate is denoted $\kest(i)$ for each node $i$.

\paragraph{Low Out-Degree Ordering} The ordering of the nodes
is determined first by sorting its final level (contained in $L_{\numlevels -1}$), from smallest level
to largest, and then breaking ties using the nodes' indices.

\begin{algorithm}[!t]
\conffull{
\small}{}
    \Fn{\FnCoreNumber{$L, \lambda, \lf$}}{
        \For{$i = 1$ to $n$}{
            $\kest(i)\leftarrow\upexpold^{\max\left(
            \left\lfloor\frac{L(i) + 1}{4\ceil{\log_{1+\psi} n}}\right\rfloor
            -1, 0\right)}$.
        }
        \return $\{(i, \kest(i)) : i \in [n]\}$.
	}
    \caption{\label{alg:estimate} Estimate Core Number~\cite{LSYDS22}}
\end{algorithm}

\conffull{}{
\subsection{LEDP Guarantees}\label{sec:privacy}

In this section, we prove that our algorithm is $\eps$-LEDP (see~\cref{def:LEDP}). 
For clarity, we specifically prove the $\eps$-edge LEDP of our \kc decomposition and related algorithms here. 
Later, we show that the algorithms presented here can be generalized to our locally-adjustable algorithms framework, which may significantly simplify the privacy analysis of these algorithms and similar algorithms.

\begin{theorem}\label{lem:incremental-LEDP}
    \cref{alg:insert} is $\eps$-LEDP.
\end{theorem}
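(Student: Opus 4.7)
The plan is to verify Definition~\ref{def:LEDP} by viewing each node's actions in \cref{alg:insert} as an instance of an $\eps$-local randomizer (\cref{def:local-randomizer}). The only private input touched by the protocol is the collection of adjacency lists $\adj_1,\dots,\adj_n$: every other quantity used by the curator (the published lists $L_\lcur$, the thresholds $\upexp^{\gn(\lcur)}$, the final estimates $C$, and the ordering $D$) is either public or a deterministic function of previously released bits, so by Theorem~\ref{thm:post-processing} it is enough to bound the privacy cost of the bits released in \cref{line:release-noisy-up,line:release-noisy-bool}.

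For a fixed iteration $\lcur$, the quantity $\nup_i$ is the number of entries of $\adj_i$ whose most recently published level equals $\lcur$. Viewed as a function of $\adj_i$ with $L_\lcur$ fixed (public), $\nup_i$ has $\ell_1$-sensitivity~$1$ on adjacency lists whose symmetric difference is~$1$, so by the geometric mechanism (Lemma~\ref{lem:sgd-private}) the release of $\hnup_i = \nup_i + X$ with $X\sim\geom(\nparam)$ is $\nparam$-edge DP; the bit actually released is a deterministic threshold on $\hnup_i$ and the public $\gn(\lcur)$, hence $\nparam$-edge DP by post-processing.

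Next I would compose across iterations and then across the two endpoints of a changed edge. A given node $i$ can participate (releasing a bit) only when $L_\lcur[i] = \lcur$; since levels are non-decreasing across iterations of \cref{line:iterate}, this occurs in at most $\numlevels - 1 \le 4\log^2 n$ iterations. By adaptive composition (Theorem~\ref{thm:composition}) applied to node $i$'s local randomizer, the concatenation of all bits it releases is $\bigl(4\log^2 n\bigr)\cdot\nparam = \eps/2$-edge DP as a function of $\adj_i$. Now fix any edge-neighboring pair $G,G'$ differing in an edge $(u,v)$. For every $i\notin\{u,v\}$, $\adj_i$ is identical under $G$ and $G'$, so node $i$'s randomizer produces identically distributed bits. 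Only the randomizers at $u$ and $v$ can shift the transcript's distribution, each by a factor of at most $e^{\eps/2}$; since the noise variables at different nodes are independent, the two factors multiply to give the bound $e^\eps$ on the ratio $\Pr[\alg(G')\in Y]/\Pr[\alg(G)\in Y]$, which is exactly $\eps$-edge DP of the transcript, i.e., $\eps$-LEDP.

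The one subtlety I expect to have to write out carefully is the adaptive nature of the composition: the set of nodes that the curator queries in iteration $\lcur+1$ depends on the bits released in iteration $\lcur$ (through $L_{\lcur+1}$), so the randomizer run by node $i$ in iteration $\lcur+1$ takes $L_{\lcur+1}$ as side information. This is permitted by Theorem~\ref{thm:composition} because $L_{\lcur+1}$ is determined by the transcript up to iteration $\lcur$, and thus fits the interactive protocol model of \cref{def:LEDP}; once this is spelled out, the per-node composition bound of $\eps/2$ and the cross-endpoint product bound go through unchanged.
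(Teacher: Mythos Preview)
Your proposal is correct and follows essentially the same approach as the paper: identify each node's per-round release as an $(\eps/(8\log^2 n))$-local randomizer via the geometric mechanism on a sensitivity-$1$ count, compose adaptively over the at most $4\log^2 n$ rounds, and then account for the two endpoints of the differing edge. The only cosmetic difference is that the paper phrases the last step as an appeal to group privacy (\cref{thm:group-dp}) for groups of size~$2$, whereas you argue directly that nodes other than $u,v$ have identical adjacency lists and independent noise, so only the two endpoint factors of $e^{\eps/2}$ enter; these are the same argument.
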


\begin{proof}
    We show that~\cref{alg:insert} can be implemented so that it accesses its input only via local randomizers. 
    In each round $\lcur$, each node in level $\lcur$ computes its \emph{induced degree} (the number of its neighbors that are also at level $\lcur$) with added geometric noise in~\cref{line:release-noisy-up}.
    We can implement this step with the local randomizer $R$ which takes as input an adjacency list, $\adj$, a level $\lcur$,
    and a list of levels of all nodes, $L_\lcur$ and computes $R(\adj, \lcur, L_\lcur) 
    = \sum_{j \in \adj}[L_\lcur[j] = \lcur] + X$ where $X \sim \geom(\eps/(8\log^2n))$.
    This randomizer is run independently by each party $i$ on its adjacency list $\adj_i$ and the public information 
    $\lcur$ and $L_\lcur$. Algorithm $R$ uses the geometric mechanism to obtain the number of its neighbors on $\lcur$. For any
    two edge-neighboring adjacency lists, $\adj$ and $\adj'$, the corresponding sums $\sum_{j \in \adj}[L_\lcur[j] = \lcur]$
    and $\sum_{j \in \adj'} [L_\lcur[j] = \lcur]$ differ by $1$. Thus, this sum function has sensitivity $1$ and by
    the privacy of the geometric mechanism is $\frac{\eps}{8\log^2 n}$-private.
    \cref{line:move-up-condition,line:release-noisy-up,line:stay-same,line:release-noisy-bool} 
    post-processes the output of the geometric mechanism and so, by~\cref{thm:post-processing}, $R$ is
    a $\frac{\eps}{8\log^2 n}$-local randomizer by~\cref{lem:sgd-private}. Using this released information, the curator
    then computes a new public set of levels $L_{\lcur+1}$.
    
    For each edge $\{i, j\}$, we apply two $\frac{\eps}{8\log^2 n}$-local randomizers to each endpoint over $4\log^2 n$ rounds.
    By the adaptive composition theorem (\cref{thm:composition}), group privacy (\cref{thm:group-dp}) and because 
    privacy is preserved after post-processing (\cref{thm:post-processing}), 
    \cref{alg:insert} is $\eps$-LEDP because it satisfies~\cref{def:LEDP}.
\end{proof}

\subsection{Approximation Guarantees}\label{sec:approx-runtime}

In this section, we calculate the approximation factors produced by~\cref{alg:insert}.
We first note that \emph{in expectation}, we return a $(2+\coren, 0)$-approximation 
(where the additive error is $0$) for the core numbers for any constant $\coren > 0$.

\begin{lemma}\label{lem:expectation}
\cref{alg:insert} returns a $(2+\coren, 0)$-approximation for the 
core numbers and low out-degree ordering, in expectation.
\end{lemma}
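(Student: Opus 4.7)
The plan is to reduce this to the non-private analysis of Liu \etal~\cite{LSYDS22} by exploiting that the injected symmetric geometric noise has mean zero. Concretely, for $X \sim \geom(\nparam)$, a direct computation from Definition~\ref{def:geom} gives $\E[X] = 0$ by the symmetry of the PMF around $0$, so $\E[\hnup_i] = \nup_i$ in every round. Thus, the \emph{expected} induced degree that the threshold test in~\cref{line:move-up-condition} sees is exactly the true induced degree used by the non-private level data structure of~\cite{LSYDS22}. Our choice of $\psi = 0.5$ and $\lambda = \tfrac{2}{9}(2\coren - 5)$ is calibrated precisely so that the static version of that structure attains a multiplicative factor of $2+\coren$ (rather than $4+\coren$, as per \cref{footnote:multiplicative-error} and the static-setting discussion preceding the algorithm).

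First, I would recall the invariants from~\cite{LSYDS22} (their Definition 3.14 and the surrounding lemmas): whenever every node $i$ at level $\lcur$ has induced up-degree at most $\upexp^{\gn(\lcur)}$, and every node not at the top has induced up-degree at least some related lower bound, the estimator $\kest(i)$ from~\cref{alg:estimate} satisfies $\core(i)/(2+\coren) \leq \kest(i) \leq (2+\coren)\,\core(i)$. Next, I would verify that these invariants hold in expectation for~\cref{alg:insert}: because $\hnup_i$ is an unbiased estimator of $\nup_i$ and the decision rule is ``move up iff $\hnup_i > \upexp^{\gn(\lcur)}$,'' the expected move-up event occurs exactly when $\nup_i$ exceeds the threshold, matching the deterministic rule of~\cite{LSYDS22}. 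Applying linearity of expectation to the resulting (integer) levels then preserves the invariants in expectation, and the approximation bounds for $\E[\kest(i)]$ follow. The low out-degree ordering bound is a corollary: the ordering in~\cref{line:ordering} orients each edge from lower to higher level, and the per-level induced up-degree invariant directly bounds the maximum out-degree by $(2+\coren)\,\degen$ in expectation.

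The main obstacle will be the step where I claim that the expected decision of the noisy threshold comparison coincides with the deterministic comparison. Strictly speaking, $\Pr[\nup_i + X > t]$ is not the indicator of $\nup_i > t$, since the symmetric noise near the threshold can flip the outcome. I expect to resolve this either by (i) a coupling argument that matches each noisy execution with the non-private execution run on the same graph, absorbing the symmetric discrepancies over the $\numlevels$ rounds into a zero-mean correction term, or (ii) by a direct inductive argument on $\lcur$ showing that $\E[\level(i)]$ after round $\lcur$ equals the non-private level, using symmetry of $X$ around $0$ and the monotonicity of the threshold rule in $\hnup_i$. Either route ultimately leans on the symmetry $\Pr[X = k] = \Pr[X = -k]$ to cancel the additive noise contributions in expectation, yielding the advertised $(2+\coren, 0)$-approximation.
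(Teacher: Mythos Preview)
Your high-level approach matches the paper's exactly: note $\E[X]=0$ so $\E[\hnup_i]=\nup_i$, and then defer to the non-private analysis of~\cite{LSYDS22}. The paper's inline proof is two sentences to this effect, with a longer version in Appendix~\ref{app:ledp-approx} that simply re-runs the LSYDS22 peeling/pruning argument with the blanket caveat ``all expressions are given in expectation,'' i.e., it replays the deterministic argument using the noise-free Invariants~\ref{inv:degree-1} and~\ref{inv:degree-2} (without the $\cl$ additive term) and derives the bounds $\core(i)\le \kest(i)\le \upexpold^2\core(i)$ directly.

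Where you diverge is in the proposed fixes for the threshold step. Route~(ii) does not work: $\E[\level(i)]$ is \emph{not} equal to the non-private level in general, because $\E[\mathbb{1}[\nup_i+X>t]]=\Pr[X>t-\nup_i]$ is a smooth function of $\nup_i-t$, not the indicator $\mathbb{1}[\nup_i>t]$, and this discrepancy compounds across rounds since later $\nup_i$ values depend on earlier (random) level assignments. Symmetry of $X$ only gives you $\Pr[X>k]=\Pr[X<-k]$, which does not cancel these contributions. Route~(i) is too vague as stated; a coupling that makes noisy and deterministic levels coincide pathwise would contradict the randomness, and a distributional coupling would need to track the accumulated bias, which is exactly the quantity you're trying to show vanishes.

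The paper sidesteps this by not attempting to show that expected levels match the deterministic ones. Instead it asserts that the \emph{invariants} (the per-level upper and lower degree bounds) hold ``in expectation'' and reproves the approximation from those. This is itself an informal reading of ``in expectation,'' but it is the intended argument: you should directly reprove the two approximation inequalities via the peeling and pruning arguments of~\cite{LSYDS22}, taking the noise-free invariants as the expected behavior, rather than trying to match the trajectory of the algorithm to the deterministic one.
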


\begin{proof}
First, for any random variable 
$X\sim \noise$, the expectation $\expect[X] = 0$ by definition of the symmetric geometric distribution.
Thus, $\expect[\hnup_i] = \nup_i$ for all $i$ and the lemma follows
from a slightly modified version of the proof of Lemma 5.13 of~\cite{LSYDS22} given in~\cref{app:ledp-approx}.
\end{proof}

In the remainder of this section, we show
that we also obtain an additive error
of $O\left(\frac{\log^3 n}{\eps}\right)$ where the error is bounded by $O\left(\frac{\log^3 n}{\eps}\right)$ \whp{}. 

First, we show that~\cref{alg:insert} maintains 
the following two invariants (which we prove in \cref{lem:invariants-hold}).
These invariants are crucial in both our bounds on the approximation factor as
well as our message complexity analysis.
For the following two invariants, let $Z_{\lcur}$ be the
set of nodes in levels $\lcur$ and
above and $V_{\lcur}$ be the set of nodes in level $\lcur$. Recall as before
that the bottommost level is level $0$ and the topmost level is $\numlevels - 1$.
All node levels are taken from $L_{\numlevels-1}$ in the
final computation of approximate core numbers.

\begin{invariant}[Degree Upper Bound]\label{inv:degree-1}
    If node $i \in V_{\lcur}$ and level $\lcur < \numlevels - 1$, then $i$
    has at most $\upexp^{\gn(\lcur)} + \approxfac + 1$
    neighbors in $Z_{\lcur}$ \whp{}.
\end{invariant}

\begin{invariant}[Degree Lower Bound]\label{inv:degree-2}
    If node $i \in V_{\lcur}$ and level $\lcur > 0$, then
    $i$ has at least $(1 + \lf)^{\gn(\lcur-1)} - \approxfac - 1$ neighbors in $Z_{\lcur -
    1}$ \whp{}.
\end{invariant}
Before we prove our algorithm maintains
these invariants, we first prove the following lemma for each
node $i$ that moves up a level.

\begin{lemma}\label{lem:node-move}
    Let $\nup_i$ be the number of neighbors of node $i$ in level $\lcur$ where the levels of nodes are 
    taken from $L_\lcur$.
    Each node $i$ that moves \emph{up} one level (from level $\lcur$ to level
    $\lcur + 1$) in~\cref{alg:insert}
    has $\nup_i \geq \upexp^{\gn(\lcur)} - \cl - 1$ \whp.
\end{lemma}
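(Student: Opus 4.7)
The plan is to chain the move-up condition with the high-probability bound on the geometric noise. Recall that node $i$ moves up from level $\lcur$ to $\lcur+1$ precisely when the released bit equals $1$, which in \cref{line:move-up-condition} corresponds to $\hnup_i > \upexp^{\gn(\lcur)}$, where $\hnup_i = \nup_i + X$ for $X \sim \geom(\nparam)$ with $\nparam = \eps/(8\log^2 n)$. So the entire argument reduces to controlling how negative $X$ can be.

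First, I would invoke \cref{lem:noise-whp-bound} (the tail bound for the symmetric geometric distribution) with parameter $x = \eps/(8 \log^2 n)$ to conclude that, with probability at least $1 - 1/n^c$, we have
\[
|X| \leq \frac{c \ln n}{\eps/(8 \log^2 n)} = \frac{8c \log^3 n}{\eps} = \cl.
\]
In particular, $X \geq -\cl$ on this high-probability event.

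Second, on that event I would rearrange the move-up condition: since $\nup_i + X = \hnup_i > \upexp^{\gn(\lcur)}$, we obtain
\[
\nup_i \;>\; \upexp^{\gn(\lcur)} - X \;\geq\; \upexp^{\gn(\lcur)} - \cl.
\]
Because $\nup_i$ is an integer while $\upexp^{\gn(\lcur)} = (1+\lf)^{\gn(\lcur)}$ need not be, passing from the strict real inequality to the claimed integer inequality costs at most an additional $1$, yielding $\nup_i \geq \upexp^{\gn(\lcur)} - \cl - 1$. This is the stated bound.

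I do not anticipate a real obstacle here; the only subtlety is the integrality step at the end, which is handled by the $-1$ slack built into the statement. The lemma is essentially a one-line consequence of the geometric tail bound combined with the deterministic move-up threshold, and its role later will be to propagate into the invariants (\cref{inv:degree-1,inv:degree-2}) via a union bound over at most $n \cdot \numlevels$ such events, whose overall failure probability is still $\poly(n)/n^c$ and thus high probability for $c$ chosen large enough.
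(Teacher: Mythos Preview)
Your proposal is correct and follows essentially the same reasoning as the paper, which phrases it as a contradiction but uses the identical tail-bound substance from \cref{lem:noise-whp-bound}. One small slip worth fixing: the direction of the noise bound you actually need is $X \leq \cl$ (so that $-X \geq -\cl$ and hence $\upexp^{\gn(\lcur)} - X \geq \upexp^{\gn(\lcur)} - \cl$), not $X \geq -\cl$; both follow from $|X| \leq \cl$, so your displayed chain of inequalities is still valid.
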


\begin{proof}
    Suppose for contradiction a node $i$ moves up one level (from level $\lcur$
    to level $\lcur + 1$) and has $\nup_i < \upexp^{\gn(\lcur)} - \cl - 1$ with
    probability $> \probfactor$. %
    This means
    that it sampled a noise $X \sim \noise$, where
    $X > \cl$ with probability $> \probfactor$. This is because
    $\nup_i + X$ must exceed the $\upexp^{\gn(\lcur)}$ threshold
    in order to move up. The
    probability that $i$ sampled noise $X > \cl$ from
    $\noise$ is $< \frac{1}{n^c}$
    by~\cref{lem:noise-whp-bound}.
    This is a contradiction so $X >
    \cl$ could \emph{not} have been sampled with probability greater than
    $\frac{1}{n^c}$.%
\end{proof}

On the flip side, we can also bound the degree values of nodes that
\emph{do not move}.

\begin{lemma}\label{lem:not-moving}
    Let $\nup_i$ be as defined in~\cref{lem:node-move}.
    Each node $i$ that does not move up from a level $\lcur$ in~\cref{alg:insert} after
    computing $\nup_i$ (where the levels of nodes used
    to compute $\nup_i$ are 
    taken from $L_\lcur$) has
    $\nup_i \leq \upexp^{\gn(\lcur)} + \cl + 1$ \whp{}.
\end{lemma}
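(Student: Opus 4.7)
The plan is to prove Lemma~\ref{lem:not-moving} by contradiction, mirroring the structure of the proof of~\cref{lem:node-move}. The key observation is that the algorithm's decision to move up is driven entirely by the noisy quantity $\hnup_i = \nup_i + X$, so a large true value $\nup_i$ together with staying put forces the noise $X$ to be very negative, which is a rare event under $\noise$.

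Specifically, suppose for contradiction that some node $i$ does not move up from level $\lcur$ yet has $\nup_i > \upexp^{\gn(\lcur)} + \cl + 1$ with probability greater than $\probfactor$. Since $i$ did not move, the test in~\cref{line:move-up-condition} failed, i.e., $\hnup_i \leq \upexp^{\gn(\lcur)}$. Substituting $\hnup_i = \nup_i + X$ and rearranging gives
\[
X \leq \upexp^{\gn(\lcur)} - \nup_i < -\cl - 1,
\]
so in particular $|X| > \cl$.

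Now I invoke~\cref{lem:noise-whp-bound} on the draw $X \sim \noise = \geom(\nparam)$ with $\nparam = \eps/(8\log^2 n)$. That lemma bounds $|X| \leq \frac{c\ln n}{\nparam} = \frac{8c \log^2 n \cdot \ln n}{\eps}$, which matches the definition of $\cl = \approxfac$ up to the constant factor choice; thus $\pp{|X| > \cl} \leq \probfactor$. This contradicts the assumption that the bad event $\nup_i > \upexp^{\gn(\lcur)} + \cl + 1$ happens with probability exceeding $\probfactor$, so the claimed bound $\nup_i \leq \upexp^{\gn(\lcur)} + \cl + 1$ holds with high probability. There is no real obstacle here beyond matching constants; the proof is a direct tail-bound argument symmetric to~\cref{lem:node-move}, and the only thing to check carefully is the direction of the inequality defining ``does not move up'' versus the sign of the noise.
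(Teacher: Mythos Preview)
Your proof is correct and follows essentially the same approach as the paper: argue by contradiction that failing to move up forces the geometric noise $X$ to be more negative than $-\cl$, then apply~\cref{lem:noise-whp-bound} to bound that probability by $\probfactor$. The paper's argument is identical in structure and differs only in wording.
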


\begin{proof}
    As in the proof of~\cref{lem:node-move}, we prove this lemma via
    contradiction. First, suppose for contradiction
    that $i$ is currently in 
    level $\lcur$ and does not move up, and $\nup_i
    > \upexp^{\gn(\lcur)} + \cl + 1$
    with probability $> \probfactor$.
    Node $i$ does not move up when $\nup_i >
    \upexp^{\gn(\lcur)} + \cl + 1$ if it chooses a noise $X \sim \geom(\nparam)$
    with value $X < -\cl$. The same calculation as that
    performed in the proof of~\cref{lem:node-move} shows that with probability
    $< 1/n^c$, node $i$ does not move up if $\nup_i >
    \upexp^{\gn(\lcur)} + \cl + 1$. This contradicts our assumption
    that $\nup_i > \upexp^{\gn(\lcur)} + \cl + 1$ and does not move up with
    probability $> \probfactor$ since we showed that this occurs with
    probability at most $\probfactor$.
\end{proof}

We use \cref{lem:node-move,lem:not-moving} to prove the following lemma, which
shows that~\cref{inv:degree-1} and~\cref{inv:degree-2} are 
maintained by our algorithm. 

\begin{lemma}\label{lem:invariants-hold}
    Given an input graph $G = ([n], E)$, \cref{inv:degree-1} and~\cref{inv:degree-2} are satisfied at the end of~\cref{alg:insert} \whp.
\end{lemma}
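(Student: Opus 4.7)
The plan is to derive both invariants from \cref{lem:node-move,lem:not-moving} by exploiting two structural properties of \cref{alg:insert}: the outer loop processes levels in the fixed order $\lcur = 0, 1, \ldots, \numlevels - 2$, and no node's level ever decreases. As a consequence, the maximum level attained by any node after iterations $0, \ldots, \lcur - 1$ is exactly $\lcur$, so when iteration $\lcur$ reads $L_\lcur$ we have $L_\lcur[j] \in \{0, 1, \ldots, \lcur\}$ for every node $j$; and once iteration $\lcur$ completes, any node still strictly below level $\lcur$ is never touched again. Taken together, these observations yield the identification I will use throughout: for a node $i$ with final level $\lcur$ and any neighbor $j$, the neighbor $j$ lies in the final $Z_\lcur$ if and only if $L_\lcur[j] = \lcur$, and it lies in the final $Z_{\lcur - 1}$ if and only if $L_{\lcur - 1}[j] = \lcur - 1$. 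In particular, the quantity $\nup_i$ computed by the algorithm in iteration $\lcur$ (resp.\ iteration $\lcur - 1$) is \emph{exactly} the number of $i$'s neighbors in the final $Z_\lcur$ (resp.\ $Z_{\lcur - 1}$).

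With this bridge in place, the two invariants become direct consequences of the helper lemmas. For \cref{inv:degree-1}: if $i$ has final level $\lcur < \numlevels - 1$, then $i$ did not move up during iteration $\lcur$, so \cref{lem:not-moving} gives $\nup_i \leq \upexp^{\gn(\lcur)} + \cl + 1$, which is precisely the claimed upper bound on $|N(i) \cap Z_\lcur|$ using $\upexp = 1 + \lf$ and $\cl = \approxfac$. For \cref{inv:degree-2}: if $i$ has final level $\lcur > 0$, then $i$ must have been promoted from level $\lcur - 1$ to level $\lcur$ in iteration $\lcur - 1$, and \cref{lem:node-move} gives $\nup_i \geq \upexp^{\gn(\lcur - 1)} - \cl - 1$, matching the required lower bound on $|N(i) \cap Z_{\lcur - 1}|$.

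The only care needed is with the \whp quantifier, since \cref{lem:node-move,lem:not-moving} are only asserted for a single node at a single iteration. I would apply each lemma with a slightly larger constant $c' > c + 1$ inside the noise tail bound of \cref{lem:noise-whp-bound}, and then union-bound over the at most $n \cdot \numlevels = O(n \log^2 n)$ (node, iteration) pairs arising in the run; this costs only a constant-factor inflation in the polylog term hidden in $\cl$ and leaves the overall failure probability at most $1/n^c$. The main subtle point---not a true obstacle, but the piece that must be checked carefully---is that the identification above relies on $L_\lcur$ being a frozen snapshot during iteration $\lcur$ (the algorithm writes updated levels into $L_{\lcur + 1}$ rather than mutating $L_\lcur$), so that the hypotheses of \cref{lem:node-move,lem:not-moving} concerning $\nup_i$ match the counting we want to perform against the final level vector $L_{\numlevels - 1}$.
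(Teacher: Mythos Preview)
Your proposal is correct and follows essentially the same route as the paper: both arguments reduce each invariant to the corresponding helper lemma (\cref{lem:not-moving} for \cref{inv:degree-1}, \cref{lem:node-move} for \cref{inv:degree-2}) via the structural fact that levels are monotone and a node that fails to advance at iteration $\lcur$ is frozen thereafter, then close with a union bound over all (node, level) pairs. Your identification $\nup_i = |N(i)\cap Z_\lcur|$ as an exact equality is a slight sharpening of the paper's presentation, which argues only the needed inequalities (via ``$|\adj_i\cap Z_r|$ cannot exceed $\nup_i$'' and ``$|\adj_i\cap Z_{r-1}|$ cannot decrease''), but this is a cosmetic difference rather than a distinct approach.
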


\begin{proof}
    For each node $i$ in the graph that is in the current level $\lcur$, a noise $X_i$
    is independently sampled $X_i \sim \draw$ at the beginning of that round.
    We first make a few observations for when a node $i$ is in level $0$ or $\numlevels - 1$.
    If $i$ did not move up from level $0$, then~\cref{inv:degree-2} is trivially satisfied. 
    If $i$ moves up to level $\numlevels - 1$, then~\cref{inv:degree-1} is trivially satisfied.
    By~\cref{lem:not-moving}, any $i$ which does not move to a higher level than $\lcur \in [0, \numlevels-1)$
    has $\nup_i \leq \upexp^{\gn(L_\lcur[i]))} + \cl + 1$ \whp{} in round $\lcur$.
    Furthermore, $|\adj_i \cap Z_{L_{\numlevels-1}[i]}|$ cannot be greater than
    $\nup_i$ since only nodes in level $\lcur$ can move to a higher level than $\lcur$ in round $\lcur$. 
    Thus, $|\adj_i \cap Z_{L_{\numlevels-1}[i]}| \leq \upexp^{\gn(L_\lcur[i]))} + \cl + 1$ and
    \cref{inv:degree-1} is satisfied in this case. To prove that~\cref{inv:degree-2} is also satisfied for
    any node in level $\lcur \in (0, \numlevels-1]$ at the end of the algorithm,
    first observe that because nodes only move up in our algorithm, never down, $|\adj_i \cap Z_{\lcur'-1}|$
    can also never decrease for \emph{any node that stays in $r'$} as its neighbors move up for any level $\lcur'$. Thus, when $i$ 
    moved up from $\lcur - 1$ to $\lcur$, we have $\nup_i \geq
    \upexp^{\gn(\lcur - 1)} - \cl - 1$ with probability $\geq 1 -
    \probfactor$ by~\cref{lem:node-move}. Since $|\adj_i \cap Z_{\lcur-1}|$ cannot decrease if $i$ stays in level $r$ and we know that $i$ is in level $r$ at the end of the algorithm, it is lower bounded by
    $|\adj_i \cap Z_{L_{\numlevels-1}[i]}| = |\adj_i \cap Z_{\lcur-1}| = \nup_i \geq \upexp^{\gn(\lcur - 1)} - \cl - 1$ and~\cref{inv:degree-2} is also satisfied.
    This proves that both invariants hold for all nodes in all levels with probability at least $1- \frac{1}{n^d}$ 
    for any constant $d \geq 1$ by the union bound over 
    $\numlevels$ levels and $n$ nodes when $c\geq 1$ is a sufficiently large constant. 
\end{proof}

The theorem below is the main result of this section. The proof method is
inspired by the proof of the approximation factor given in the algorithm of Liu
\etal~\cite{LSYDS22}. However, our proof is more involved because of the
added noise and requires new insights in order to obtain our additive
approximation bound. First, our algorithm is randomized. Since we move each node at most $O(\log^2 n)$ times,
we can show via the union bound that our guarantees hold with high probability. 
Second, the additive noise from~\cref{inv:degree-1} and~\cref{inv:degree-2} 
requires us to re-perform the analysis taking this noise into account.

We are now ready to prove the approximation factor of our LEDP
algorithm in this section.
The proof of~\cref{lem:approximation-k-core} uses
the following folklore lemma, the proof of which follows immediately from the
traditional peeling method which provides the exact core number of each node in a
static graph. Note that although some previous literature
use the term \emph{peel} for the below process, we instead
use the phrase \emph{prune} to distinguish from the classic
peeling algorithm of Matula and Beck~\cite{Matula83}.

\begin{lemma}[Folklore]\label{lem:folklore}
    Suppose we perform the following \emph{peeling} procedure. Provided an input
    graph $G = (V, E)$, we iteratively remove (prune) nodes with degree $\leq d^*$ for
    some $d^* > 0$ until no nodes with degree $\leq d^*$ remain. Then, $\core(i) \leq d^*$ for each removed node $v$
    and all remaining nodes $w$ (not pruned) have core number $\core(w) > d^*$.
\end{lemma}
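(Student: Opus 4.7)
The plan is to establish the two claims separately, each by a short structural argument about which nodes can possibly belong to the $(d^*+1)$-core.

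\textbf{Step 1: Remaining nodes have core number $> d^*$.} First I would observe that the stopping condition of the pruning procedure forces every node in the final remaining subgraph $H'$ to have induced degree at least $d^*+1$ in $H'$ (otherwise the procedure would not have terminated). By the very definition of the $k$-core (\cref{def:approx-core number}), the existence of such a subgraph $H'$ containing any remaining node $w$ certifies $\core(w) \geq d^*+1$, hence $\core(w) > d^*$.

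\textbf{Step 2: Removed nodes have core number $\le d^*$.} Let $K$ denote the (maximal) $(d^*+1)$-core of $G$, so every node of $K$ has induced degree $\geq d^*+1$ inside $K$. I would prove by induction on the pruning steps that no node of $K$ is ever removed. Base case: at step $0$, nothing has been removed, and in particular $K$ is intact. Inductive step: suppose $K$ is contained in the currently surviving subgraph $G_t$ before step $t+1$, and for contradiction suppose a node $v \in K$ is pruned at step $t+1$. Then $v$'s degree in $G_t$ is $\le d^*$; but since $K \subseteq G_t$, $v$'s degree in $K$ is at most its degree in $G_t$, giving degree $\le d^* < d^*+1$ in $K$, contradicting $v \in K$. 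Hence no node of $K$ is ever pruned, so every pruned node $v$ lies outside the $(d^*+1)$-core and therefore satisfies $\core(v) \le d^*$.

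\textbf{Main obstacle.} There isn't really a hard step here since the result is folklore; the only subtlety is being careful about what ``induced degree'' means at each moment of peeling. The key insight to highlight is the monotonicity: because pruning only removes nodes (never adds edges), the induced subgraph on $K$ is preserved throughout, so the $(d^*+1)$-core structure certifying $\core(v) \ge d^*+1$ is never broken by a prune that touches only vertices outside $K$. Combining Steps 1 and 2 yields the lemma.
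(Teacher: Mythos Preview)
Your proposal is correct; both steps are the standard arguments for this folklore fact. The paper itself does not give a proof at all---it merely states the lemma as folklore and remarks that it ``follows immediately from the traditional peeling method''---so there is nothing substantive to compare against, and your write-up would serve perfectly well as the omitted justification.
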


\begin{theorem}\label{lem:approximation-k-core}
    Our static LEDP algorithm (\cref{alg:insert}) on input
    $G=([n], E)$ gives
    $(2+\coren, O(\log^3 n/\eps))$-approximate
    core numbers \whp.
\end{theorem}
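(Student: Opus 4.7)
The argument will closely parallel the non-private approximation proof of Liu \etal~\cite{LSYDS22}, but with the constant additive slack in their invariants replaced by the $\cl = O(\log^3 n / \eps)$ slack that appears in \cref{inv:degree-1,inv:degree-2}. I first condition on the event, guaranteed by \cref{lem:invariants-hold}, that both invariants hold simultaneously for every node and every level; this occurs with probability at least $1 - 1/n^d$ for a suitably large constant $d$, so the final high-probability bound will follow by a union bound at the end. The remainder of the proof is deterministic on this event.

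For the \emph{multiplicative upper bound} $\kest(i) \leq (2+\coren)\cdot \core(i) + O(\log^3 n / \eps)$, I invoke the peeling characterization of \cref{lem:folklore}. Prune the nodes in increasing order of their final level $L_{\numlevels-1}[\cdot]$. When a node $i$ with $\level(i) = \lcur$ is pruned, all of its neighbors at levels strictly below $\lcur$ have already been removed, so its remaining degree equals $|N(i) \cap Z_{\lcur}|$, which by \cref{inv:degree-1} is at most $(1+\psi)^{\gn(\lcur)} + \cl + 1$. Thus $\core(i) \leq (1+\psi)^{\gn(\lcur)} + O(\log^3 n / \eps)$, and comparing against the estimator $\kest(i)$ produced by \cref{alg:estimate}, together with the choice $\lambda = \tfrac{2}{9}(2\coren - 5)$ from \cref{line:initial}, shows that $\kest(i)$ overshoots $\core(i)$ by at most the $(2+\coren)$ multiplicative factor plus an $O(\log^3 n / \eps)$ additive term.

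For the \emph{additive lower bound} $\kest(i) \geq \core(i) - O(\log^3 n / \eps)$, I use \cref{inv:degree-2} to exhibit a subgraph of high minimum degree containing $i$. Concretely, fix $\lcur^\star \leq \level(i)$ corresponding to the top level of the group used by $\kest(i)$. I argue that the induced subgraph on $Z_{\lcur^\star}$ has minimum degree at least $(1+\psi)^{\gn(\lcur^\star - 1)} - O(\log^3 n / \eps)$: every node at level $\lcur' > \lcur^\star$ contributes at least $(1+\psi)^{\gn(\lcur'-1)} - \cl - 1$ neighbors in $Z_{\lcur' - 1} \supseteq Z_{\lcur^\star}$, while nodes at exactly level $\lcur^\star$ need to be handled by shifting the threshold. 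Since the $d$-core of $G$ contains every subgraph with minimum degree $\geq d$, this certifies $\core(i) \geq (1+\psi)^{\gn(\lcur^\star - 1)} - O(\log^3 n / \eps)$, which matches $\kest(i)$ up to the additive slack and completes the bound.

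I expect the main obstacle to be the mismatch between \cref{inv:degree-2} (which promises neighbors in $Z_{\lcur - 1}$ rather than in $Z_{\lcur}$) and the induced-subgraph argument above: one extra level of slack must be absorbed when transferring the lower bound to the subgraph on $Z_{\lcur^\star}$. I plan to handle this by choosing $\lcur^\star$ one level above the top of the relevant group and verifying that $\gn(\cdot)$ changes by at most a constant across this shift, so the resulting loss can be folded into either the $(2+\coren)$ factor (via the slack between $\psi$ and $\coren$) or the $O(\log^3 n /\eps)$ additive term. Once this calibration is in place, combining the two bounds with the high-probability event of \cref{lem:invariants-hold} and a union bound over the $n$ nodes finishes the proof.
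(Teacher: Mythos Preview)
Your peeling argument via \cref{inv:degree-1} is correct and matches the paper's proof of the bound $\core(i) \le \kest(i) + O(\log^3 n/\eps)$. (Note, though, that this is the inequality $\kest(i) \ge \core(i) - O(\cdot)$, not the multiplicative upper bound on $\kest$ as you label it; you have the two directions swapped in your outline.)

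The genuine gap is in your use of \cref{inv:degree-2}. You want to exhibit a subgraph $Z_{\lcur^\star}$ of large minimum degree containing $i$, but \cref{inv:degree-2} only promises that a node at level $\lcur$ has many neighbors in $Z_{\lcur-1}$, not in $Z_\lcur$. For any node sitting at the bottom level $\lcur^\star$ of your candidate subgraph, the guaranteed neighbors live in $Z_{\lcur^\star-1} \supsetneq Z_{\lcur^\star}$ and may \emph{all} be at level $\lcur^\star-1$, outside the subgraph; its induced degree in $Z_{\lcur^\star}$ could be zero. Shifting $\lcur^\star$ by one level (or any constant number of levels) does not help: whatever level you choose as the floor, its nodes inherit the same boundary problem. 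This is not a constant-factor slack that can be folded into $(2+\coren)$ or the additive term; it is a structural mismatch, and no choice of $\lcur^\star$ yields a high-minimum-degree certificate directly from the invariants.

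The paper's proof of this direction is qualitatively different and is precisely where the $2\log n$ levels per group earn their keep. It argues by contradiction: if $\core(i)$ were below roughly $\kest(i)/((2+\lambda)(1+\psi))$, then iteratively pruning nodes of degree below that threshold must eventually remove $i$. But by \cref{inv:degree-2}, pruning $i$ (which sits at or above the top level $T(g')$ of its group) first requires pruning about $(1+\psi)^{g'}$ of its neighbors in $Z_{T(g')-1}$; each of those in turn needs many of \emph{its} neighbors in $Z_{T(g')-2}$ pruned; and so on. The number of nodes that must be pruned grows by a factor of at least $\tfrac{(2+\lambda)(1+\psi)}{2} > 1$ with each level descended, and after the $2\log n$ levels inside group $g'$ this count exceeds $n$, a contradiction. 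This cascading amplification across the levels of a single group is the missing ingredient in your plan.
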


\begin{proof}
    In this proof, when we refer to the level of a node $i$, we mean the level of $i$
    in $L_{\numlevels-1}$. Using notation from previous work,
    let $\kest(i)$ be the core number estimate of $i$ and $\core(i)$ be the
    core number of $i$. First, we show that 
    \begin{align}
        \text{if } \kest(i) \leq (2+\lambda)(1+\lf)^{g'}, \text{ then }
        \core(i) \leq (1+\lf)^{g' + 1} + \frac{8c\log^3 n}{\eps} + 1\label{eq:kcore-bound-1}
    \end{align}
    with probability at least $1 - \frac{1}{n^{c-1}}$ for any group $g'
    \geq 0$ and for any constant $c \geq 2$.
    Let $T(g')$ be the topmost level of group $g'$. In order
    for $(2+\lambda)(1+\psi)^{g'}$ to be the estimate of $i$'s core number, 
    the level of $i$ is bounded by $T(g') \leq \level(i) \leq T(g' + 1) - 1$.
    Let $\lcur$ be $i$'s level (in $L_{\numlevels - 1}$).
    By~\cref{inv:degree-1}, if $\lcur < T(g' + 1)$,
    then $|\adj_i \cap Z_{\lcur}| \leq\upexp^{\gn(\lcur)} + \cl + 1 \leq
    \upexp^{g' + 1} + \cl + 1$ with probability at least $1- \frac{1}{n^c}$ for any constant
    $c \geq 1$. Furthermore, each node $w$ at the same or lower level $\lcur' \leq \lcur$ 
    has $|\adj_w \cap Z_{\lcur'}| \leq (1+\lf)^{g' + 1} + \cl + 1$, also by~\cref{inv:degree-1}.
    
    Suppose we perform the following iterative procedure:
    starting from level $\lcur = 0$,
    remove all nodes in level $\lcur$ during this turn and set $\lcur
    \leftarrow \lcur + 1$ for the next turn. Using this procedure, the nodes in
    level $0$ are removed in the first turn, the nodes in level $1$ are
    removed in the second turn, and so on until the graph is empty. 
    Let $d_{\lcur}(i)$ be the induced degree of
    any node $i$ after the removal in the $(\lcur-1)$-st turn and 
    prior to the removal in the $\lcur$-th turn. Since we showed above
    that $|\adj_i \cap Z_{\lcur}| \leq (1+\lf)^{g' + 1} + \cl + 1$ for any node
    $i$ at level $\lcur < T(g' + 1)$, node $i$ on level $\lcur < T(g' + 1)$
    during the $\lcur$-th turn has $d_{\lcur}(v) \leq \upexp^{g' + 1} + \cl + 1$.
    Thus, when $i$ is removed in the $\lcur$-th turn, it has degree $\leq
    \upexp^{g' + 1} + \cl + 1$. Since all nodes removed before $i$
    also had degree $\leq \upexp^{g' + 1} + \cl + 1$ when they were removed,
    by~\cref{lem:folklore}, node $i$
    has core number $\core(i) \leq \upexp^{g' + 1} + \cl + 1$ with probability $\geq 1
    - \probfactor$. By the union bound over all nodes,
    this proves that $\core(i) \leq (1+\lf)^{g' + 1} + \frac{8c\log^3 n}{\eps} + 1$ for all $i \in [n]$ 
    with $\kest(i) \leq (2+\lambda)(1+\psi)^{g'}$
    for all constants $c \geq 2$ with probability at least $1 - \frac{1}{n^{c-1}}$.
    
    Now we prove our lower bound on $\kest(i)$.
    We prove that for any $g' \geq 0$, either the approximation falls within our additive bounds or,
    \begin{align}
    \text{if } \kest(i) \geq (1+\lf)^{g'}, \text{ then }
    \core(i) \geq \frac{(1+\lf)^{g'} - \cl - 1}{\upexpold}\label{eq:kcore-bound-2}
    \end{align}
    for all nodes $i$
    in the graph with probability at least $1 -
    \probfactorminusone$ for any constant $c \geq 2$.
    We assume for contradiction 
    that with probability $> \frac{1}{n^{c-1}}$
    there exists a node $i$ where $\kest(i) \geq (1+\lf)^{g'}$ and $\core(i) <
    \frac{(1+\lf)^{g'} - \cl - 1}{\upexpold}$.
    This, in turn, implies that $\core(i) \geq \frac{\downexp^{g'} - \cl -
    1}{\upexpold}$ for all $i \in [n]$ and $\kest(i) \geq (1+\lf)^{g'}$ 
    with probability $< 1 - \probfactorminusone$.
    To consider this case, we use the \emph{pruning} process defined
    in~\cref{lem:folklore}. In the below proof, let $d_S(i)$ denote the 
    induced degree of node $i$ in the subgraph induced by nodes in $S$.
    For a given subgraph $S$, we \emph{prune} $S$ 
    by repeatedly removing all nodes $i \in S$ whose $d_S(i)
    < \frac{(1+\lf)^{g'} - \cl - 1}{\upexpold}$. As in the proof of Lemma 5.12
    of~\cite{LSYDS22}, we consider levels from the same group $g'$ since levels 
    in groups lower than $g'$ will also have smaller upper bound cutoffs, leading to an easier proof. 
    Let $j$ be the number of levels below level $T(g')$. We prove via induction that the
    number of nodes pruned from the subgraph induced by $Z_{T(g') - j}$ must
    be at least
    \begin{align}
        \left(\frac{\upexpold}{2}\right)^{j-1}\left(\lbexp\right)\label{eq:pruned}.
    \end{align}

    We first prove the base case when $j = 1$. In this case, we know that for any node $i$ in level $T(g')$,
    it holds that $d_{Z_{T(g') - 1}}(i) \geq (1+\lf)^{g'} - \cl - 1$ with probability
    $\geq 1 - \probfactor$ by~\cref{inv:degree-2}. Taking the union bound over all nodes in level $T(g')$
    shows that this bound holds for all nodes $i \in [n]$ with probability at least $1 - \frac{1}{n^{c-1}}$.
    All below expressions hold with probability at least $1 - \frac{1}{n^{c-1}}$; for simplicity, we omit
    this phrase when giving these expressions.
    In order to prune $i$ from the graph, we must prune at least
    \begin{align*}
        &\left((1+\lf)^{g'} - \cl - 1\right) -
        \frac{(1+\lf)^{g'} - \cl - 1}{\upexpold}\\
        &= \left(\downexp^{g'} - \cl - 1\right) \cdot \left(1 -
            \frac{1}{\upexpold}\right).
    \end{align*}
    neighbors of $i$ from $Z_{T(g') - 1}$. We must prune at least this many neighbors
    in order to reduce the degree of $i$ to below the cutoff for pruning a vertex (as we show more formally below).
    In the case when $\downexp^{g'} \leq 4(\cl - 1)$, then our original approximation statement in the lemma
    is trivially satisfied because the core number is always non-negative and so even if $\core(i) = 0$, 
    this is still within our additive approximation bounds.
    Hence, we only need to prove the case when $\downexp^{g'} > 4\left(\cl + 1\right)$.

    Then, if fewer than $\lbexp$
    neighbors of $i$ are pruned from the graph, then $i$ is not pruned from the
    graph. If $i$ is not pruned from the graph, then $i$ is part of a $\left(
    \frac{(1+\lf)^{g'} - \cl - 1}{\upexpold}\right)$-core (by~\cref{lem:folklore}), then by what we showed above using~\cref{inv:degree-2}, $\core(i) \geq
    \frac{(1+\lf)^{g'} - \cl - 1}{\upexpold}$ for all $i \in [n]$ and $\kest(i) \geq (1+\lf)^{g'}$
    with probability $\geq 1 - \frac{1}{n^{c-1}}$,
    a contradiction with our assumption. Thus, it must be
    the case that there exists at least one $i$ where at least $\lbexp$
    neighbors of $i$ are pruned in $Z_{T(g') - 1}$.
    For our induction hypothesis, we
    assume that at least the number of nodes as indicated in~\cref{eq:pruned}
    is pruned for $j$ and prove this for $j + 1$.

    Each node $w$ in levels $T(g') - j$ and above has
    $d_{Z_{T(g') - j - 1}}(w) \geq
    (1+\lf)^{g'} - \cl - 1$ by~\cref{inv:degree-2} (recall that all $j$ levels
    below $T(g')$ are in group $g'$). For simplicity of expression,
    we denote $\lbabrv \triangleq \lbexp$.
    Then, in order to prune the
    $\left(\frac{\upexpold}{2}\right)^{j-1}\lbabrv$
    nodes by our induction hypothesis, we must prune at least
    \begin{align}
        &\left(\frac{\upexpold}{2}\right)^{j-1}\lbabrv \cdot
        \left(\frac{(1+\lf)^{g'} - \cl - 1}{2}\right)
        \label{eq:pruned-edges}
    \end{align}
    edges where we ``charge'' the edge to the endpoint that is pruned last. We use the phrase \emph{charge} to mean
    that if an edge $(u, v)$ is charged to endpoint $v$, then
    the edge needs to be pruned in order for the 
    endpoint $v$ to be pruned.
    (Note that we actually need to prune at least 
    $\left(\downexp^{g'} - \cl - 1\right) \cdot \left(1 -
    \frac{1}{\upexpold}\right)$ edges per pruned node
    as in the base case but 
    $\frac{\left(\downexp^{g'} - \cl - 1\right)}{2}$ lower bounds this amount.)
    Each pruned node prunes less than
    $\frac{(1+\lf)^{g'} - \cl - 1}{\upexpold}$
    edges. Thus, using~\cref{eq:pruned-edges}, the number of nodes
    that must be pruned from $Z_{T(g') - j - 1}$ is
    \begin{align}
        \left(\frac{\upexpold}{2}\right)^{j-1} \lbabrv \cdot
        \frac{(1+\lf)^{g'} - \cl - 1}{2\left(\frac{(1+\lf)^{g'}-\cl-1}
        {\upexpold}\right)} = \left(\frac{\upexpold}{2}\right)^j \lbabrv.
        \label{eq:induction}
    \end{align}
    \cref{eq:induction} proves our induction step.
    Using~\cref{eq:pruned}, the number of nodes that must be pruned from
    $Z_{T(g') - 2\log_{\upexpold/2}\left(n\right)}$ is
    greater than $n$ since $J \geq 1$ by our assumption that $\downexp^{g'} > 4\left(\cl + 1\right)$:
    \begin{align}
        \left(\frac{\upexpold}{2}\right)^{2\log_{\upexpold/2}\left(n\right)} \cdot \lbabrv \geq n^2.
        \label{eq:final-eq}
    \end{align}
    Thus, at $j =
    2\log_{\upexpold/2}\left(n\right)$, we run out of
    nodes to prune. We have reached a contradiction as we require pruning greater than 
    $n$ nodes with probability at least $\frac{1}{n^{c-1}} \cdot \left(1- \probfactorminusone\right) > 0$ 
    via the union bound over all nodes where $\kest(i) \geq (1+\lf)^{g'}$ and using our assumption that
    with probability $> \frac{1}{n^{c-1}}$ there exists an $i \in [n]$ 
    where $\core(i) < \frac{(1+\lf)^{g'} - \cl - 1}{\upexpold}$. 
    This contradicts with the fact that
    more than $n$ nodes can be pruned with $0$ probability.
    
    From~\cref{eq:kcore-bound-1}, we can first obtain the inequality $\core(i) \leq \kest(i) + \cl + 1$ since
    this bounds the case when $\kest(i) = (2+\lambda)(1+\psi)^{g'}$; if $\kest(i) < (2+\lambda)(1+\psi)^{g'}$
    then the largest possible value for $\kest(i)$ is $(2+\lambda)(1+\psi)^{g'-1}$ by~\cref{alg:estimate} and
    we can obtain the tighter bound of $\core(i) \leq (1+\psi)^{g'} + \frac{8c\log^3 n}{\eps} + 1$. 
    We can substitute $\kest(i) = 
    (2+\lambda)(1+\psi)^{g'}$ since $(1+\psi)^{g' + 1} < (2+\lambda)(1+\psi)^{g'}$ for all $\psi \in (0, 1)$ and 
    $\lambda > 0$. Second, from~\cref{eq:kcore-bound-2},
    for any estimate $(2+\lambda)(1+\psi)^{g}$, the largest $g'$ for which this estimate
    has $(1+\psi)^{g'}$ as a lower bound is $g' = g + \floor{\log_{(1+\psi)}(2+\lambda)} \geq g 
    + \log_{(1+\psi)}(2+\lambda) -1$. Substituting this $g'$ into $\frac{(1+\lf)^{g'} - \cl - 1}{\upexpold}$ 
    results in 
    \begin{align*}
        \frac{(1+\lf)^{g + \log_{(1+\psi)}(2+\lambda) -1} - \cl - 1}{\upexpold} = \frac{\frac{(2+\lambda)(1+\lf)^{g}}{1+\lf} - \cl - 1}{\upexpold} = \frac{\frac{\kest(i)}{1+\lf} - \cl - 1}{\upexpold}.
    \end{align*}
    Thus,
    we can solve $\core(i) \geq \frac{\frac{\kest(i)}{1+\lf}-\cl - 1}{(2+\lambda)(1+\lf)}$
    and $\core(i) \leq \kest(i) + \cl + 1$ to obtain
    $\core(i) - \cl - 1 \leq \kest(i) \leq
    (\upexpold^2)\core(i) + \frac{8c(1+\psi)\log^3 n}{\eps} + (1+\psi)$. Simplifying, we obtain
    \begin{align*}
        \core(i) - &O(\onoise) \leq \kest(i) \leq 
        \upexpold^2 \core(i) + O(\onoise)
    \end{align*}
    which is consistent with the definition of
    a $(2+\const, O(\onoise))$-approximation algorithm
    for core number for any constant $\coren > 0$ and appropriately
    chosen constants $\lambda, \lf \in (0, 1)$ that depend on $\coren$.
\end{proof}

}

\conffull{}{
\subsection{Reducing the Number of Rounds}\label{sec:reducing-rounds}
In this section, we describe how to reduce the number of rounds of our procedure given in~\cref{alg:insert} to $O(\log n)$.
Recall that the previous algorithm given in~\cref{alg:insert} consists of $O(\log n)$ groups each with a set of $O(\log n)$
levels that are stacked on top of one another. 
The intuition behind our modified procedure lies in separating out the groups in our level data structure into separate 
duplicate copies of the structure, each with $O(\log n)$ levels.
Then, the procedure moves each node \emph{simultaneously} up the levels in all groups. This increases the 
node communication complexity by a factor of $O(\log n)$. A similar procedure to the above is given for 
our densest subgraph algorithm in~\cref{alg:ledp-densest}. We describe our modified algorithm in
\cref{alg:separate-groups}.

\SetKwFunction{FnNewCoreNumber}{EstimateSmallRoundsCoreNumbers}
\SetKwFunction{FnNewCoreDecomp}{LEDPSmallRoundsCoreDecomposition}
\begin{algorithm}[!t]
\conffull{
\small}{}
    \textbf{Input:} Adjacency lists $(\adj_1, \dots, \adj_n)$, constant $\eta \in (0, 1)$, and
    privacy parameter $\eps \in (0, 1)$.\\ 
    \textbf{Output:} $\eps$-LEDP $\left(2+\coren, O\left(\frac{\log^3 n}{\eps}\right)\right)$-approximate core numbers of each node in $G$.\\
    \Fn{\FnNewCoreDecomp{$(\adj_1, \dots, \adj_n), \eps, \eta$}}{
        Set $\psi = 0.5$ and $\lambda = \frac{2}{9} (2 \coren - 5)$.\\ %
        Curator initializes $L^0_0, L^0_1, \dots, L^0_{\numgrouplevels - 1}, \dots, L^{2\log n - 1}_0, \dots, 
        L^{2\log n - 1}_{\numgrouplevels -1}$ with $L^g_\lcur[i] \leftarrow 0$ for every $i \in [n]; \lcur, g \in [0, \dots, \numgrouplevels - 1]$.\label{lkcsr-line:initial}\\
        \For{$\lcur = 0$ to $\numgrouplevels - 2$}{\label{lkcsr-line:iterate}
            \For{$i = 1$ to $n$}{\label{lkcsr-line:node}
                Initialize $A_i[g] \leftarrow 0$ for every $g \in [0, \dots, \numgrouplevels - 1]$.\label{lkcsr-line:level-array}\\
                \For{$g = 0$ to $\numgrouplevels - 1$}{\label{lkcsr-line:group-iterate}
                    $L^g_{\lcur + 1}[i] \leftarrow L^g_{\lcur}[i]$.\label{lkcsr-line:old-level}\\
                    \If{$L^g_\lcur[i] = \lcur$}{\label{lkcsr-line:level-cur-level}
                        Let $\nup_{i, g}$ be the number of neighbors $j \in \adj_i$ where $L^g_\lcur[j] =
                        \lcur$.\label{lkcsr-line:compute-up}\\
                        Sample $X \leftarrow \geom(\nparam)$.\label{lkcsr-line:sample-noise}\\
                        Compute $\hnup_{i, g} \leftarrow \nup_{i, g} + X$.\label{lkcsr-line:compute-noisy-up}\\
                        \If{$\hnup_{i, g} > \upexp^{g}$}{\label{lkcsr-line:move-up-condition}
                            $A_i[g] \leftarrow 1$.\label{lkcsr-line:group-update}\\
                        }
                    }
                }
                $i$ \release $A_i$.\label{lkcsr-line:release-ai}\\
                $L^g_{\lcur + 1}[i] \leftarrow L^g_{\lcur}[i] + 1$ for every $i, g$ where $A_i[g] = 1$.\label{lkcsr-line:update-list} \Comment{Curator moves $i$ up in group $g$.}
            }
            
            Curator publishes $L^g_{\lcur + 1}$ for every $g \in [0, \dots, \numgrouplevels-1]$.\label{lkcsr-line:levels}\\
        }
        Curator calls $C \leftarrow \FnNewCoreNumber(L^0_{\numgrouplevels-1}, \dots, L^{2\log n - 1}_{\numgrouplevels -1}, \lambda, \psi)$ [\cref{alg:small-rounds-estimate}].\label{lkcsr-line:core-number-estimate}\\
        \Return $C$.\label{lkcsr-line:return-noisy-density}
        \caption{\label{alg:separate-groups} $\eps$-LEDP $k$-Core Decomposition (Smaller Rounds)}
    }
\end{algorithm}

\begin{algorithm}[!t]
\conffull{
\small}{}
    \Fn{\FnNewCoreNumber{$L^0_{\numgrouplevels -1}, \dots, L^{2\log n - 1}_{\numgrouplevels -1}, \lambda, \lf$}}{
        \For{$i = 1$ to $n$}{
            Let $g'$ be the largest group number where $L^{g'}_{\numgrouplevels-1}[i] = \numgrouplevels-1$ or $0$ if 
            no group satisfies this condition.\label{newcoren:max-g}\\
            $\kest(i)\leftarrow\upexpold^{g'}$.\label{newcoren:compute-estimate}
        }
        \return $\{(i, \kest(i)) : i \in [n]\}$.\label{newcoren:return-estimate}
	}
    \caption{\label{alg:small-rounds-estimate} Small Rounds Estimate Core Number}
\end{algorithm}

In \cref{alg:separate-groups}, the curator first initializes a list for each level and group (\cref{lkcsr-line:initial}) which contains
the levels of each node in each of the groups (\cref{lkcsr-line:level-cur-level}) with cutoffs determined by the group number. 
Then, we iterate level by level (\cref{lkcsr-line:iterate}) for each of the $O(\log n)$ groups simultaneously.
For each level $r$ in the iteration, each node $i$ (\cref{lkcsr-line:node}) initializes an array $A_i$ which keeps track of 
whether $i$ moves up a level in group $g$ (\cref{lkcsr-line:level-array}). As before, the curator first sets the next levels of 
each node in each of the groups to equal the current level of the node (\cref{lkcsr-line:old-level}).
If $i$ is in the current level $\lcur$ of the iteration in group $g$ (\cref{lkcsr-line:level-cur-level}), then $i$ computes
its induced degree among its neighbors also in level $r$ using the published levels 
from the previous iteration (\cref{lkcsr-line:compute-up}).
Then, it samples a noise from the symmetric geometric distribution (\cref{lkcsr-line:sample-noise}).
If the induced degree plus the sampled noise exceeds the threshold for the group (\cref{lkcsr-line:move-up-condition}), then
$i$ adds a $1$ for the corresponding group $g$ into $A_i$, 
indicating that $i$ moves up a level in group $g$ (\cref{lkcsr-line:group-update}).
After performing this check for all of the groups, $i$ then releases $A_i$ (\cref{lkcsr-line:group-update}).
The curator then uses the released $A_i$ to determine if $i$ moves up one level in each group (\cref{lkcsr-line:update-list}) 
and publishes the list of new levels for the next iteration (\cref{lkcsr-line:levels}). After all the rounds are completed,
the curator uses all of the published levels for the \emph{last iteration} for each of the groups to 
compute the core number estimates (\cref{lkcsr-line:core-number-estimate}) using~\cref{alg:small-rounds-estimate}.
In~\cref{alg:small-rounds-estimate}, each $i$ computes the largest $g'$ where $L_{\numgrouplevels-1}^{g'}[i] = \numgrouplevels-1$
(\cref{newcoren:max-g}); in other words, each $i$ finds the largest $g'$ where $i$ is on the topmost level of group $g'$
after finishing all the iterations of the algorithm. Using $g'$, node $i$ computes 
the estimate $\hat{k}(i)$ as before (\cref{newcoren:compute-estimate}). The core estimates for all nodes are then returned
by the procedure (\cref{newcoren:return-estimate}). 

The privacy of~\cref{alg:separate-groups} follows using the same
local randomizers as in the proof of~\cref{lem:incremental-LEDP}.
The proof of the approximation factor follows almost identically from the 
proof of~\cref{lem:approximation-k-core} using the modified invariants below (changes are underline). 
Since only minor changes to the approximation proofs
are necessary, we do not repeat the proofs again here.
Let $V^g_{\lcur}$ be the set of nodes $i$ where $L^g_{\numgrouplevels-1}[i] = \numgrouplevels-1$
and $Z^g_{\lcur} = \bigcup_{r' \geq \lcur}V^g_{\lcur'}$.

\begin{invariant}[Degree Upper Bound]\label{inv:degree-sr-1}
    \underline{For all groups $g \in \{0, \dots, \numgrouplevels-1\}$}, 
    if node $i \in \underline{V^g_{\lcur}}$ and level $\lcur < \underline{\numgrouplevels} - 1$, then $i$
    has at most $\underline{\upexp^{g}} + \approxfac + 1$
    neighbors in \underline{$Z^g_{\lcur}$} \whp{}.
\end{invariant}

\begin{invariant}[Degree Lower Bound]\label{inv:degree-sr-2}
    \underline{For all groups $g \in \{0, \dots, \numgrouplevels-1\}$}, if node $i \in \underline{V^g_{\lcur}}$ 
    and level $\lcur > 0$, then
    $i$ has at least $\underline{(1 + \lf)^{g}} - \approxfac - 1$ neighbors in \underline{$Z^g_{\lcur -
    1}$} \whp{}.
\end{invariant}

\subsection{Low Out-Degree Ordering and Densest Subgraph}\label{app:other-graph-quants}
In this section, using~\cref{alg:insert},
we provide results for low out-degree ordering of the nodes in the graph.
The procedure that outputs the low out-degree ordering is provided in~\cref{line:ordering} of~\cref{alg:insert}. 
The ordering that we study in this paper is a private, approximate degeneracy ordering of the nodes.
Recall in~\cref{alg:insert}, we calculate the approximate low out-degree ordering, 
$D$, by ordering the nodes starting from level $0$ and going up. 
For nodes in the same level, we order the nodes according to their index. 
In other words, node $i$ is earlier in the ordering
than $j$ if and only if $L_{\numlevels-1}(i) < L_{\numlevels-1}(j)$ or $L_{\numlevels-1}(i) = L_{\numlevels-1}(j)$
and $i < j$.

We now prove that our algorithm is also a
$(2 + \const, O(\onoise))$-approximation algorithm for low
out-degree ordering.}

\conffull{}{
\begin{lemma}\label{lem:approximation-outdeg}
    Our static LEDP algorithm (\cref{alg:insert}) gives
    an $(2+\coren, O(\onoise))$-approximate
    low out-degree ordering with probability at least
    $1 - \probfactor$ for any constants $\const > 0, c \geq 1$.
\end{lemma}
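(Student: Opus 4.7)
The plan is to bound the out-degree of each node via \cref{inv:degree-1}, then translate that bound into one expressed in terms of $\degen$ by recycling the core-number lower bound already established inside the proof of~\cref{lem:approximation-k-core}. Since the ordering $D$ produced in~\cref{line:ordering} of~\cref{alg:insert} sorts nodes by their final published level $L_{\numlevels-1}$ with ties broken by node index, every out-neighbor of a fixed $v$ lies in $Z_{\level(v)}$, so the crucial reduction is $\text{out-deg}_D(v)\leq|N(v)\cap Z_{\level(v)}|$.

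Applying~\cref{inv:degree-1} to every $v$ with $\level(v)<\numlevels-1$ bounds this quantity by $(1+\lf)^{\gn(\level(v))}+\approxfac+1$ with high probability; union-bounding over the $n$ nodes gives the bound simultaneously with probability at least $1-1/n^{c-1}$. Nodes parked at the very topmost level would require a separate small argument, but such nodes can only exist when the pruning analysis inside~\cref{lem:approximation-k-core} already forces $\degen$ to be $\Omega((1+\lf)^{\gn(\numlevels-1)}/\upexpold)$, after which the trivial bound on $|N(v)\cap Z_{\numlevels-1}|$ fits comfortably inside the target.

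The translation from $(1+\lf)^{\gn(\level(v))}$ to $\degen$ is the heart of the argument, and it is exactly what the lower-bound half of~\cref{lem:approximation-k-core} provides: whenever $\kest(v)\geq(1+\lf)^{g'}$, that proof shows $\core(v)\geq((1+\lf)^{g'}-\cl-1)/\upexpold$ with high probability. Because~\cref{alg:estimate} constructs $\kest(v)$ so that $\kest(v)\geq(1+\lf)^{\gn(\level(v))}$ up to a constant offset absorbed in the additive error, rearranging yields $(1+\lf)^{\gn(\level(v))}\leq(2+\lambda)(1+\lf)\degen+O(\log^3 n/\eps)$. Combined with the reduction above, this gives $\text{out-deg}_D(v)\leq(2+\lambda)(1+\lf)\degen+O(\log^3 n/\eps)$, and the parameter choice $\lambda=\tfrac{2}{9}(2\coren-5)$, $\lf=0.5$ already made in~\cref{alg:insert} pushes the multiplicative factor to $2+\coren$, as required.

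The principal obstacle I foresee is bookkeeping rather than a new idea: the $\kest\to\core$ inequality must be invoked in the ``lower bound'' direction without re-deriving the pruning argument, and the additive $O(\log^3 n/\eps)$ contributions accumulated at each step (from~\cref{inv:degree-1}, from the offset between $\kest$ and its lower-bound surrogate, and from multiplication by $(2+\lambda)(1+\lf)$) must be shown to collapse back to $O(\log^3 n/\eps)$ overall. Handling the topmost-level and bottommost-group edge cases cleanly, while preserving the high-probability union bound across all $n$ nodes, is the other subtlety, but both resolve through the same additive slack once the constants are organized.
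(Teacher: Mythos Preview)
Your proposal is correct and follows essentially the same route as the paper: bound each node's out-degree by $|N(v)\cap Z_{\level(v)}|$ via \cref{inv:degree-1}, then convert the group-index bound into a bound in terms of $\degen$ using the lower-bound direction of the core-number analysis in \cref{lem:approximation-k-core}. Your direct invocation of \cref{eq:kcore-bound-2} is slightly more streamlined than the paper's detour through the final $\kest$ bound (yielding a nominally smaller constant $(2+\lambda)(1+\lf)$ versus the paper's $(2+\lambda)(1+\lf)^3$), and you are more explicit than the paper about the topmost-level edge case, but the skeleton of the argument is the same.
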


\begin{proof}
    By~\cref{inv:degree-1}, if we order the nodes as described
    in~\cref{alg:insert}, then each node $i$ has degree at most $\upexp^{\gn(L_{\numlevels-1}[i])} + O(\onoise)$.
    By~\cref{lem:approximation-k-core}, the approximation on $\degen = \max_{i \in [n]} \{\core(i)\}$
    is upper bounded by $(2+\lambda)(1+\psi)^2 d + O\left(\onoise\right)$
    Then, the highest group that any node $j$ can be
    in has index $$g = \log_{\downexp}((2+\lambda)(1+\psi)^2 \degen + O(\onoise)) + 1,$$
    where the additional $1$ at the end comes from the fact that we take the maximum
    group number $g$ where the topmost level of $g$ is below or at $L_{\numlevels-1}[j]$.
    Within group $g$, the out-degree of any node in that group is upper
    bounded by $\upexp^{g} + O(\onoise)$ with high probability by~\cref{inv:degree-1}. 
    
    Simplifying by substituting the value for $g$,
    \begin{align*}
        &\upexp^{\log_{\downexp}\left((2+\lambda)\upexp^2 \degen + O(\onoise)\right) + 1} + O(\onoise)\\ 
        &= (1+\psi)\left((2 + \lambda)\upexp^2 \degen + O(\onoise)\right) + O(\onoise)\\
        &= (2+\lambda)(1+\psi)^3 \degen + O(\onoise).
    \end{align*}
    Finally, the last expression gives a $(2+\coren)\degen + O(\onoise)$ approximation for any constant $\coren > 0$
    for appropriately chosen $\lambda$ and $\psi$.
    This gives a $(2+\const, O(\log^3
    n/\eps))$-approximation for the orientation produced
    by the low out-degree ordering.
\end{proof}

}

\paragraph{Densest Subgraph}
A straightforward extension of~\cref{alg:insert} where nodes move up levels in all groups simultaneously
also leads to a $\eps$-LEDP approximate densest subgraph algorithm which is derived from the non-private
algorithm of~\cite{BHNT15} that finds
an approximate densest subgraph by peeling the layers of the level data structure 
one by one for each group $g$ and taking the subgraph with largest density.
\conffull{}{
The privacy for the densest subgraph algorithm holds by~\cref{lem:incremental-LEDP} and because
privacy is maintained after post-processing (\cref{thm:post-processing}).} Our algorithm is 
given in~\cref{alg:ledp-densest} and the proof of our approximation bound follows from the proofs
of Theorem 2.6 and Corollary 2.7 of Bhattacharya \etal~\cite{BHNT15} with minor modifications.

First, we describe a few key points of~\cref{alg:ledp-densest}. The algorithm operates over 
$O(\log n)$ rounds (\cref{lds-line:iterate}) where in each round, each node (\cref{lds-line:node})
computes its noisy degrees for each of the $O(\log n)$ groups (\cref{lds-line:group-iterate}).
For each group, each node $i$ computes a noisy degree (if it is in the current level $\lcur$)
and releases this noisy degree (\cref{lds-line:release-noisy-up}); this is in contrast to~\cref{alg:insert},
where $i$ releases either $1$ or $0$. The noisy degree in this setting is used to compute the 
approximate density. Finally,~\cref{lds-line:compute-subgraph} is performed by the curator who 
has all of the released degrees of every node $i \in [n]$ in each of the $O(\log n)$ levels and $O(\log n)$
groups. Thus, the curator can successively peel levels from the smallest to largest level while
computing the sum of the noisy degrees of all nodes that remain after the most recently peeled level.
Using this sum of noisy degrees as well as the public set of levels of each node, the curator can
produce a subset of nodes whose induced subgraph is an approximate densest subgraph as follows.

The curator finds the group $g$ with the largest index that has a non-empty last level (using $L^g_{\numgrouplevels-1}$
for every $g \in \{0, \dots, \numgrouplevels-1\}$).
Taking the noisy degrees of all nodes released for group $g$ and using $L_{\numgrouplevels-1}^g$, the curator
peels the levels one by one from the smallest level to the largest level while maintaining the sum of the 
degrees divided by the number of nodes that remain after each round of peeling. 
The curator uses the released noisy degrees corresponding with $L^g_{\numgrouplevels-1}[i]$ for each $i \in [n]$ and 
$g \in \{0, \dots, \numgrouplevels-1\}$.
The curator keeps and returns as $S$ the subset of nodes whose ratio of the sum of the noisy degrees
over the number of nodes in the subset is the largest across all iterations of peeling.
\cref{lds-line:return-noisy-density} returns the noisy density of the set $S$ (scaled by an appropriate factor)
using the sum of the released noisy degrees.

\conffull{}{
The privacy of this algorithm follows from~\cref{lem:incremental-LEDP} and the fact that privacy is maintained
after post-processing (\cref{thm:post-processing}). We now prove the approximation guarantees of this algorithm.
}

\begin{algorithm}[!t]
\conffull{
\small}{}
    \textbf{Input:} Adjacency lists $(\adj_1, \dots, \adj_n)$, constants $\psi \in (0, 1)$, and
    privacy parameter $\eps \in (0, 1)$.\\ 
    \textbf{Output:} A private set of nodes whose induced subgraph is a
    $\left(4+\coren, O\left(\frac{\log^3 n}{\eps}\right)\right)$-approximate densest
    subgraph in $G$.\\
    \Fn{\FnLEDPDensestSubgraph{$(\adj_1, \dots, \adj_n), \eps, \psi$}}{
        Curator initializes $L^0_0, L^0_1, \dots, L^0_{\numgrouplevels - 1}, \dots, L^{2\log n - 1}_0, \dots, 
        L^{2\log n - 1}_{\numgrouplevels -1}$ with $L^g_\lcur[i] \leftarrow 0$ for every $i \in [n]; \lcur, g \in [0, \dots, \numgrouplevels - 1]$.\label{lds-line:initial}\\
        \For{$\lcur = 0$ to $\numgrouplevels - 2$}{\label{lds-line:iterate}
            \For{$i = 1$ to $n$}{\label{lds-line:node}
                \For{$g = 0$ to $\numgrouplevels - 1$}{\label{lds-line:group-iterate}
                    $L^g_{\lcur + 1}[i] \leftarrow L^g_{\lcur}[i]$.\label{lds-line:old-level}\\
                    \If{$L^g_\lcur[i] = \lcur$}{\label{lds-line:level-cur-level}
                        Let $\nup_{i, g}$ be the number of neighbors $j \in \adj_i$ where $L^g_\lcur[j] =
                        \lcur$.\label{lds-line:compute-up} \\
                        Sample $X \leftarrow \geom(\nparam)$.\\
                        Compute $\hnup_{i, g} \leftarrow \nup_{i, g} + X$.\label{lds-line:compute-noisy-up}\\
                        $i$ \release $\hnup_{i, g}$\label{lds-line:release-noisy-up}.\\
                        \If{$\hnup_{i, g} > \upexp^{g}$}{\label{lds-line:move-up-condition}
                            $L^g_{\lcur + 1}[i] \leftarrow L^g_{\lcur}[i] + 1$.\label{lds-line:update-list} \Comment{Curator moves $i$ up one level in group $g$.}\\
                        }
                    }
                }
            }
            
            Curator publishes $L^g_{\lcur + 1}$ for every $g \in [0, \dots, \numgrouplevels-1]$.\label{lds-line:levels}\label{lds-line:new-pub-list}\\
        }
        Curator uses released noisy degrees of all nodes
        and $L^g_{\numgrouplevels-1}$ for all $g \in [0, \dots, \numgrouplevels-1]$ to peel the levels
        one by one and determine and return the set of nodes $S$ whose induced subgraph is an
        approximate densest subgraph.\label{lds-line:compute-subgraph}\\
        Let $\hat{W}$ be the sum of the noisy degrees of $S$. \Return $(S, \frac{\hat{W}}{2|S|} - \frac{c\log^3 n}{\eps})$ 
        for sufficiently large $c\geq 1$.\label{lds-line:return-noisy-density}
        \caption{\label{alg:ledp-densest} $\eps$-LEDP Densest Subgraph}
    }
\end{algorithm}

\conffull{}{
\begin{lemma}\label{lem:densest-subgraph}
    There exists an $\eps$-edge LEDP algorithm on input
    $G=(V, E)$ that gives a set of nodes whose induced subgraph is a 
    $(4+\coren, O(\log^3 n/\eps))$-approximate
    densest subgraph in $O(\log n)$ rounds.
\end{lemma}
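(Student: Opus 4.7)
The plan is to establish three things for Algorithm~\ref{alg:ledp-densest}: (i) it is $\eps$-LEDP, (ii) its output is a $(4+\coren, O(\log^3 n/\eps))$-approximate densest subgraph with high probability, and (iii) it uses $O(\log n)$ rounds. The privacy and round complexity follow from the corresponding properties of Algorithm~\ref{alg:separate-groups}, which runs the $2\log n$ groups in parallel over $2\log n$ level iterations. The approximation argument is essentially an ``adaptation with noise'' of the argument of Bhattacharya et al.~\cite{BHNT15} (their Theorem~2.6 and Corollary~2.7), using \cref{inv:degree-sr-1} and \cref{inv:degree-sr-2} in place of the deterministic per-group degree bounds.

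For privacy, I would reduce to \cref{lem:incremental-LEDP} (or re-derive the same statement via local randomizers). In each round $\lcur$ and for each group $g$, the only data-dependent quantity each node releases is $\hnup_{i,g} = \nup_{i,g} + X$ with $X\sim \geom(\nparam)$. Because $\nup_{i,g}$ is a count of neighbors satisfying a condition on publicly known level labels, its sensitivity with respect to a single-edge change in the adjacency list is $1$, so the geometric mechanism (\cref{lem:sgd-private}) is $\frac{\eps}{8\log^2 n}$-edge DP. Aggregating over the $O(\log n)$ rounds and $O(\log n)$ groups, adaptive composition (\cref{thm:composition}) and group privacy (\cref{thm:group-dp}) for a changed edge (which affects two endpoints) yield $\eps$-LEDP. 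The curator's peeling in \cref{lds-line:compute-subgraph,lds-line:return-noisy-density} only touches the public transcript (the released $\hnup_{i,g}$ and the published $L^g_\lcur$), so it is post-processing (\cref{thm:post-processing}).

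For utility, let $g^\star$ be the largest group index for which the top level of group $g^\star$ is non-empty at the end of the algorithm; the curator peels nodes level-by-level in $L^{g^\star}_{\numgrouplevels-1}$ from the bottom and records the subgraph with the best noisy density ratio. By \cref{inv:degree-sr-1}, every node at level $\lcur$ of group $g^\star$ has at most $(1+\psi)^{g^\star} + O(\log^3 n/\eps)$ neighbors at level $\geq \lcur$; by \cref{inv:degree-sr-2}, every node strictly above the bottom level has at least $(1+\psi)^{g^\star} - O(\log^3 n/\eps)$ neighbors at level $\geq \lcur-1$. Summing the lower-bound invariant over the set $S$ returned and dividing by $|S|$ gives $\dense(G[S]) \geq \tfrac{1}{2}(1+\psi)^{g^\star} - O(\log^3 n/\eps)$, while the upper-bound invariant combined with the standard LP-dual / fractional-peeling argument of~\cite{BHNT15} shows $D^\star \leq 2(1+\psi)^{g^\star+1} + O(\log^3 n/\eps)$; this yields the ratio $\dense(G[S]) \geq \tfrac{D^\star}{4+\coren} - O(\log^3 n/\eps)$ for $\psi$ chosen so that $2(1+\psi)^2 \leq (4+\coren)/2$. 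The noisy density reported in \cref{lds-line:return-noisy-density} is within $O(\log^3 n/\eps)$ of the true density by \cref{lem:noise-whp-bound} and a union bound over $|S|\leq n$ noise terms.

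The main obstacle is the bookkeeping of the additive slack. One has to verify that the noisy density ratio the curator optimizes in \cref{lds-line:compute-subgraph}, $\hat W/(2|S|)$, really matches up to $O(\log^3 n/\eps)$ with the true density of $G[S]$, \emph{simultaneously} for all prefixes of the peeling; otherwise the ``best'' prefix chosen using noisy sums could differ from the best prefix under true degrees by more than the claimed additive error. The argument is that the released quantities are sums of $|S|$ independent $\geom(\nparam)$ variables and concentrate within $\tilde O(\sqrt{|S|}/\nparam)$ of their mean; absorbing this into the $O(\log^3 n/\eps)$ additive term, together with union bounds over the $O(\log^2 n)$ (level, group) pairs and the choice of $g^\star$, yields the stated guarantee with high probability. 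The round count is immediate: there are $\numgrouplevels - 1 = O(\log n)$ outer iterations in \cref{lds-line:iterate}, each involving a single release by every node, so $O(\log n)$ rounds suffice.
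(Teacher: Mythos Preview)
Your overall plan—privacy via local randomizers plus composition, round complexity from the $O(\log n)$ level iterations, and utility via the \cite{BHNT15} peeling argument with noise—matches the paper. However, two of your utility steps do not go through as written.

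First, the sentence ``summing the lower-bound invariant over the set $S$ returned and dividing by $|S|$ gives $\rho(G[S]) \geq \tfrac{1}{2}(1+\psi)^{g^\star} - O(\log^3 n/\eps)$'' is not justified. \cref{inv:degree-sr-2} bounds $|N(i)\cap Z^{g^\star}_{r-1}|$ for a node $i$ at level $r$, and $Z^{g^\star}_{r-1}$ strictly contains the candidate set $S=Z^{g^\star}_{r}$; summing over $S$ therefore counts edges from $S$ into the larger set $Z^{g^\star}_{r-1}$, not edges inside $S$, and gives no lower bound on $\rho(G[S])$. The paper (following \cite{BHNT15}, Theorem~2.6) instead argues by contradiction: if every level $Z_r$ in group $g^\star$ had $\rho(G[Z_r]) < \frac{(1+\psi)^{g^\star}}{2(1+\psi)} - O(\log^3 n/\eps)$, then at every level too few nodes would exceed the move-up threshold, forcing $|Z_{r+1}|\le |Z_r|/(1+\psi)$ and hence an empty top level after $2\log n$ steps—contradicting the choice of $g^\star$. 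This existence argument, not a direct summation of the invariant, is what produces a level of high density.

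Second, your claim that $\hat W/(2|S|)$ is within additive $O(\log^3 n/\eps)$ of $\rho(G[S])$ overlooks a systematic bias. The noisy degree released by node $i$ is $\hnup_{i,g}$ at its \emph{final} level $\level(i)$, which counts neighbors in $Z^{g^\star}_{\level(i)}$, not in the candidate set $Z^{g^\star}_{r}$ for $r\le \level(i)$. Hence for $S=Z^{g^\star}_r$, each edge of $G[S]$ is counted once by its lower-level endpoint and only sometimes (when the two endpoints share a final level) by the other, so $\sum_{i\in S}\nup_{i,g}\in[\,|E(G[S])|,\,2|E(G[S])|\,]$. Thus $\hat W/(2|S|)$ lies between $\tfrac{1}{2}\rho(G[S])$ and $\rho(G[S])$ up to additive $O(\log^3 n/\eps)$; your concentration bound on the $\geom$ noise does not remove this multiplicative slack. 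This factor-of-$2$ loss when the curator selects the level by maximizing the noisy ratio is exactly what turns the $(2+\eta)$ guarantee of the level structure into the stated $(4+\eta)$ factor, and must be accounted for explicitly.
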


\begin{proof}
    First, we show that if $(1+\lf)^g > 2(1+\lf)D^* + 
    \frac{c\log^3 n}{\eps}$, then for all $i \in [n]$, it holds that
    $L^g_{\numgrouplevels-1}[i] < \numgrouplevels - 1$ \whp
    for appropriately large constant $c \geq 1$;
    in other words, the last level of group $g$ is empty
    \whp. As in the proof of Theorem 2.6 in~\cite{BHNT15}, we can show that
    for any level $\lvl \in \{0, \dots, \numgrouplevels - 1\}$ in group
    $g$ it holds that (let $\delta(G[Z_\lvl])$ be the sum of the degrees of 
    every node in the induced subgraph of $Z_\lcur$ divided by $|Z_\lcur|$):
    \begin{align}
    \delta(G[Z_\lcur]) = 2 \cdot \dense(G[Z_\lcur]) \leq 2 \cdot
    \max_{S \subseteq V} \dense(G[S]) = 2\cdot D^* <
    \frac{(1+\lf)^g - \frac{c\log^3 n}{\eps}}{1+\lf}.\label{eq:degree-bound}
    \end{align}
    From this, we know that the number of nodes in 
    $Z_\lcur$ with induced degree in $Z_\lcur$ at least 
    $(1+\lf)^g - \frac{c\log^3 n}{\eps}$ is at most 
    $\frac{|Z_\lvl|}{1+\lf}$ since otherwise, it violates
    \cref{eq:degree-bound}. Let $C_\lvl$ be the set of 
    nodes in $Z_\lcur$ with degree less than $(1+\lf)^g - 
    \frac{c\log^3 n}{\eps}$. Then, it follows that
    $|Z_\lvl \setminus C_\lvl| \leq \frac{|Z_\lvl|}{1+\lf}$.
    By~\cref{line:move-up-condition} and~\cref{lem:noise-whp-bound}, 
    we have $Z_{\lvl+1}\cap C_\lvl = \emptyset$ since the noise does not exceed $\frac{c\log^3 n}{\eps}$ \whp. 
    Thus, $|Z_{\lvl+1}| \leq |Z_\lvl \setminus C_\lvl| \leq \frac{|Z_\lvl|}{1+\lf}$. 
    Then, for all $\lvl \in \{0, \dots, \numgrouplevels-2\}$, we 
    have that $|Z_{\lvl+1}| \leq \frac{|Z_\lvl|}{1+\lf}$.
    Multiplying these inequalities gives us $|Z_{\numgrouplevels-1}| \leq \frac{|Z_1|}{(1+\lf)^{\numgrouplevels-1}}$. Since
    $|Z_1| \leq n$, we get
    $|Z_{\numgrouplevels-1}| \leq \frac{n}{(1+\lf)^{\numgrouplevels-1}} = \frac{(1+\lf)n}{n^2} < 1$ which means that 
    $Z_{\numgrouplevels - 1} = \emptyset$ \whp.
    
    Now, suppose that $(1+\lf)^g < \frac{D^*}{1+\lf} - \frac{c\log^3 n}{\eps}$, and let $S^* \subseteq V$
    be a subset of nodes with highest density, i.e.\ $\dense(G[S^*]) = D^*$. We will show that 
    $S^* \in Z_\lvl$ for all $\lvl \in \{0, \dots, \numgrouplevels-1\}$. 
    This means that $Z_{\numgrouplevels-1} \neq \emptyset$. We prove this via induction.
    Clearly, in the base case, since $S^* \subseteq V$, then $S^* \subseteq Z_0$. 
    For the induction hypothesis, we assume that $S^* \subseteq Z_\lvl$. Then, we prove that 
    $S^* \subseteq Z_{\lvl+1}$. By Lemma 2.4 of~\cite{BHNT15},
    for every node $i \in S^*$, we have that $\deg_{Z_\lvl}(i)
    \geq \deg_{S^*}(i) \geq \dense(G[S^*]) = D^* > (1+\lf) \cdot \left((1+\lf)^g + \frac{c\log^3 n}{\eps}\right)$. By~\cref{line:move-up-condition} and~\cref{lem:noise-whp-bound},
    the minimum value of the noise variable added to the degree of $i$ is $-\frac{c\log^3 n}{\eps}$ \whp
    and $(1+\lf)^{g+1}$ exceeds the cutoff of $(1+\lf)^g$; so $i$ is in level $\lvl + 1$.
    Thus, for all $i \in S^*$, node $i \in Z_{\lvl+1}$.
    This shows that the topmost level of group $g$ contains $S^*$. 
    
    Now, we show that 
    if $(1+\lf)^g < \frac{D^*}{1+\lf} - \frac{c\log^3 n}{\eps}$, then there is a level $\lvl \in \{0, \dots, \numgrouplevels-1\}$
    where $\dense(G[Z_\lvl]) \geq \frac{(1+\lf)^g}{2(1+\lf)} - \frac{c\log^3 n}{\eps}$. For the sake of contradiction,
    suppose this is not the case. Then, we have $\delta(G[Z_\lvl]) = 2 \cdot \dense(G[Z_\lvl]) < \frac{(1+\lf)^g}{1+\lf}
    - \frac{2c\log^3n}{\eps}$ for every $\lvl \in \{0, \dots, \numgrouplevels-1\}$. Then, applying a similar argument as
    for the first case (since this equation follows~\cref{eq:degree-bound} except for the constant on the additive noise), 
    we conclude that $|Z_{\lvl + 1}| \leq \frac{|Z_\lvl|}{1+\lf}$ for every $\lvl \in 
    \{0, \dots, \numgrouplevels-1\}$ which implies that $Z_{\numgrouplevels-1} = \emptyset$. We have arrived at a contradiction.
    
    The rest of the proof follows from a modification of the proof of Corollary 2.7 of~\cite{BHNT15}. We consider
    all groups $g \in \{0, \dots, \numgrouplevels-1\}$. We find the group with the largest index
    that has a non-empty last level and perform the peeling procedure. 
    Suppose $g$ is the group. By what we showed above, we know that 
    $\frac{D^*}{1+\lf} - \frac{c\log^3 n}{\eps} \leq (1+\lf)^g \leq 2(1+\lf)D^* + \frac{c\log^3 n}{\eps}$.
    By the above, if the last level is not empty in group $g$, then, 
    there exists an induced subgraph $Z_\lvl$ where $\lvl \in \{0, \dots, \numgrouplevels-1\}$, obtained
    from peeling the levels, that has density at least $\frac{(1+\lf)^g}{2(1+\lf)} - \frac{c\log^3 n}{\eps}$. 
    Finally, the returned noisy degrees for each node is at least its actual degree minus $\frac{c\log^3n}{\eps}$ \whp.
    Let $\lvl \in \{0, \dots, \numgrouplevels - 1\}$ be the level where $Z_{\lvl}$ has the highest
    density in  $g$. Let $\hat{W}_r$ be the sum of the noisy degrees of nodes in $Z_\lvl$.
    Then, we can bound $\hat{W}_r$ by the following expression in terms of $\delta(G[Z_\lvl])$ and
    $\dense(G[Z_\lvl])$,
    \begin{align*}
        \dense(G[Z_\lvl]) - \frac{c\log^3 n}{\eps}
        = \frac{\delta(G[Z_\lvl])}{2} - \frac{c\log^3 n}{\eps} \leq \hat{W}_r 
        \leq \delta(G[Z_\lvl]) + \frac{c\log^3 n}{\eps} = 2 
        \cdot \dense(G[Z_\lvl]) + \frac{c\log^3 n}{\eps}. 
    \end{align*}
    By the above expression, we are guaranteed to select an $\lvl'$ \whp where 
    \begin{align}
        \frac{\dense(G[Z_\lvl])}{2} - \frac{c\log^3 n}{\eps} = \frac{\delta(G[Z_{\lvl}])}{4} - \frac{c\log^3 n}{\eps} \leq
        \frac{\hat{W}_{r'}}{2} \leq \frac{\delta(G[Z_{\lvl'}])}{2} + \frac{2c\log^3 n}{\eps} = \dense(G[Z_{\lvl'}]) + \frac{2c\log^3 n}{\eps}
    \end{align}
    
    Combining all of the above observations, we get that the returned set of nodes in the approximate densest subgraph
    has density at least 
    \begin{align*}
        \dense(G[Z_{\lvl'}]) &\geq \frac{\dense(G[Z_\lvl])}{2} - \frac{3c\log^3 n}{\eps} \\
        &\geq \frac{1}{2} \cdot \left(\frac{(1+\lf)^g}{2(1+\lf)} - \frac{c\log^3 n}{\eps}\right) - \frac{3c\log^3 n}{\eps}\\
        &\geq \frac{\frac{D^*}{1+\lf} 
        - \frac{c\log^3 n}{\eps}}{4(1+\lf)}- \frac{7c\log^3 n}{2\eps} = \frac{D^*}{4(1+\lf)^2} 
        - O\left(\frac{\log^3n}{\eps}\right).
    \end{align*}
\end{proof}

\subsection{Complexity Measures}

In this section, we analyze the \emph{round and node communication complexity} of our $\eps$-edge 
LEDP algorithm; the latter is measured in terms of the size of messages sent by the nodes. 
Each node in~\cref{alg:insert} releases $O(1)$
bits of information in each round. Then, we show a simple modification to the algorithm 
that allows us to obtain $O(\log n)$ rounds but with an additional node communication complexity of $O(\log n)$ bits. 

\begin{theorem}\label{thm:comm}
    \cref{alg:insert} has round complexity
    $O(\log^2 n)$. Released messages contain $O(1)$ bits, with 
    high probability.
\end{theorem}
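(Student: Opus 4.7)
The plan is to verify the two claims of the theorem — $O(\log^2 n)$ rounds and $O(1)$-bit messages — separately, since both follow essentially by inspection of \cref{alg:insert} once we argue that each iteration of the outer loop corresponds to exactly one round of communication.

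For the round complexity, I would observe that the outer loop runs from $\lcur = 0$ to $\numlevels - 2$, contributing $4\log^2 n - 1 = O(\log^2 n)$ iterations. The key claim is that each such iteration can be executed as a single communication round. Within iteration $\lcur$, the curator publishes $L_\lcur$, every node $i$ with $L_\lcur[i] = \lcur$ reads its own adjacency list $\adj_i$ together with the public $L_\lcur$, computes $\nup_i$ and then the noisy $\hnup_i$ via an i.i.d.\ draw from $\geom(\nparam)$, and releases the single bit in \cref{line:release-noisy-up} or \cref{line:release-noisy-bool}. Crucially, $\nup_i$ is computed against $L_\lcur$ (the list published \emph{before} this round), not the list $L_{\lcur+1}$ that the curator is assembling, so the decisions of distinct nodes $i, j$ at level $\lcur$ are independent of one another given $L_\lcur$. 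Hence all nodes at level $\lcur$ can act in parallel in one round, after which the curator assembles and publishes $L_{\lcur+1}$.

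For the node communication complexity, note that in any round a node either does nothing (if $L_\lcur[i] \neq \lcur$) or releases exactly one bit in $\{0,1\}$ indicating whether it moves up. This is $O(1)$ bits deterministically; the "with high probability" qualification is trivially satisfied since the bound does not depend on the random noise realization. The post-processing step \FnCoreNumber\ and the ordering step (\cref{line:ordering}) are computed by the curator from already-released public information and the published level lists, and therefore add no rounds and no additional node-to-curator messages, by post-processing.

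The only mild subtlety — and the one point that deserves explicit justification — is the parallelization argument above, namely that within a single iteration of the outer loop all nodes currently at level $\lcur$ may act concurrently without violating the semantics of the pseudocode. This follows because (i)~each node's released bit is a function only of $(\adj_i, L_\lcur, X_i)$ with $X_i$ its private noise, and (ii)~the curator's update $L_{\lcur+1}[i] \leftarrow L_\lcur[i] + 1$ writes only to index $i$, so concurrent writes from different nodes do not conflict. Combining the $O(\log^2 n)$-iteration bound with the one-round-per-iteration equivalence yields the round complexity, and the per-node message is a single bit, giving the claimed $O(1)$ communication complexity per node per round.
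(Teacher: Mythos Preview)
Your proposal is correct and follows essentially the same approach as the paper's own proof, which simply notes that the algorithm requires $O(\log^2 n)$ rounds because vertices move up at most $O(\log^2 n)$ levels, and that each vertex outputs $O(1)$ bits indicating whether it moves up or stays. Your version is more detailed---in particular your explicit justification that all nodes at level $\lcur$ can act concurrently because each decision depends only on the already-published $L_\lcur$---but this elaboration is consistent with (and a reasonable fleshing-out of) the paper's terse argument.
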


\begin{proof}
    Our algorithm requires $O(\log^2 n)$ rounds since we move vertices up at most $O(\log^2 n)$ levels. 
    During each round, to obtain the communication complexity of $O(1)$ bits, 
    each vertex outputs $O(1)$ bits indicating whether it is moving up a level or staying in the same level. 
\end{proof}

\begin{corollary}\label{cor:rounds}
    There exists a $\left(2+\coren, O\left(\frac{\log^3 n}{\eps}\right)\right)$-approximate 
    LEDP algorithm for $k$-core decomposition that takes $O(\log n)$ rounds and outputs messages with $O(\log n)$ bits,
    with high probability. There exists a $\left(4+\coren, O\left(\frac{\log^3 n}{\eps}\right)\right)$-approximate 
    LEDP algorithm for densest subgraph that takes $O(\log n)$ rounds and outputs messages with $O(\log^2 n)$ bits,
    with high probability.
\end{corollary}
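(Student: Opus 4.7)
The plan is to invoke the algorithms developed in \cref{sec:reducing-rounds} (\cref{alg:separate-groups} for $k$-core decomposition) and an analogous modification of \cref{alg:ledp-densest} for densest subgraph, and then to carefully count rounds, message sizes, and privacy loss. The key conceptual step, already built into \cref{alg:separate-groups}, is to replicate the level data structure $\numgrouplevels = O(\log n)$ times, one copy per group, and let each node climb up all copies in lockstep. Because each individual copy has only $\numgrouplevels = O(\log n)$ levels (as opposed to the $\numlevels = O(\log^2 n)$ total levels of \cref{alg:insert}), the outer loop over $\lcur$ of \cref{alg:separate-groups} runs for $O(\log n)$ iterations, which immediately gives the desired round complexity.

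Next, I would analyze the per-round communication cost. In each round of \cref{alg:separate-groups}, a node $i$ at level $\lcur$ of group $g$ releases exactly one bit $A_i[g]\in\{0,1\}$ indicating whether its noisy induced degree exceeds the group's cutoff; thus the message $A_i$ has one bit for each of the $O(\log n)$ groups, giving $O(\log n)$ total bits per round for the $k$-core (and low out-degree ordering) algorithm. For densest subgraph, I would define the analogous ``small-rounds'' variant of \cref{alg:ledp-densest}: instead of releasing a bit per group, each node $i$ releases the noisy degree $\hnup_{i,g}$ for every group $g$ in which it currently sits on the active level, so the curator can reconstruct the peeled subgraph densities. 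Each $\hnup_{i,g}$ takes $O(\log n)$ bits to encode (since degrees and noise are bounded polynomially in $n$ w.h.p.\ by \cref{lem:noise-whp-bound}), and there are $O(\log n)$ groups, yielding $O(\log^2 n)$ bits per round.

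Privacy follows without change from the argument used to prove \cref{lem:incremental-LEDP}: each released value is the output of a $\frac{\eps}{8\log^2 n}$-local randomizer obtained by adding $\geom(\nparam)$ noise to a count of sensitivity $1$ (post-processing for \cref{alg:separate-groups}, or the noise itself for the densest subgraph variant). Every edge $\{i,j\}$ touches two endpoints and each endpoint invokes at most one local randomizer per (level, group) pair, and there are $O(\log n)$ rounds $\times$ $O(\log n)$ groups $=O(\log^2 n)$ such invocations per endpoint, so adaptive composition (\cref{thm:composition}) together with group privacy (\cref{thm:group-dp}) yields overall $\eps$-LEDP. The approximation guarantees are inherited from \cref{inv:degree-sr-1} and \cref{inv:degree-sr-2}: since these invariants are stated per group and the cutoffs per group match those used implicitly within each group of \cref{alg:insert}, essentially the same calculation as in \cref{lem:approximation-k-core} (for $k$-core) and \cref{lem:densest-subgraph} (for densest subgraph) gives the claimed $(2+\coren, O(\log^3 n/\eps))$ and $(4+\coren, O(\log^3 n/\eps))$ bicriteria bounds w.h.p.

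The main obstacle I anticipate is bookkeeping rather than any new conceptual content: one must verify that decoupling the groups into parallel level structures does not create additional dependencies across rounds, and that the per-round noise budget $\nparam = \eps/(8\log^2 n)$ is still exactly the right scale once we account for the doubled per-round work (each edge endpoint may now fire randomizers in multiple groups within the same round). Handling this cleanly amounts to observing that the $O(\log^2 n)$ bound on randomizer invocations per edge is unchanged between the two algorithms, so the same privacy parameter suffices and no re-derivation of approximation bounds is needed beyond re-reading the proofs with the substitution ``level in group $g$ of the combined structure'' $\mapsto$ ``level of the group-$g$ copy.''
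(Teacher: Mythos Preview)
Your proposal is correct and follows essentially the same approach as the paper: invoke the parallel-groups algorithm (\cref{alg:separate-groups}) and count rounds and per-round bits, then do the analogous counting for \cref{alg:ledp-densest}. In fact you are more thorough than the paper's own proof, which simply asserts the bit counts without re-deriving the privacy or approximation guarantees; one small note is that \cref{alg:ledp-densest} is already written in the parallel-groups form (the loop over $g$ is nested inside the loop over $\lcur$), so no additional ``analogous modification'' is needed there.
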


\begin{proof}
    \cref{alg:separate-groups} modifies~\cref{alg:insert} 
    in the following way. Instead of computing the levels sequentially, we compute multiple groups in parallel. 
    Specifically, each nodes outputs in each round, a message of $2\log_{(1+\eps)}{n} - 1$ length where the $i$-th bit is $1$ if node $i$
    moves up a level in group $i$. Thus, since each group has $O(\log n)$ levels, we can compute the movements of all vertices in 
    all groups simultaneously in $O(\log n)$ rounds. 
    
    In~\cref{alg:ledp-densest}, each node releases $O(\log n)$ bits for each level in each group for their noisy degree.
    Thus, given $O(\log n)$ groups, the total node communication complexity is $O(\log^2 n)$ 
    per level $\lvl \in \{0, \dots, \numgrouplevels-1\}$.
    There are $O(\log n)$ rounds since~\cref{line:iterate} uses $O(\log n)$ rounds. (Everything inside the for loop can be
    done simultaneously.)
\end{proof}

}
\section{$\eps$-Edge DP Densest Subgraph}\label{sec:densest}

In this section, we present an $\eps$-edge DP densest subgraph algorithm that returns a subset of vertices $V' \subseteq 
V$ that induces
a $\left(1+\coren, O\left(\frac{\log^4 n}{\eps}\right)\right)$-approximate 
densest subgraph for any constant $\coren > 0$ in $\densetime$ time. Specifically, we prove the following theorem.

\begin{theorem}\label{thm:densest}
    There exists an $\eps$-edge DP densest subgraph algorithm that runs in $\densetime$ time
    and returns a subset of vertices $V' \subseteq V$
    that induces a $\left(1+\coren, O\left(\frac{\log^4 n}{\eps}\right)\right)$-approximate 
    densest subgraph for any constant $\coren > 0$.
\end{theorem}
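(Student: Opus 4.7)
}

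The plan is to privatize the multiplicative-weights style primal--dual densest subgraph algorithm of Bahmani \etal{} (with the simplifications of Su and Vu), using the locally adjustable mechanism sketched in \cref{sec:contributions}. Recall that the non-private algorithm proceeds in $T = O(\log n / \coren^2)$ phases; it maintains a non-negative load $\ell(e)$ on each edge $e=\{u,v\}$, where $\ell(e) = \hat\alpha_{eu} + \hat\alpha_{ev}$ is split between the two endpoints. In each phase, every vertex $u$ assigns load $\hat\alpha_{eu} = 1/\deg_{H}(u)$ to each of its incident edges $e$ in the currently active subgraph $H$, where $\deg_H(u)$ is $u$'s induced degree. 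After all phases, one extracts the densest prefix of vertices when sorted by total load; the density of this set is a $(1+\coren)$-approximation of $D^*$.

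I would modify this algorithm as follows. In each phase $t \in [T]$, instead of letting each vertex $u$ learn its exact induced degree $\deg_{H_t}(u)$ in $H_t$, the trusted curator computes a noisy induced degree $\widehat{\deg}_{H_t}(u) = \deg_{H_t}(u) + X_{u,t}$, with $X_{u,t} \sim \geom(\eps/(cT))$ i.i.d., and uses $\widehat{\deg}_{H_t}(u)$ in place of $\deg_{H_t}(u)$ to set $\hat\alpha_{eu}$ and to decide whether $u$ remains in $H_{t+1}$ (using the appropriate noisy threshold as in the non-private algorithm). At the end, the curator computes a noisy total load for each vertex, sorts vertices by this noisy load, and returns the prefix whose noisy induced density is largest (again, computing these densities using a noisy edge count). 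The output is only the vertex set $V' \subseteq V$, which is post-processing of private quantities.

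For privacy, the key observation is that in each phase the curator queries a vector-valued function $f_t : G \mapsto (\deg_{H_t}(u))_{u \in V}$ of $\ell_1$-sensitivity at most $2$ with respect to edge-neighboring graphs, because inserting or deleting a single edge changes the induced degree by at most one at each of its two endpoints. Thus applying the geometric mechanism with parameter $\eps/(cT)$ makes each phase $O(\eps/T)$-edge DP, and by the adaptive composition theorem the entire sequence of $T = O(\log n)$ phases is $O(\eps)$-edge DP; the final vertex prefix and its reported density are outputs of post-processing and therefore remain private by \cref{thm:post-processing}. Adjusting constants gives the claimed $\eps$-edge DP guarantee.

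For utility, by \cref{lem:noise-whp-bound} every noise draw satisfies $|X_{u,t}| = O(T \log n / \eps) = O(\log^2 n/\eps)$ with high probability, via a union bound over the $O(nT) = O(n\log n)$ draws. I would then follow the analysis of Bahmani \etal{} (with Su--Vu's tightening) while tracking an extra additive $O(\log^2 n/\eps)$ error in each induced-degree estimate; the primal--dual accounting multiplies this slack by $O(\log n)$ coming from $T$ phases and another $O(\log n)$ coming from the noisy density-selection step, yielding a total additive error of $O(\log^4 n/\eps)$ and preserving the $(1+\coren)$ multiplicative factor. The running time is dominated by the $O(T) = O(\log n)$ phases, each of which touches every vertex and edge once to compute induced degrees and sample $O(n)$ noises, for a total of $O((n+m)\log n)$ work; the extra $\log^2 n$ factor in the stated bound comes from the cost of maintaining the priority structure used to extract the best noisy-density prefix and from the bit complexity of the noise samples. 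The main obstacle is showing that the $(1+\coren)$ multiplicative factor is not degraded by the per-phase noise: this requires checking that the error term $O(\log^2 n/\eps)$ enters additively (not multiplicatively) into the dual feasibility inequality, which can be done by using the noisy degree only as a threshold in the peeling step and re-deriving the potential-function inequality of Bahmani \etal{} with an additive slack.
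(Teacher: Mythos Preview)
Your proposal has the right high-level intuition—privatize the Bahmani--Su--Vu MWU algorithm by noising the per-phase local counts and use composition over $T=O(\log n)$ phases—but the concrete algorithm you describe is not the Su--Vu algorithm, and the place you inject noise does not yield the privacy/utility coupling you claim.

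First, the non-private base: in Su--Vu, each phase is parameterized by a density guess $z$; vertex $u$ sorts its incident edges by current load and assigns a fixed integer load (namely $2$) to the $\lceil z/2\rceil$ smallest-load edges, not $1/\deg_H(u)$ to all incident edges. There is no peeling of a subgraph $H_t$ (your ``decide whether $u$ remains in $H_{t+1}$'' is a different algorithm), and the final extraction is not a prefix by vertex load but a scan over integer load thresholds $\ell\in[0,4T]$, keeping vertices with at least $\lceil z/2\rceil$ incident edges of load $\le\ell$ and checking the induced density. The paper wraps this inside an outer loop over $z=(1+\coren)^i$, $i\in[\lfloor\log_{1+\coren}n\rfloor]$, and returns the output of the largest $z$ that succeeds; this outer loop is why the noise parameter carries a $\log_{1+\coren} n$ factor and why composition is applied both across phases and across guesses.

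Second, and more importantly, the noise must be added to the \emph{cutoff index} (how many edges receive load $2$), not to the degree used to form a per-edge fractional load. With your $1/\widehat{\deg}$ scheme, a single edge difference between $G$ and $G'$ perturbs the load on \emph{every} edge incident to $u$ and $v$, so conditioning on identical edge-load histories for $E\cap E'$ fails and the phase-$t$ sensitivity argument collapses. The paper's mechanism instead samples $X_v\sim\geom(\cdot)$ and assigns load $2$ to the first $[\lceil z/2\rceil-1+X_v]_0^{\deg(v)}$ edges in sorted order; under the coupling that fixes the load history on $E$, the extra edge $e'$ shifts the sorted position of the cutoff by at most one, so the two ``ideal cutoffs'' differ by at most one and the geometric noise makes the resulting load assignments on $E$ indistinguishable (this is the ``dummy edge'' argument). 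The same trick is applied to the per-vertex threshold check (noise $Z$) and the induced-density check (noise $Y$), each with sensitivity $1$, and the full privacy proof is a chain-rule over these three conditionals across $T(4T+1)$ inner iterations and $T$ load-assignment steps, then composition over the $\log_{1+\coren}n$ guesses.

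For utility, the paper does not simply track additive slack through the MWU potential as you suggest; it shows (i) if $z$ is below $D^*/(1+O(\coren))-O(\log^3 n/\eps\coren^3)$ then the dual is infeasible, forcing some phase $t$ with $\sum_e w_e^t(\halpha_{eu}^t+\halpha_{ev}^t)<\sum_e w_e^t$, and (ii) such a phase yields an integer load threshold $\ell$ whose vertex set $V'_\ell$ has density $\ge z-O(\log^4 n/\eps)$. The integer-load design is what makes the threshold scan finite ($4T+1$ values) and is precisely what your fractional $1/\widehat{\deg}$ loads would break.
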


We build upon the multiplicative weight update algorithm from~\cite{BGM14} along with modifications made
by~\cite{SuVu20} in their distributed algorithm. Although a number of private 
algorithms~\cite{blum2013learning,GLM10,gupta2012iterative,ganesh2020privately,hardt2012simple,hardt2010multiplicative,SU17,ullman2011pcps,vadhan2017complexity} 
exist for the multiplicative weight updates (MWU) method
of Arora \etal~\cite{AHK12}, such private algorithms focus on private multiplicative
weights for linear and non-linear
queries into databases and data release.
Such techniques in the database query setting do not immediately transfer to our densest subgraph setting;
namely, these algorithms often use the exponential mechanism to select queries, which is unnecessary
in our setting, since all nodes are queried during each update step and every
node provides a value to update the state of the algorithm.
Conversely, our $\eps$-edge DP
densest subgraph algorithm also does not have implications for private multiplicative weight update algorithms
for database queries.

In this section, to be consistent with the previous non-private works~\cite{BGM14,CQT22,SuVu20}, we give
our multiplicative approximation factors as $\left(1-a \cdot \coren'\right)$ for fixed constant $a$
and $\coren' \in (0, 1/a)$. Specifically, they define a $(1-\coren')$-approximate
densest subgraph to be one with density at least $(1-\coren') \cdot D^*$, where $D^*$ is the density of the densest subgraph. In~\cref{sec:prelims},
we define our multiplicative approximation factor as $(1+\coren)$ to be consistent with the other private
densest subgraph works~\cite{AHS21,NV21}. It is easy to convert between the two approximation factors
since a $(1-a \cdot \coren')$ guarantee for any constant $\coren' \in (0, 1/a)$ implies
a $(1+\coren)$ multiplicative guarantee for any $\coren > 0$ since $(1-a \cdot \coren') \leq \frac{1}{1+\coren}$
when $0 < \coren \leq \frac{a\coren'}{1-a\coren'}$, and we can choose a sufficiently small constant $\coren$. 
For simplicity, from now on
we fix a constant $\coren \in (0, \densestmultconst)$ to be the input parameter for our algorithms.

We present our algorithm in two parts. The main part, given in~\cref{alg:dp-densest},
calls the subroutine given in~\cref{alg:dp-densest-z} on various values of $z$.
\cref{alg:dp-densest} iterates through powers of $(1+\coren)^i$ for every $i \in [\floor{\log_{(1+\coren)}n}]$.
For each $(1+\coren)^i$, the algorithm passes the value as the input parameter $z$ into~\cref{alg:dp-densest-z}.
\cref{alg:dp-densest-z} goes through $O\left(\frac{\log n}{\coren^3}\right)$ phases where loads
are added to the edges in each phase. Then, the algorithm
returns a set of nodes whose induced subgraph has density at least
$(\kcoremultfactor)z - \frac{c\log^4 n}{\eps}$ for sufficiently large constant $c \geq 1$ \whp. 
\cref{alg:dp-densest} then returns the set of nodes returned for the largest power of $(1+\coren)^i$
or all of the nodes in the input graph if~\cref{alg:dp-densest-z} did not return a subset of the nodes
for any of the inputs for the parameter $z$.

The crux of our approximate densest subgraph algorithm lies in~\cref{alg:dp-densest-z}.
In order to ensure $\eps$-edge DP, our algorithm creates \defn{dummy}
edges that take a portion of the load. 
\conffull{}{
This technique also naturally leads to a general privacy framework for a certain class of \emph{locally-adjustable algorithms},
which we present in~\cref{sec:framework}.} 
Such dummy edges are responsible for both 
accumulating load and for determining whether the stopping 
conditions are satisfied. We describe our algorithm in more detail and prove its 
privacy, approximation, and runtime guarantees in the following sections.

\conffull{}{
\begin{figure}[!t]
\noindent
{\centering
\fbox{
\begin{minipage}[t][4.5cm]{.45\textwidth}
~
  (\textsc{Primal}) Maximum Density Subgraph
  ~
\begin{align*}
\textup{maximize} & \qquad \sum_{e \in E} x_{e} & \\
\textup{subject to} & \qquad \sum_{v \in V} y_v = 1 &\\
&\qquad x_{e} \leq y_u, x_{e} \leq y_v &\enspace \forall e=\{u, v\} \in E \\
&\qquad y_{v}, x_{e} \geq 0 &\enspace \forall e \in E, v \in V
\end{align*}
\end{minipage}}
\fbox{
\begin{minipage}[t][4.5cm]{.45\textwidth}
 
 (\textsc{Dual}) Lowest Out-Degree Orientation
~
\begin{align*}
\textup{minimize} &\qquad B \\
\textup{subject to} &\qquad \alpha_{eu} + \alpha_{ev} \geq 1 &\quad \forall e=\{u,v\} \in E \\
&\qquad \sum_{e \ni u} \alpha_{eu} \leq B &\quad \forall u \in V \text{ where } e \in E\\
&\qquad \alpha_{eu},\alpha_{ev}\geq 0 &\quad \forall e=\{u, v\} \in E
\end{align*}
\end{minipage}
}
}
\caption{Fig. 1 of~\cite{SuVu20}: Linear programs for densest subgraph (primal) and fractional lowest out-degree orientation (dual).}\label{fig:densest-lp}
\end{figure}
}

The densest subgraph algorithm performs multiplicative weight update 
on the dual of the densest 
\conffull{subgraph LP~\cite{Charikar00}.}{subgraph LP~\cite{Charikar00}
(shown in~\cref{fig:densest-lp}).}
Intuitively, this algorithm works by distributing a given load $z$ on a node (corresponding to the loads given by edges
to the nodes in the dual LP) to its adjacent edges.
Nodes with a large number of adjacent edges will be able to spread out its load among its many adjacent edges.
Edges with small cumulative load will be adjacent to two
high-degree nodes. Hence, they should be included in the densest subgraph. We can find such a subgraph by 
iterating from small to large load and keeping nodes that are adjacent to many edges with small loads.
Bahmani \etal~\cite{BGM14} were the first to apply the MWU framework to densest subgraphs and 
Su and Vu~\cite{SuVu20} give an explicit analysis for this algorithm. 
Although the non-private algorithms of~\cite{BGM14,SuVu20} adapt the MWU framework,
the analyses of~\cite{SuVu20} are self-contained. Thus, we present our private algorithm
in its entirety without the need to define the MWU framework.
We modify the analysis of~\cite{SuVu20} to show the approximation factor of our $\eps$-DP algorithm.

\conffull{}{
\subsection{Detailed Algorithm}

The pseudocode for our algorithm is given in~\cref{alg:dp-densest}. 
Recall from \cref{def:geom} that $\geom(b)$ denotes
the symmetric geometric distribution.
Specifically, $\geom(b)$  gives an integer output $X \sim 
\geom(b)$ with probability $\frac{e^b - 1}{e^b+1} \cdot e^{-|X| \cdot b}$.
In~\cref{alg:dp-densest}, we iterate through $i \in [\floor{\log_{(1+\coren)}(n)}]$ (\cref{line:i-iteration})
to obtain $z = (1+\coren)^i$ (\cref{line:compute-z}). After obtaining $z$, 
we pass the value into~\cref{alg:dp-densest-z} as the parameter $z$ (\cref{line:call-subroutine}). 
\cref{alg:dp-densest-z} returns a subset of nodes with density at least $z - \frac{c\log^4 n}{\eps}$ 
(for sufficiently large constant $c \geq 1$) \whp or all the nodes in the graph. 
We then check whether a set of nodes for the current parameter $z$ is returned (\cref{line:check-returned}); if so,
then the subgraph $S \subseteq V$ obtained for the largest $z$ parameter that is passed into the algorithm (\cref{line:initialize-max-v,line:return-max-v}) is returned. If no set of nodes is returned for any $z$, then
all of the nodes in the graph are returned.
We now explain the crux of our algorithm:~\cref{alg:dp-densest} calls~\cref{alg:dp-densest-z} in~\cref{line:call-subroutine}
with input parameter $z$ and~\cref{alg:dp-densest-z} returns
a subset of nodes with density bounded by $z$ minus
additive $\poly(\log n)$ error.
}

\SetKwFunction{FnDPDensestZ}{EdgeDPDensestSubgraphZ}
\SetKwFunction{FnDPDensest}{EdgeDPDensestSubgraph}
\SetKwFunction{FnCQT}{ApproximateDensestSubgraphCQT}
\begin{algorithm}[!t]
\conffull{
\small}{}
    \caption{\label{alg:dp-densest} $\eps$-Edge DP Densest Subgraph}
    \textbf{Input:} Graph $G = (V, E)$ with $n = |V|$ and $m = |E|$, a constant $\coren \in (0, \densestmultconst)$, and privacy parameter $\epsilon \in (0, 1)$.\\ 
    \textbf{Output:} A subset of nodes whose induced subgraph is a $\left(\kcoremultfactor, O\left(\frac{\log^4 n}{\eps}\right)\right)$-approximate densest subgraph.\\
    \Fn{\FnDPDensest{$G = (V, E), \coren, \eps$}}{
        $V_{\max} \leftarrow V$.\label{line:initialize-max-v}\\
        \For{$i \in [\floor{\log_{(1+\coren)} n}]$}{\label{line:i-iteration}
            Set $z = (1+\coren)^i$.\label{line:compute-z}\\
            $(S, x) \leftarrow$ \FnDPDensestZ{$G = (V, E), z, \coren, \epsilon$} (\cref{alg:dp-densest-z}).\label{line:call-subroutine}\\
            \If{$x \not= 0$}{\label{line:check-returned}
                $V_{\max} \leftarrow S$.\label{line:return-max-v}\\
            }
        }
        \Return $V_{\max}$.
    }
\end{algorithm}

\begin{algorithm}[!t]
\conffull{
\small}{}
    \caption{\label{alg:dp-densest-z} Edge DP Densest Subgraph using Density Parameter $z$}
    \textbf{Input:} Graph $G = (V, E)$ with $n = |V|$ and $m = |E|$, density $z \geq 0$, constant $\coren \in (0, \densestmultconst)$, privacy parameter $\epsilon \in (0, 1)$, 
    and sufficiently large constants $c_0, c_1, c_2, c > 0$.\\ 
    \textbf{Output:} A pair $(V', z')$ where $V'$ is a set of nodes $V' \subseteq V$ where $G[V']$ has density at least $z' - O\left(\frac{\log^4 n}{\eps}\right)$.\\
    \Fn{\FnDPDensestZ{$G = (V, E), z, \coren, \eps$}}{
        Let $T \leftarrow \densestphases$.\label{densest:phase}\\
        Initialize load $\ell(e) \leftarrow 0$ for all $e \in E$.\label{densest:initial-load}\\
        \If{$z = 0$}{
            \Return $(V, 0)$.
        }
        \For{phase $t = 1$ to $T$}{\label{densest:phases}
            \For{each node $v \in V$}{\label{densest:loads-loop}
                Let $[e_1, e_2, \dots, e_{\deg(v)}]$ be an ordered list of edges adjacent to $v$ sorted in non-decreasing order
                by $\ell(e_1) \leq \ell(e_2) \leq \cdots \leq \ell(e_{\deg(v)})$ (breaking ties by the ID of the other endpoint).\label{densest:sort-load}\\
                Sample $\dedge_v \sim \geom\left(\phasesnoise\right)$.\label{densest:dummy-edges}\\
                Initialize each $\halpha_{e_iv}^t \leftarrow 0$.\label{densest:initialize-new-load}\\
                Set $\halpha_{e_iv}^t\leftarrow 2$ for $i = 1, \dots, \left[\ceil{z/2} - 1 + \dedge_v\right]_0^{\deg(v)}$. \label{densest:assign-new-load}\\
            }
        \For{each integer $\ell \in [0, 4T]$}{\label{densest:search}
            Set $V_{\ell}' \leftarrow \emptyset$.\\
            \For{each node $v \in V$}{\label{densest:iterate-node}
                Sample $Z \sim \geom\left(\cutoffnoise\right)$.\label{densest:induced-degree-noise}\\
                \If{number of incident edges $e$ to $v$ with $\ell(e) \leq \ell$ is at least $\ceil{z/2} + Z - \frac{c_1 \log^4 n}{\eps}$}{\label{densest:check-induced}
                    $V_{\ell}' \leftarrow V_{\ell}' \cup \{v\}$.\label{densest:induced-subgraph}\\
                }
            }
            
            Sample $Y \sim \geom\left(\densecutoffnoise\right)$.\\
            \If{$G[V_{\ell}']$ is a non-empty graph with density $\geq z + Y - \cutoffsubfactor$}{\label{densest:density}
                \Return $(V_{\ell}', z)$\label{densest:subgraph}\\
            }
        }
        \For{each edge $e = (u, v) \in E$}{
            Set $\ell(e) \leftarrow \ell(e) + \halpha_{eu}^t + \halpha_{ev}^t$.\label{densest:update-load}
        }
        }

        \Return $(V, 0)$.\label{densest:noisy-density}
	}
\end{algorithm}

\conffull{}{
\cref{alg:dp-densest-z} operates over $T \leftarrow \densestphases$ phases %
for some sufficiently large constant $c_0 > 0$ (\cref{densest:phase}). 
Each phase updates the \emph{load} of each edge $e$, where the initial loads $\ell(e)$ of all edges are set to $0$ (\cref{densest:initial-load}).
Then, for each of the $T$ phases, in parallel, each node $v$ sorts its adjacent edges in non-decreasing order
according to their load (\cref{densest:sort-load}). The loads of the edges essentially correspond to variable values in the well-known 
dual of the densest subgraph LP (shown in~\cref{fig:densest-lp}).
Then, we sample from the symmetric geometric distribution
to create the dummy edges (\cref{densest:dummy-edges}). Let $X_v$ be the number of these dummy edges created for a node $v$.
There are two ways that these dummy edges affect the loads distributed on the real edges. First, if $X_v$ is negative, then we
create dummy edges that we assume have the \emph{smallest} loads among all adjacent edges
to $v$. In this case, these dummy edges then take up part of 
the load given by $v$ and hence, the real edges adjacent to $v$
altogether receive less total load than $z$. %
However, if $X_v$ is positive, then we imagine we create more open slots for additional
real edges to take on load from $v$. In this case, the sum of the loads received by all real edges adjacent to 
$v$ may be greater than $z$.
This intuition translates to the procedure for assigning loads to edges in~\cref{densest:assign-new-load}: each of the 
$\left[\ceil{z/2} - 1 + \dedge_v\right]_0^{\deg(v)}$ 
edges with the smallest loads 
is assigned a load of $2$.
After assigning new loads to the edges,
the algorithm attempts to find an approximate densest subgraph with sufficient density.

We iterate through each possible integer load in $[0, 4T]$
(\cref{densest:search}), which are the minimum and maximum possible load on any edge due to the fact that 
each endpoint of an edge adds at most a load of $2$ during each of the $T$ phases. 
For each possible integer load in $[0, 4T]$, we
check whether there exists a densest subgraph of sufficient density, specifically, density at least 
$z + Y - \cutoffsubfactor$ (\cref{densest:density}).
If so, then the algorithm returns a set of nodes $V'_\ell$, where the density of the induced subgraph consisting of the nodes in $V'_\ell$ approximates
the density of the densest subgraph in $G$ (\cref{densest:subgraph}).
To check for whether a sufficiently dense subgraph exists, first, each node $v \in V$ samples a noise $Z$ from the symmetric geometric 
distribution (\cref{densest:induced-degree-noise}). Then, node $v$ checks the number of incident edges $e$ to $v$ that
have load $\ell(e) \leq \ell$ that is at most the load of the current iteration, $\ell$. If the number of such 
incident edges is at least $\ceil{z/2} + Z -\frac{c_1 \log^4 n}{\eps}$ (\cref{densest:check-induced}), we add the node to the set $V'_{\ell}$. We then check the density of the subgraph induced by all nodes that pass the check (\cref{densest:induced-subgraph,densest:density}). 
If the density of this
subgraph is sufficiently large, we return the nodes in $V_{\ell}'$ as the approximate densest subgraph (\cref{densest:subgraph}).
Otherwise, if the density is not sufficiently large, we update the loads of each of the edges and 
proceed with the next iteration of the algorithm (\cref{densest:update-load}). We show in~\cref{sec:approx} that we
return a subgraph of sufficiently large density \whp if $z$ is sufficiently small (but not too small)
compared to the optimum density $D^*$.
}

\conffull{}{
We first prove that~\cref{alg:dp-densest} is $\eps$-edge DP in~\cref{sec:densest-privacy}. As the main intermediate step,
we prove that~\cref{alg:dp-densest-z}
is $\left(\frac{\eps}{\log_{(1+\coren)}n}\right)$-edge DP in~\cref{lem:densest-z-dp}. Then, we prove that 
we return a set of nodes where the induced subgraph on this 
set gives our desired approximation factor in~\cref{sec:approx}. Finally, we analyze the 
runtime of our algorithm in~\cref{sec:densest-runtime}.

\subsection{$\eps$-Edge DP Privacy Guarantees}\label{sec:densest-privacy}

We prove~\cref{alg:dp-densest} is $\eps$-edge DP in this section. We first 
prove our main lemma that~\cref{alg:dp-densest-z} is
$\frac{\eps}{\log_{(1+\coren)} n}$-edge DP in \cref{lem:densest-z-dp}.
Then, we prove \cref{thm:densest-dp} via composition (\cref{thm:composition})
over $\log_{(1+\coren)} n$ calls to the algorithm. 
The key challenge in our analysis is that although we only use symmetric geometric noise in our algorithm, 
we sometimes cannot use the proof of the privacy of the geometric mechanism directly.
and additional analysis is necessary. Nevertheless, we show that our analysis is quite intuitive and 
can also be applied to our framework in~\cref{sec:framework}.
In the following proofs, we denote edge-neighboring graphs by $G = ([n], E)$ and $G' = ([n], E \cup \{e'\})$. We 
also use the terminology \textbf{$\eps$-indistinguishable distributions} to denote two distributions 
where the probabilities of any event 
differs by at most a factor of $e^{\eps}$ between the distributions. It follows by definition that two symmetric geometric
distributions with the same parameter and whose means differ by at most $1$ are $\eps$-indistinguishable.
Furthermore, suppose that we have the pairs of distributions $(A_1, B_1)$ and $(A_2, B_2)$ where each pair is 
$\eps$-indistinguishable. If all four distributions are independent, then, the joint distributions for $(A_1, B_1)$
and $(A_2, B_2)$ are $2\eps$-indistinguishable.

\begin{lemma}\label{lem:densest-z-dp}
    \cref{alg:dp-densest-z} is $\frac{\eps}{\log_{(1+\coren)} n}$-edge DP. %
\end{lemma}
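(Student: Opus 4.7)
The plan is to decompose \cref{alg:dp-densest-z} into a sequence of noisy queries and apply adaptive composition (\cref{thm:composition}). Fix edge-neighboring graphs $G = ([n], E)$ and $G' = ([n], E \cup \{e'\})$ with $e' = \{u_*, v_*\}$. Across the $T$ phases I identify three families of noise: (i) the dummy-edge counts $\dedge_v^t \sim \geom(\phasesnoise)$ for each node $v$ and phase $t$; (ii) the induced-degree noises $Z_{v,\ell}^t \sim \geom(\cutoffnoise)$ for each node $v$, iteration $\ell \in [0, 4T]$, and phase $t$; and (iii) the density noises $Y_\ell^t \sim \geom(\densecutoffnoise)$ for each $(\ell, t)$. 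The output $(V', z)$ is a deterministic post-processing (\cref{thm:post-processing}) of the transcript of noise values together with the induced loads and subsets $V'_\ell$, so it suffices to bound privacy of the transcript.

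For each family I would verify sensitivity $1$ and invoke \cref{lem:sgd-private}. For the dummy-edge step, only $u_*$ and $v_*$ have different adjacency lists, and the quantity $[\ceil{z/2} - 1 + \dedge_v^t]_0^{\deg(v)}$ controlled by the geometric mechanism has sensitivity $1$ only for those two nodes; the cumulative contribution is $2T \cdot \phasesnoise = \eps/(3\log_{(1+\coren)} n)$. For the induced-degree step, the count $\#\{e \ni v : \ell(e) \leq \ell\}$ likewise changes by at most $1$ only for $v \in \{u_*, v_*\}$ (the edge $e'$ may or may not contribute), so the contribution is $2T(4T+1) \cdot \cutoffnoise = \eps/(3\log_{(1+\coren)} n)$. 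For the density step, once $V'_\ell$ is fixed by the past transcript, $|E(G[V'_\ell])|/|V'_\ell|$ has sensitivity at most $1$ (edge $e'$ can change the numerator by at most $1$), contributing $T(4T+1) \cdot \densecutoffnoise = \eps/(3\log_{(1+\coren)} n)$. Summing the three contributions yields the claimed $\eps/\log_{(1+\coren)} n$ bound.

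The main subtlety is that the edge loads, and therefore the functions being queried in later phases, are themselves random and can diverge between the executions on $G$ and $G'$. I would handle this via the adaptive form of \cref{thm:composition}: each individual noisy query is DP with a parameter that does not depend on the past transcript, so composing them yields a DP algorithm with the sum of the parameters. A secondary subtlety is the clipping $[\,\cdot\,]_0^{\deg(v)}$ in the dummy-edge step, which breaks pure translation-invariance of the geometric distribution; I would address this by viewing the clipped value as post-processing of the noisy sum $\ceil{z/2} - 1 + \dedge_v^t$, with the deterministic cap $\deg(v)$ being an adjacency-dependent quantity whose sensitivity on $\{u_*, v_*\}$ is already absorbed in the sensitivity-$1$ analysis above.
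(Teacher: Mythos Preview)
Your decomposition into three noise families and the privacy-budget arithmetic are exactly what the paper does, and your treatment of families (ii) and (iii) as sensitivity-$1$ counts plus the geometric mechanism is correct. The gap is in family (i), the load-assignment step.

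You propose to release the clipped cutoff $[\lceil z/2\rceil-1+X_v^t]_0^{\deg(v)}$ and argue sensitivity $1$ via the cap $\deg(v)$. But the quantity that actually matters for downstream phases is the load assignment on the \emph{common} edges $E$, and that is not a post-processing of the cutoff you release. In $G'$ the extra edge $e'$ is inserted into $v_*$'s sorted incident-edge list at some position $k$; with the \emph{same} cutoff value $c$, the set of $E$-edges receiving load $2$ in $G$ is $\{e_1,\ldots,e_c\}$, whereas in $G'$ it is $\{e_1,\ldots,e_{c-1}\}$ whenever $k\le c$ (the $E$-edge at position $c$ gets displaced by $e'$). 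Thus identical transcripts of the cutoffs you propose to release still produce different load vectors on $E$, so the ``state'' on which later phases condition diverges, and adaptive composition as you invoke it does not apply. Your clipping remark addresses a genuine but secondary boundary effect and misses this shift.

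The paper closes this gap by taking, as the per-phase ``release,'' the load assignment restricted to $E$ (its $J^t$) rather than the cutoff. Conditioned on past $J^1,\ldots,J^{t-1}$ agreeing, the sorted order of $E$-edges at $v_*$ is the same in both graphs, and the number of $E$-edges receiving load from $v_*$ can be written as $\mathrm{IDEAL}_{v_*}+X_{v_*}$, where $\mathrm{IDEAL}_{v_*}$ differs by at most $1$ between $G$ and $G'$ (depending on whether $e'$ lands before or after position $\lceil z/2\rceil-1$). That shift-by-one coupling, applied to each of $u_*,v_*$, is what justifies the $\eps/(3\log_{(1+\eta)}n)$ contribution you claim for family (i). Once you replace your cutoff-release by this $E$-restricted load release and supply the coupling, the rest of your argument goes through unchanged.
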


\begin{proof}
    \cref{alg:dp-densest-z} iterates through $T$ phases (\cref{densest:phases}) where in each phase,
    it first assigns loads to edges 
    (\cref{densest:sort-load,densest:dummy-edges,densest:initialize-new-load,densest:assign-new-load}). The
    assigned loads are added to a cumulative load at the end of the procedure (\cref{densest:update-load}).
    After assigning loads to edges, 
    it iterates through $4T$ integer loads (\cref{densest:search}). 
    For each load, the algorithm iterates through 
    each node (\cref{densest:iterate-node}) and determines whether to add the node to $V_{\ell}'$ 
    (\cref{densest:induced-degree-noise,densest:check-induced,densest:induced-subgraph}). 
    After constructing $V_{\ell}'$, it determines whether to return $V_{\ell}'$ if the subgraph induced on
    $V_{\ell}'$ has sufficiently large density (\cref{densest:density,densest:subgraph}). 
    Finally, if no node is returned at the end of all phases and all iterations, all nodes in the graph
    are returned (\cref{densest:noisy-density}).
    
    We introduce notation for each of the following random variables:
    the loads assigned to each edge during each phase, indicator variables for whether each node is added to $V_{\ell}'$
    during each of the $4T$ iterations of each phase, and indicator variables for whether $V_{\ell}'$ is returned.
    We prove the privacy of our algorithm by conditioning on the states of random variables from previous iterations and phases
    and showing that with approximately equal probabilities the random variables in the current iteration take the same 
    values in edge-neighboring graphs. 
    
    We now define each of the random variables that we use in our proof. Given a graph $G = ([n], E)$, 
    we first define the random variables
    representing the loads assigned to each edge during each phase. 
    In other words, 
    Let $L^t$ be a set of triples $L^t = \{(e, \halpha^t_{ei}, \halpha^t_{ej}) \mid e \in E\}$ where each triple
    $(e, \halpha^t_{ei}, \halpha^t_{ej})$ contains an edge $e = (i, j) \in E$ 
    and the loads $\halpha^t_{ei}, \halpha^t_{ej}$ assigned to edge $e = (i, j)$ in phase $t$. 
    Then, let $L = \left(L^1, \dots, L^T\right)$ 
    be a sequence of load assignments to edges for each of the $T$ phases. The interesting case is when $z > 0$ as otherwise
    all nodes of the graph are returned when $z = 0$.
    Let $F(L^t, E^*)$ be a function that, given $L^t$ and a set of edges $E^*$, returns a set of loads
    $J^t$ where $J^t = \{(e, \halpha^t_{ei}, \halpha^t_{ej}) \in L^t \mid e \in E^*\}$
    is a set of tuples $(e, \halpha^t_{ei}, \halpha^t_{ej})$ from $L^t$ for each edge $e \in E^*$.
    
    Then, let $B^{\ell}_{v, t}(G)$ be the indicator random variable for node $v$, phase $t$, and load cutoff $\ell$ that
    is $1$ if node $i$ has at least $\ceil{z/2} + Z^{\ell}_{v, t} -\frac{c_1 \log^4 n}{\eps}$ 
    neighbors in $G$ with load at most $\ell$ (where
    $Z^{\ell}_{v, t} \sim \geom\left(\cutoffnoise\right)$) and $0$ otherwise. Finally, 
    let $C^{\ell}_{t}(G)$ be the indicator random variable for 
    phase $t$ and load cutoff $\ell$ in $G$ that is $1$ if the density of $V_{\ell}'$ satisfies the cutoff in
    \cref{densest:density} and 
    $0$ otherwise. If $C^{\ell}_{t}(G) = 1$ for some phase $t$ and load $\ell$, then we set the values 
    of \emph{all random variables} to $0$ for all iterations $\ell' > \ell$ in phase $t$ and for all phases 
    $t' > t$.

    We introduce several claims (\cref{claim:noisy-loads,claim:cutoff-load,claim:density}) 
    and their proofs before proving this lemma.    
    We first show the following claim about the loads assigned to edges in phase $t$ conditioned
    on the values of all random variables for phases less than $t$. 
    Let $\mathcal{J}$ be the set of all possible sets of triples of load assignments to edges
    in $E$ and let $S^t \in \mathcal{J}$.
    Let $(b_1, \dots, b_n)_{\ell, t} \in \{0, 1\}^n$ and $d^{\ell}_t \in \{0, 1\}$ for
    all $t \in [T]$ and $\ell \in \{0, \dots, 4T\}$.
    To set up~\cref{claim:noisy-loads}, %
    let $R^t$ be the sequence of events where for all phases less than a fixed $t \in [T]$,
    \begin{align*}
        &R^t(G) = (C^{4T}_{t-1}(G) = d^{4T}_{t-1}, \dots, C^0_{1}(G) = d^0_1,\\
        &\left(B^{4T}_{1, t-1}(G), \dots, B^{4T}_{n, t-1}(G)\right) = \left(b_1, \dots, b_n\right)_{4T, t-1}, \dots,
        \left(B^{0}_{1, 1}(G), \dots, B^{0}_{n, 1}(G)\right) = \left(b_1, \dots, b_n\right)_{0, 1},\\
        &J^{t-1}(G) = S^{t-1}, \dots, J^1(G) = S^1).
    \end{align*}
    Define $\hat{R}^{t}$ the same way as $R^t$, substituting $G'$ for $G$.
    
    Note that since we are in the discrete setting, computing the probability expression
    in~\cref{claim:noisy-loads} is sufficient for privacy as opposed to the common $\frac{1}{e^{\eps/(4T)}} \leq
    \frac{\prob\left[J^t(G') \in S \mid \hat{R}^t(G')\right]}{\prob\left[J^t(G) \in S \mid R^t(G)\right]} 
    \leq e^{\eps/(4T)}$ for any $S \subseteq \mathcal{J}$, which is necessary for continuous settings. 
    
    \begin{claim}\label{claim:noisy-loads}
    For all edge-neighboring graphs $G = ([n], E)$ and $G' = ([n], E \cup \{e'\})$ and all $S^t \in \mathcal{J}$,
        $$\frac{1}{e^{\eps/(3T\log_{(1+\coren)}n)}} \leq \frac{\prob\left[J^t(G') = S^t \mid R^t(G')\right]}{\prob\left[J^t(G) = S^t \mid R^t(G)\right]} \leq e^{\eps/(3T\log_{(1+\coren)}n)}.$$ 
    \end{claim}
    
    \begin{proof}
    First, note that only \cref{densest:loads-loop,densest:sort-load,densest:dummy-edges,densest:assign-new-load,densest:initialize-new-load,densest:update-load}
    assign loads to edges in~\cref{alg:dp-densest-z}. 
    Let $J^t(G)$ be the random variable representing the set of loads returned by $F(L^t, E)$ for phase $t$. 
    Let $S = (S^1, \dots, S^{t-1})$ be a sequence of sets of load values on edges in $E$. 
    We prove, for all $t \in [T]$, that
    \begin{align}
        \frac{1}{e^{\eps/(3T\log_{(1+\coren)}n)}} \leq \frac{\prob\left[J^t(G') = S^t \mid R^t(G')\right]}{\prob\left[J^t(G) = S^t \mid R^t(G)\right]} \leq e^{\eps/(3T\log_{(1+\coren)}n)}.
    \end{align}
    First, conditioned on the previous sets of loads being the same for edges
    in $J^t(G)$ and $J^t(G')$ 
    for phases in $[t-1]$,
    the load at the beginning of phase $t$ on $e \in E$ is given by $\ell_t(e) = \ell'_t(e)$ (where $\ell_t(e)$ is the load
    on $e$ in phase $t$ in $G$ and $\ell'_t(e)$ is the load on $e$ in phase $t$ in $G'$)
    since each $\ell_t(e) = \sum_{r = 1}^{t-1} \left(\halpha^r_{ei} + \halpha^r_{ej}\right)$ for all $e \in E$. 
    Then, the distribution on the loads assigned to
    $\halpha^t_{ei}$%
    and $\halpha^t_{ej}$ are the same in $G$ and $G'$
    for edges $e = (i, j) \in E$ where \emph{neither endpoint} is $u$ or $v$. 
    It remains to consider loads assigned to edges in $E$ where exactly one of the endpoints is $u$ or $v$.
    
    Let $(e_1, \dots, e_{|N(v)|})$ denote the order of the edges determined by loads in $\{L^1(G), \dots, L^{t-1}(G)\}$ and
    let $(\hat{e}_1 \dots, \hat{e}_{|N(v)| + 1})$ denote the order of the edges $N(v) \cup \{e'\}$ 
    determined by loads in $\{L^1(G'), \dots, L^{t-1}(G')\}$ (\cref{densest:sort-load}). 
    Let edge $e' = (u, v)$ be the $k$-th edge in $(\hat{e}_1 \dots, \hat{e}_{|N(v)| + 1})$.
    Since we conditioned on $J^{t-1}(G') = J^{t-1}(G) = S^{t-1}, \dots, J^{1}(G') = J^1(G) = S^1$, 
    it follows that $e_p = \hat{e}_p$ for 
    all $p \in [k - 1]$ and $e_{p} = \hat{e}_{p + 1}$ for all $p \in \{k, \dots, |N(v)|\}$.
    Since $G'$ has an additional edge, $\deg(v) = |N(v)|$ and $\deg'(v) = |N(v)| + 1$
    for all nodes $v$. For the purposes of this proof, 
    we pad each of the two sequences $OrderSeq(G) = (e_1 \dots, e_{|N(v)|})$ and
    $OrderSeq(G') = (\hat{e}_1 \dots, \hat{e}_{|N(v)| + 1})$ with extra \emph{filler} edges on the right ends to produce
    $(e_1, \dots, e_{|N(v)|}, f_1, f_2, f_3, \dots, f_{\ceil{z/2} - 1 - |N(v)| + 1})$ and 
    $(\hat{e}_1 \dots, \hat{e}_{|N(v)| + 1}, f_1, f_2, 
    f_3, \dots, f_{\ceil{z/2}-1 - |N(v)| - 1})$. These filler edges are unique and given in the same order in both 
    sequences. This means that the position $\ceil{z/2} - 1$ falls on an edge in both sequences. We now 
    define the \emph{ideal cutoff} for the sequence $OrderSeq(G)$ to be $\ceil{z/2} - 1$ and for the sequence $OrderSeq(G')$ to be the position of the
    $\left(\ceil{z/2} - 1\right)$-st edge in $OrderSeq(G)$.
    To state in another 
    way, if we let $\hat{e}$ be the $(\ceil{z/2} - 1)$-st edge in $OrderSeq(G')$, we find the position of $\hat{e}$ in $OrderSeq(G)$
    (if $\hat{e} = e'$, then we take the edge to its right). The position of this edge is either at position 
    $\ceil{z/2} - 1$ in $OrderSeq(G)$ or at position $\ceil{z/2} - 2$ in $OrderSeq(G)$. Trivially, the ideal cutoff 
    for the $(\ceil{z/2} - 1)$-st edge in $OrderSeq(G)$ is $\ceil{z/2} - 1$. We now define the following variables
    with respect to these ideal cutoffs
    which we denote by $IDEAL_v(G')$ and $IDEAL_v(G)$. Conditioned on $R^t(G)$ and $R^t(G')$, these
    ideal cutoffs are fixed.
    
    Let $CUT^t_v(G) = IDEAL_v(G) + X_v$ be the symmetric geometric variable 
    with mean $IDEAL_v(G)$ for $G$ in phase $t$ and $CUT^t_v(G') = IDEAL_v(G') + \hat{X}_v$
    be the variable for $G'$. Then, the conditional distributions (conditioned on $R^t(G)$ and $R^t(G')$, respectively)
    for $CUT^t_v(G)$ and $CUT^t_v(G')$ are symmetric geometric distributions with parameter given in~\cref{densest:dummy-edges}
    and with means $IDEAL_v(G)$ and $IDEAL_v(G')$, respectively.
    These random variables are drawn from symmetric geometric distributions where the mean differs by at most $1$. Hence, 
    these distributions are $\frac{\eps}{6T\log_{(1+\coren)}n}$-indistinguishable by the parameter given 
    in~\cref{densest:dummy-edges}. 
    The same argument above holds for $u$.
    
    Now, we consider the joint distributions for the variables $CUT^t_v(G)$ and 
    $CUT^t_u(G)$ (and respectively, $CUT^t_v(G')$ and $CUT^t_u(G')$). 
    Conditioned on $R^t(G)$ and $R^t(G')$, the orderings are fixed for $u$ and $v$ which means that $IDEAL_u(G)$ and $IDEAL_v(G)$ are also fixed; 
    thus, the variables 
    $CUT^t_u(G)$ and $CUT^t_v(G)$ are independently drawn, the events for each endpoint are independent
    and so the joint conditional probability distributions are $\frac{\eps}{3T\log_{(1+\coren)}n}$-indistinguishable, 
    proving our claim.
    \end{proof}
    
    Intuitively, \cref{claim:noisy-loads} shows that for the set of edges $E$ in both $G$ and $G'$, our algorithm assigns
    the same loads to edges in $E$ in $G$ and $G'$ with roughly equal probability, conditioned on the values of the random
    variables from previous iterations and phases.
    
    We show the following claim regarding the values of $B^{\ell}_{v, t}(G)$.
    Let $W$ be the sequences of events for iterations less than $\ell$ in phase $t$ and all phases less than $t$
    where 
    \begin{align*}
        &W = (C^{\ell - 1}_{t}(G) = d^{\ell-1}_{t}, \dots, C^0_{1}(G) = d^0_1,\\
        &\left(B^{\ell-1}_{1, t}(G), \dots, B^{\ell-1}_{n, t}(G)\right) = \left(b_1, \dots, b_n\right)_{\ell-1, t}, \dots,
        \left(B^{0}_{1, 1}(G), \dots, B^{0}_{n, 1}(G)\right) = \left(b_1, \dots, b_n\right)_{0, 1},\\
        &J^{t}(G) = S^{t}, \dots, J^1(G) = S^1).
    \end{align*}
    Define $W'$ the same way as $W$, substituting $G'$ instead of $G.$
    
    \begin{claim}\label{claim:cutoff-load}
        Given two edge-neighboring graphs $G$ and $G'$, let $(B^{\ell}_{1, t}(G), \dots, B^{\ell}_{n, t}(G))$ be the 
        sequence of random variables and $(b_1, \dots, b_n)_{\ell, t} \in \{0, 1\}^n$ for all nodes $i \in [n]$. 
        Then, for any phase $t \in [T]$ and load $\ell \in \{0, \dots, 4T\}$, 
        \begin{align*}
            \frac{1}{e^{\eps/(3T(4T+1)\log_{(1+\coren)}n)}} \leq \frac{\prob[(B^{\ell}_{1, t}(G'), \dots, B^{\ell}_{n, t}(G')) = (b_1, \dots, b_n)_{\ell, t} \mid W']}{\prob[(B^{\ell}_{1, t}(G), \dots, B^{\ell}_{n, t}(G)) = (b_1, \dots, b_n)_{\ell, t}\mid W]} \leq e^{\eps/(3T(4T+1)\log_{(1+\coren)}n)}.
        \end{align*}
    \end{claim}
    
    \begin{proof}
        Let $Z^{\ell}_{v, t}$ and $\hat{Z}^\ell_{v, t}$ be the noise drawn in phase $t$ and iteration $\ell$ for $i$ 
        in~\cref{densest:induced-degree-noise} of~\cref{alg:dp-densest-z} for $G$ and $G'$, respectively.
        Conditioned on $J^t(G) = S^t$ and $J^t(G') = S^t$, edge $e' = (u, v) \in E' \setminus E$ has some arbitrary load
        from $L^t(G')$ and all other edges have the same load in phase $t$ in $G$ and $G'$. For all nodes $i \neq u, v \in [n]$,
        the conditional distributions of the random variables $B^{\ell}_{i, t}(G)$ and $B^{\ell}_{i, t}(G')$ are the same given
        $W'$ and $W$. Now we consider the conditional distributions of $B^{\ell}_{v, t}(G)$ and $B^{\ell}_{v, t}(G')$.
        Let $load^{\ell}_{v, t}(G)$ and $load^{\ell}_{v, t}(G')$ be the number of adjacent edges to $v$ in $G$ and $G'$, respectively,
        with load at most $\ell$ in phase $t$; conditioned on $W$ and $W'$, respectively, $load^{\ell}_{v, t}(G)$ and $load^{\ell}_{v, t}(G')$ are fixed and $load^{\ell}_{v, t}(G) \leq load^{\ell}_{v, t}(G') \leq load^{\ell}_{v, t}(G) + 1$ since
        $v$ is adjacent to one additional edge in $G'$ which can have load at most $\ell$. Then, $B^{\ell}_{v, t}(G) = 1$
        when $LOAD^{\ell}_{v, t}(G) = Z^{\ell}_{v, t} - load^{\ell}_{v, t}(G) 
        + \ceil{z/2} - \frac{c_1 \log^4 n}{\eps} \leq 0$ (and symmetrically for $B^{\ell}_{v, t}(G) = 0$). We define
        the random variable $LOAD^{\ell}_{v, t}(G')$ the same way except substituting $\hat{Z}^{\ell}_{v, t}$
        and $load^{\ell}_{v, t}(G')$. The random variables $LOAD^{\ell}_{v, t}(G)$ and $LOAD^{\ell}_{v, t}(G')$ are
        symmetric geometric variables drawn from distributions with the same parameter and where the mean differs by 
        at most $1$. Thus, the conditional distributions for these two random variables are $\left(\cutoffnoise\right)$-indistinguishable 
        (using the parameter from~\cref{densest:induced-degree-noise}).
        Finally, the conditional distributions for $LOAD^{\ell}_{v, t}(G)$ and $LOAD^{\ell}_{u, t}(G)$
        are independent since $load^{\ell}_{v, t}(G) 
        + \ceil{z/2} - \frac{c_1 \log^4 n}{\eps}$ are fixed and $Z^{\ell}_{v, t}$ and $\hat{Z}^{\ell}_{v, t}$ are drawn
        independently. (The same holds for the distributions of the corresponding random variables for $G'$.) Hence,
        the conditional joint distributions for the pairs in $G$ and the pairs in $G'$ are $\left(\frac{\eps}{3(4T+1)T\log_{(1+\coren)}n}\right)$-indistinguishable.
    \end{proof}
    
    Finally, we return the set of nodes in $V_{\ell}'$ in~\cref{densest:density,densest:noisy-density}
    if the density of the induced subgraph $G[V_{\ell}']$ is at least 
    $z + Y - \cutoffsubfactor$ and
    if no induced density satisfies the cutoff, respectively.
    Similarly to the previous claims, we can prove the following claim for $C^{\ell}_{t}(G)$
    and $C^{\ell}_{t}(G')$ conditioned on the sequence of indicator random variables $(B^{\ell}_{1, t}(G), \dots, 
    B^{\ell}_{n, t}(G))$ and $(B^{\ell}_{1, t}(G'), \dots, B^{\ell}_{n, t}(G'))$. 
    
    Let $Q$  be the 
    sequence of events where 
    \begin{align*}
        &Q = (C^{\ell - 1}_{t}(G) = d^{\ell-1}_{t}, \dots, C^0_{1}(G) = d^0_1,\\
        &\left(B^{\ell}_{1, t}(G), \dots, B^{\ell}_{n, t}(G)\right) = \left(b_1, \dots, b_n\right)_{\ell, t}, \dots,
        \left(B^{0}_{1, 1}(G), \dots, B^{0}_{n, 1}(G)\right) = \left(b_1, \dots, b_n\right)_{0, 1},\\
        &J^{t}(G) = S^{t}, \dots, J^1(G) = S^1).
    \end{align*}
    We define $Q'$ the same way as $Q$, substituting $G'$ instead of $G.$
    
    \begin{claim}\label{claim:density}
        Given two edge-neighboring graphs $G$ and $G'$, let $Q$ and $Q'$ be the sequences of 
        events as defined above. Let $C^{\ell}_{t}(G)$  and $C^{\ell}_t(G')$ be the 
        indicator random variables for $G$ and $G'$, respectively, as defined above, 
        and let $d^\ell_t \in \{0, 1\}$. 
        Then, for any phase $t \in [T]$ and integer load $\ell \in \{0, \dots, 4T\}$, 
        \begin{align*}
            \frac{1}{e^{\eps/(3T(4T+1)\log_{(1+\coren)}n)}} \leq \frac{\prob[C^{\ell}_t(G') = d^\ell_t \mid Q']}{\prob[C^{\ell}_t(G) = d^\ell_t \mid Q]} \leq e^{\eps/(3T(4T+1)\log_{(1+\coren)}n)}.
        \end{align*}
    \end{claim}
    
    \begin{proof}
        This proof follows the proof of~\cref{claim:cutoff-load} almost identically except for the use of
        different random variables.
        Given the conditioning on events $Q$ and $Q'$, the set of nodes in $V_{\ell}'$ are fixed
        and the same in phase $t$
        in both $G$ and $G'$. Then, the sensitivity of a function that returns the 
        density $\dense(G[V'])$ of any subset $V' \subseteq V$ 
        of nodes is $1$. It follows that $\dense(G[V_{\ell}']) \leq \dense(G'[V_{\ell}']) \leq
        \dense(G[V_{\ell}']) + 1$. This means that we can again argue that the random variables 
        $DENSE^{\ell}_t(G) = Y^\ell_t + z - \dense(G[V_{\ell}']) - \cutoffsubfactor$ and 
        $DENSE^{\ell}_t(G')$ (defined the same way except substituting $\hat{Y}^\ell_t$ and $\dense(G'[V_{\ell}'])$)
        are symmetric geometric variables drawn from distributions with the same parameter and where the mean of the 
        distributions differ by at most $1$. Thus, these two distributions are $\left(\densecutoffnoise\right)$-indistinguishable
        and the claim follows.
    \end{proof}
    
    We now apply the chain rule 
    on~\cref{claim:noisy-loads,claim:cutoff-load,claim:density} to prove our 
    lemma. During each phase $t \in [T]$,
    we first compute the new loads assigned
    to $\halpha_{eu}$
    for all $e \in E$ and $u \in e$. Then, we iterate
    through all loads $\ell \in \{0, \dots, 4T\}$
    to obtain the random variables $B^{\ell}_{v, t}(G)$
    and $C^{\ell}_t$. We apply the
    chain rule as follows:
    \begin{align*}
         A &= \frac{\prob[C^{4T}_T(G') = d^{4T}_T \mid Q']}{\prob[C^{4T}_T(G) = d^{4T}_T \mid Q]}
         \times \frac{\prob[(B^{4T}_{1, T}(G'), \dots, B^{4T}_{n, T}(G')) = (b_1, \dots, b_n)_{4T,T} \mid W']}{\prob[(B^{4T}_{1, T}(G), \dots, B^{4T}_{n, T}(G)) = (b_1, \dots, b_n)_{4T,T}\mid W]}\\ 
         &\times \frac{\prob\left[J^{T}(G') = S^{T} \mid R'\right]}{\prob\left[J^{T}(G) = S^{T} \mid R\right]} 
         \times \cdots \times
         \frac{\prob[C^{0}_1(G') = d^{0}_1 \mid (B^{0}_{1, 1}(G'), \dots, B^{0}_{n, 1}(G')) = (b_1, \dots, b_n)_{0, 1}, J^{1}(G') = S^{1}]}{\prob[C^{0}_1(G) = d^{0}_1 \mid (B^{0}_{1, 1}(G), \dots, B^{0}_{n, 1}(G)) = (b_1, \dots, b_n)_{0, 1}, J^{1}(G) = S^{1}]}\\
         &\times \frac{\prob[(B^{0}_{1, 1}(G'), \dots, B^{0}_{n, 1}(G')) = (b_1, \dots, b_n)_{0, 1} \mid J^{1}(G') = S^{1}]}{\prob[(B^{0}_{1, 1}(G), \dots, B^{0}_{n, 1}(G)) = (b_1, \dots, b_n)_{0, 1}\mid J^{1}(G) = S^{1}]}
         \times \frac{\prob\left[J^{1}(G') = S^{1}\right]}{\prob\left[J^{1}(G) = S^{1}\right]}\\
         &= \frac{\prob[C^{4T}_T(G') = d^{4T}_T \cap (B^{4T}_{1, T}(G'), \dots, B^{4T}_{n, T}(G')) = (b_1, \dots, b_n)_{4T,T} \cap
         C^{0}_1(G') = d^{0}_1 \cap J^{T}(G') = S^{T} \cap \cdots]}{\prob[C^{4T}_T(G) = d^{4T}_T \cap (B^{4T}_{1, T}(G), \dots, B^{4T}_{n, T}(G)) = (b_1, \dots, b_n)_{4T,T} \cap
         C^{0}_1(G) = d^{0}_1 \cap J^{T}(G) = S^{T} \cap \cdots]}
    \end{align*}
    
    By~\cref{claim:cutoff-load,claim:density,claim:noisy-loads}, we can upper and lower bound $A$ by 
    $\left(\frac{1}{e^{\eps/(3T\log_{(1+\coren)}n)}}\right)^T \cdot \left(\frac{1}{e^{\eps/(3T(4T+1)\log_{(1+\coren)}n)}}\right)^{T(4T+1)} \cdot \left(\frac{1}{e^{\eps/(3T(4T+1)\log_{(1+\coren)}n)}}\right)^{T(4T+1)} \leq 
    \frac{1}{e^{\eps/\log_{(1+\coren)}n}} \leq A \leq e^{\eps/\log_{(1+\coren)}n} \leq \left(e^{\eps/(3T\log_{(1+\coren)}n)}\right)^T \cdot \left(e^{\eps/(3T(4T+1)\log_{(1+\coren)}n)}\right)^{T(4T+1)}
    \cdot \left(e^{\eps/(3T(4T+1)\log_{(1+\coren)}n)}\right)^{(4T+1)T}$, proving \cref{lem:densest-z-dp}.
\end{proof}

\begin{lemma}\label{thm:densest-dp}
    \cref{alg:dp-densest} is $\eps$-edge DP.
\end{lemma}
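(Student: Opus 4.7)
The plan is to derive Lemma~\ref{thm:densest-dp} as a direct corollary of Lemma~\ref{lem:densest-z-dp} using sequential adaptive composition and the post-processing theorem. Algorithm~\ref{alg:dp-densest} accesses the private graph $G$ only through the $\floor{\log_{(1+\coren)}n}$ calls to \FnDPDensestZ on line~\ref{line:call-subroutine}. Every other operation in the outer loop -- setting $z = (1+\coren)^i$, updating $V_{\max}$ based on whether $x \neq 0$, and returning $V_{\max}$ -- is either independent of $G$ or a deterministic function of the outputs of the subroutine calls.

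First, I would invoke Lemma~\ref{lem:densest-z-dp} to conclude that each individual call to \FnDPDensestZ is $\frac{\eps}{\log_{(1+\coren)}n}$-edge DP. Note that these calls are adaptive only in the trivial sense that the loop index $i$, and hence the parameter $z$, is determined in advance and does not depend on the previous outputs; nevertheless, it suffices to appeal to the adaptive composition theorem (Theorem~\ref{thm:composition}) for a sequence of $\floor{\log_{(1+\coren)}n}$ mechanisms, each with privacy parameter $\frac{\eps}{\log_{(1+\coren)}n}$. This yields overall privacy parameter
\[
\floor{\log_{(1+\coren)}n} \cdot \frac{\eps}{\log_{(1+\coren)}n} \;\leq\; \eps
\]
for the joint release of the transcript $\bigl((S_1,x_1), \ldots, (S_{\floor{\log_{(1+\coren)}n}}, x_{\floor{\log_{(1+\coren)}n}})\bigr)$.

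Second, the final output $V_{\max}$ is obtained by a deterministic post-processing of this transcript: $V_{\max}$ is just the value of $S_i$ for the largest $i$ with $x_i \neq 0$, or $V$ if no such $i$ exists. By the post-processing theorem (Theorem~\ref{thm:post-processing}), this post-processing does not increase the privacy parameter, so Algorithm~\ref{alg:dp-densest} is $\eps$-edge DP.

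The only substantive step is the invocation of Lemma~\ref{lem:densest-z-dp}, which is already established; there is no genuine obstacle here. The one subtlety worth flagging is confirming that the adaptive composition theorem applies even though each subroutine is itself an interactive procedure with internal randomness -- this is handled by viewing each call to \FnDPDensestZ as a single mechanism whose output is its entire return value, so that the composition bookkeeping happens only at the level of the outer loop.
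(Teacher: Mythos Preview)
Your proposal is correct and matches the paper's own proof essentially step for step: the paper also invokes Lemma~\ref{lem:densest-z-dp} and then applies the composition theorem (Theorem~\ref{thm:composition}) over the $\log_{(1+\coren)}n$ calls to \FnDPDensestZ. Your explicit mention of post-processing for the final $V_{\max}$ is a small elaboration the paper leaves implicit, but otherwise the arguments are identical.
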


\begin{proof}
    \cref{alg:dp-densest} calls~\cref{alg:dp-densest-z} a total of $\log_{(1+\coren)}n$ times.
    By~\cref{lem:densest-z-dp} and by the composition theorem (\cref{thm:composition},
    \cref{alg:dp-densest-z} is $\frac{\eps}{\log_{(1+\coren)} n}$-edge DP and by applying composition $\log_{(1+\coren)}n$
    times, \cref{alg:dp-densest-z} is $\eps$-edge DP.
\end{proof}

\subsection{Approximation Analysis}\label{sec:approx}
The proof of our approximation guarantee relies on the LP formulation of the densest subgraph problem and 
its dual introduced by Charikar~\cite{Charikar00}. We modify
the proofs of Su and Vu~\cite{SuVu20} to prove the approximation factor
of our algorithm in the $\eps$-edge DP setting; the application of the 
multiplicative weights framework to densest subgraph was 
first given in~\cite{BGM14} and the analysis was made explicit for the densest 
subgraph problem by~\cite{SuVu20}. Due to the added noise, 
all of the proofs for the original algorithm must be modified to account
for the additional error. The primal LP solves the densest subgraph 
problem while its dual is traditionally used for the (fractional) lowest out-degree orientation problem.
The primal (left) and its dual (right) are shown in~\cref{fig:densest-lp}. We use the 
formulation of the LP and its dual that is given in \cite{SuVu20} and is also present in~\cite{BGM14}. 
In our analysis, we let $\optdensity$ be the density of the densest subgraph in the input graph. 

Charikar~\cite{Charikar00} showed, via a rounding algorithm, that the optimal value of the LP is exactly the density of the densest subgraph in the graph.
We use the notation from~\cite{SuVu20} when describing solutions to the primal and dual. Specifically, we say that a solution to
the primal satisfies $\primal(A)$ if it is a feasible solution whose objective value is at least $A$. 
A dual solution satisfies $\dual(B)$ if it is a feasible solution with objective value at most $B$. 
Let $D^*$ be the density of the densest subgraph in the input graph.
We reiterate that the primal is feasible if $A \leq \optdensity$
and the dual is feasible if $B \geq \optdensity$.
There are two main aspects of the analysis. As in Su and Vu~\cite{SuVu20}, for each value of $z$ and appropriately large constant $c \geq 1$,
we show that at least one of the following conditions is satisfied:

\begin{itemize}
    \item \cref{alg:dp-densest-z} returns a subgraph which satisfies $\primal\left((\kcoremultfactor)z - \cutoffsubfactor\right)$ with high probability.
    \item The loads on the edges given by~\cref{alg:dp-densest-z} form a solution to $\dual\left(\dualfactor\right)$ with high probability.
\end{itemize}

We use the above to show that if $z < \frac{D^* - \densestsubfactor}{1+10\coren}$, 
then our algorithm returns a subgraph with density at least 
$(\kcoremultfactor)z - \cutoffsubfactor$ \whp for sufficiently large constant $c \geq 1$ (that is also dependent on the 
constant parameter $\coren \in (0, \densestmultconst)$). 
To show this, we show that if $z < \frac{D^* - \densestsubfactor}{1+10\coren}$, then no feasible solution to 
$\dual(\dualfactor)$ exists, \whp. 
This means that our algorithm will obtain a $(\kcoremultfactor)z - \cutoffsubfactor$ approximate densest subgraph since in order for at least one of the two conditions above to be satisfied, the first condition must
be satisfied.

We use the following notation in our proofs. We define $w_e^t =
(1-\coren)^{\ell^t(e)}$ to be the \emph{weight} of edge $e$ in phase $t$ where
$\ell^t(e)$ is the load on edge $e$ in the \emph{beginning} of phase $t$. Specifically, $\ell^t(e)$ is the load on edge $e$ in phase
$t$ \emph{before} the update in~\cref{densest:update-load}.
Note that $\ell(e)$ is updated with a value in $\{0, 2\}$
for all phases $1 \leq t \leq T$, unlike the original algorithm given by~\cite{SuVu20} which updates the load with a potentially
non-integer value. We, in fact, only need to modify the proofs of Lemma 3.1 and Lemma 3.2 (and not the proof of Lemma 3.3) from
Su and Vu~\cite{SuVu20} since 
the loads on our edges are guaranteed to be integers throughout all phases of~\cref{alg:dp-densest-z}.
Our additive error allows us to update the loads with only integer values.
Thus, our modification to only update loads using integers actually allows us to simplify our accuracy proof in some aspects.

We use the bound on the symmetric geometric noise (\cref{lem:noise-whp-bound}) in our main lemmas. 
Now, we present the lemmas that we will use to prove our approximation factors. The structure of this section is 
as follows. First, in~\cref{lem:feasible-lp}, we show that if $z < \frac{D^* - \densestsubfactor}{1+10\coren}$, then there must exist a phase $t$ 
where $\sum_{e \in E} \left(w_e^t \left(\alpha_{eu}^t + \alpha_{ev}^t\right)\right) < \sum_{e \in E} w_e^t$. %
Then, in~\cref{lem:approx-densest-subgraph}, we 
show that if $\sum_{e \in E} \left(\hat{w}_e^t \left(\alpha_{eu}^t + \alpha_{ev}^t\right)\right) 
< \sum_{e \in E} \hat{w}_e^t$ for a
given set of weights $\hat{w}_e^t$, then
there exists a $\lambda$ where taking 
all nodes which are incident to at least $\ceil{z/2} + Z$ edges with weight at least $\lambda$
gives a subgraph with density at least $z - \cutoffsubfactor$, \whp, for a large enough constant $c \geq 1$. %
We show in~\cref{lem:integer-solution} that because the loads on all edges are
integers, the set of weights $w'_e$ obtained from~\cref{alg:dp-densest-z} gives a value for $\lambda = (1-\coren)^{\ell'}$ where $\ell'$ is
an integer in $[0, 4T]$ that satisfies the conditions of~\cref{lem:approx-densest-subgraph};
this directly leads to our final approximation factor given in~\cref{lem:densest-subgraph-approx}.
Unlike Lemma 3.2 in~\cite{SuVu20}, we do not need to prove~\cref{lem:approx-densest-subgraph} in terms of a 
factor $F \in (1/2, 1)$ since our additive error allows us to assign \emph{integer} loads to edges
upfront. The reason that Lemma 3.2 in~\cite{SuVu20} requires the factor of $F \in (1/2, 1)$ is that
they allow no additive error; thus, each node must assign all of its $z$ load to its adjacent edges (if it 
has degree at least $\ceil{z/2}$) and the load assigned to the $\ceil{z/2}$-th edge may not be an integer. In our
case, because we have an additional additive error of $O\left(\frac{\log^4 n}{\eps}\right)$, each node can afford to assign
$z \pm \frac{c\log^4 n}{\eps}$ load for constant $c \geq 1$.

\begin{lemma}\label{lem:feasible-lp}
    If $z < \frac{D^* - \densestsubfactor}{1+10\coren}$ for a large enough 
    constant $c \geq 1$ (depending on constant parameter $\coren > 0$), then whp there exists a phase $1 \leq t \leq T$ where $\sum_{e \in E} \left(w_e^t \left(\alpha_{eu}^t + \alpha_{ev}^t\right)\right)
    < \sum_{e \in E} w_e^t$.
\end{lemma}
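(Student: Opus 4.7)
The plan is to argue by contradiction: I will assume that $\sum_{e \in E} w_e^t(\alpha_{eu}^t + \alpha_{ev}^t) \geq \sum_{e \in E} w_e^t$ holds for \emph{every} phase $t \in [T]$ and derive that a suitable time average of the per-phase loads is a feasible solution to $\dual\left((1+10\coren)z + \densestsubfactor\right)$. By LP duality (Charikar \cite{Charikar00}), this forces $D^* \leq (1+10\coren)z + \densestsubfactor$, which contradicts the hypothesis $z < (D^* - \densestsubfactor)/(1+10\coren)$. The argument follows the template of the noiseless MWU analysis of Su and Vu \cite{SuVu20} (building on Bahmani et al.\ \cite{BGM14}), with the added task of showing that the $\poly(\log n)/\eps$ concentration error from the privacy noise fits inside $\densestsubfactor$.

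The main step will be a potential analysis of $\Phi_t := \sum_e w_e^t$. The recurrence $w_e^{t+1} = w_e^t(1-\coren)^{\halpha_{eu}^t + \halpha_{ev}^t}$, combined with the inequality $(1-\coren)^x \leq 1 - \coren x + O(\coren^2 x^2)$ applied to $x = \halpha_{eu}^t + \halpha_{ev}^t \in \{0,2,4\}$ (the fact that the loads are integer-valued is a simplification over the fractional step in Su--Vu) and the assumed per-phase inequality, yields $\Phi_{t+1} \leq \Phi_t(1 - \coren + O(\coren^2))$. Iterating over $T = \densestphases$ phases gives $\Phi_T \leq m \cdot \exp(-T(\coren - O(\coren^2)))$; since $w_e^T \leq \Phi_T$ for every edge $e$, each edge ends with $\ell^T(e) = \sum_t(\halpha_{eu}^t + \halpha_{ev}^t) \geq T(1 - O(\coren))$. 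Setting $\bar{\alpha}_{eu} := (1+O(\coren)) \cdot \tfrac{1}{T}\sum_{t=1}^T \halpha_{eu}^t$, rescaled to compensate the $O(\coren)$ coverage slack, will then yield a dual candidate satisfying $\bar{\alpha}_{eu} + \bar{\alpha}_{ev} \geq 1$ for every edge.

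It remains to verify the capacity constraint $\sum_{e \ni u}\bar{\alpha}_{eu} \leq (1+10\coren)z + \densestsubfactor$. \cref{densest:assign-new-load} bounds $u$'s total load per phase by $2\,[\ceil{z/2} - 1 + \dedge_u]_0^{\deg(u)} \leq z + 2|\dedge_u|$; applying \cref{lem:noise-whp-bound} to $\dedge_u \sim \geom(\phasesnoise)$ with a union bound over the $nT$ noise draws yields $|\dedge_u| = O(T\log^2 n/\eps) = O(\log^3 n/(\eps\coren^3))$ \whp, so $\sum_{e \ni u}\bar{\alpha}_{eu} \leq (1+O(\coren))\bigl(z + O(\log^3 n/(\eps\coren^3))\bigr) \leq (1+10\coren)z + \densestsubfactor$ for sufficiently large constant $c$ in $\densestsubfactor = \dualadditivefactor$. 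I expect the main obstacle to be threading these constants cleanly: the multiplicative slack must jump from the $(1+2\coren)$ that would appear in the noiseless MWU bound to $(1+10\coren)$ to absorb both the $(1+O(\coren))$ coverage rescaling and the cross term between this rescaling and the additive noise, and $T = \densestphases$ must be calibrated so that the MWU averaging error $O(\log m/(\coren T))$ is dominated by $O(\coren^2)$ while the per-phase geometric concentration still fits inside $\densestsubfactor$.
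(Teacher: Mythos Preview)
Your proposal is correct and follows essentially the same approach as the paper: both argue the contrapositive via the standard MWU potential analysis (from \cite{SuVu20,BGM14}) to show that the time-averaged loads scaled by $(1+10\coren)$ form a feasible $\dual\big((1+10\coren)z+\densestsubfactor\big)$ solution, with the capacity constraint handled by the \whp noise bound from \cref{lem:noise-whp-bound}. The only cosmetic difference is that the paper writes the potential in a pre-normalized form $\Phi(t)=\sum_e (1-\coren)^{\ell^t(e)}(1-\coren+8\coren^2)^{T-t}/(1-\coren+8\coren^2)^{(1-\coren)T}$ so that $\Phi(0)<1$ is the target, whereas you track the raw $\Phi_t=\sum_e w_e^t$ and extract the per-edge bound at the end; these are equivalent, and your constant-threading sketch matches what the paper does.
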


\begin{proof}
    We modify the proof of \cite[Lemma 3.1]{SuVu20}, which is a standard analysis of the multiplicative weights update
    method of~\cite{AHK12,You95}. As in \cite[Lemma 3.1]{SuVu20}, we prove this theorem via the following statements. 
    This proof makes use of a feasible solution to the dual (\cref{fig:densest-lp}).
    We make a small modification to our algorithm to obtain a potential set of variable values $\alpha_{eu}$ for every $e \in E, u \in e$ 
    for the dual. Namely, we obtain a value for each $\alpha_{eu}$ by computing $\alpha_{eu} = (1+10\coren) \cdot \frac{\sum_{t = 1}^T \alpha^{t}_{eu}}{T}$.
    Note that we do not return the solution to the dual in any part of our algorithm but only make use of it in the proof of this lemma.
    First, as noted before, $\dual(B)$ is feasible if and only if $B \geq \optdensity$. This means that if $B = \dualfactor < \optdensity$,
    then $\dual\left(\dualfactor\right)$ is not feasible. The quantity satisfies $\dualfactor < \optdensity$ 
    when $z < \zbound$. If $\dual\left(\dualfactor\right)$ is not feasible, then naturally our algorithm cannot return a feasible solution. 
    The final statement that we need to show is then: if our algorithm does not return a feasible solution to $\dual\left(\dualfactor\right)$, then there must 
    exist a phase $1 \leq t \leq T$ where $
    \sum_{e \in E} w_e^t \left(\alpha_{eu}^t + \alpha_{ev}^t\right) < \sum_{e \in E} w_e^t$. This
    set of statements directly proves our desired lemma statement. The first two statements are trivial to prove. The 
    crux of this proof is then focused on proving the last statement. To prove the last statement, it suffices to prove
    the contrapositive of the statement. Specifically, we spend the remaining part of this proof proving the statement:
    if $\sum_{e \in E} w_e^t \left(\alpha_{eu}^t + \alpha_{ev}^t\right) \geq \sum_{e \in E} w_e^t$ for all phases $1 \leq t \leq T$,
    then our algorithm returns a feasible solution to $\dual\left(\dualfactor\right)$. Note that returning a feasible solution
    given parameter $z$ necessarily implies that $\dualfactor \geq \optdensity$ and $z \geq \zbound$.
    
    In order for the returned solution to be a feasible solution to the dual, it must satisfy all constraints in the dual.
    We first prove that if $\sum_{e \in E} w_e^t \left(\alpha_{eu}^t + \alpha_{ev}^t\right) \geq \sum_{e \in E} w_e^t$ for all $1 \leq t \leq T$, 
    then $\sum_{e \ni u} \alpha_{eu}$ is upper bounded by $\dualfactor$. This satisfies one of the constraints in the dual. 
    Let $X^t_{u}$ be the noise chosen by node $u$ in phase $t$.
    We must upper bound the sum of the loads so cases when $X^t_u < 0$ are upper bounded by the cases
    when $X^t_u \geq 0$. If $X^t_{u} \geq 0$, then we potentially give weights of $2$ to $X^t_{u}$ additional adjacent edges to $u$. This means that the following holds:
    \begin{align}
        \sum_{e \ni u} \frac{\sum_{t = 1}^T \alpha_{eu}^t}{T} \cdot (1 + 10\coren) &= \frac{(1+10\coren)}{T} \cdot \sum_{t = 1}^T \sum_{e \ni u} \alpha_{eu}^t \leq \frac{(1+10\coren)}{T} \cdot \sum_{t = 1}^T \left(z + \frac{2d\log^3 n}{\eps \coren^3}\right)\label{eq:noise-upper}\\ 
        &= (1+10\coren) \cdot \left(z + \frac{2d\log^3 n}{\eps \coren^3}\right) \leq (1+10\coren)z + \dualadditivefactor\label{eq:simplify-t}\\
        \sum_{e \ni u} \alpha_{eu} &\leq (1+10\coren)z + \dualadditivefactor\label{eq:final-densest-bound}
    \end{align}
    for appropriately large constant values $c, d \geq 1$. The last expression in~\cref{eq:noise-upper} holds because $X_u^t$ is upper bounded by 
    $O\left(\frac{\log^3 n}{\eps}\right)$ by~\cref{lem:noise-whp-bound}. The first
    part of~\cref{eq:simplify-t} follows by canceling $T$. Then, the second expression follows by
    substituting a large enough constant $c\geq 1$ for $2(1+10\coren)d$. 
    Finally,~\cref{eq:final-densest-bound} follows from
    substituting $\alpha_{eu} = (1+10\coren) \cdot \sum_{t =1}^T \frac{\alpha_{eu}^t}{T}$.
    
    To prove that the returned solution satisfies the other constraints, 
    we use a similar potential function as used in the proof of Lemma 3.1 in~\cite{SuVu20}.\footnote{We use the
    version of the proof of Lemma 3.1
    in v4 of~\cite{SuVu20arxiv} where Appendix B had a typo in the last steps which necessitated this
    change in multiplicative factor in the lower bound on the weights and the change in the potential function. 
    Namely in the last steps, 
    they wrote $\Phi(0) = \sum_{e \in E}\frac{(1-\coren/2)^T}{(1-\coren/2)^{(1-\coren)T}}$ which is 
    a typo and the denominator should instead be $(1-\coren)^{(1-\coren)T}$. Thus, our analysis differs somewhat from
    the original analysis of Lemma 3.1 to correct for this typo.}
    The potential function is defined as follows
    \begin{align*}
        \Phi(t) = \sum_{e \in E} \frac{(1-\coren)^{\ell^t(e)} \cdot \left(1 - \coren + 8\coren^2\right)^{T-t}}{\left(1 - \coren + 8\coren^2\right)^{(1-\coren) \cdot T}}.
    \end{align*}
    We show that $\Phi(T) = \sum_{e \in E} \frac{(1-\coren)^{\ell^T (e)}}{\left(1 - \coren + 8\coren^2\right)^{(1-\coren) \cdot T}} < 1$ which implies for every edge
    $e = \{u, v\}$,
    \begin{align}
        (1-\coren)^{\ell^T (e)} &< \left(1 - \coren + 8\coren^2\right)^{(1-\coren) \cdot T}\label{eq:final-weight-greater}\\
        \ell^T(e) \log_{(1 - \coren + 8\coren^2)}(1-\coren) &> (1-\coren)T\label{eq:simplify-log}\\
        \sum_{t = 1}^T \left(\alpha^t_{eu} + \alpha_{ev}^t\right) &> \frac{(1-\coren)T}{\log_{(1 - \coren + 8\coren^2)}(1-\coren)} \label{eq:alpha-weight-greater}\\
        \frac{\sum_{t = 1}^T\left(\alpha^t_{eu} + \alpha^t_{ev}\right)}{T} &> \frac{(1-\coren)}{\log_{(1 - \coren + 8\coren^2)}(1-\coren)} \label{eq:simplify}\\
        \frac{\sum_{t = 1}^T \left(\alpha^t_{eu} + \alpha^t_{ev}\right)}{T} \cdot (1+10\coren) &> \frac{(1-\coren)(1+10\coren)}{\log_{(1 - \coren + 8\coren^2)}(1-\coren)}\label{eq:mult}\\
        \alpha_{eu} + \alpha_{ev} &> 1 \text{ when } \coren \in (0, \densestmultconst).\label{eq:final}
    \end{align}
    The first inequality (\cref{eq:final-weight-greater}) 
    follows since if $\sum_{e \in E} \frac{(1-\coren)^{\ell^T (e)}}{(1 - \coren + 8\coren^2)^{(1-\coren) \cdot T}} < 1$, this means
    that $\frac{(1-\coren)^{\ell^T (e)}}{(1 - \coren + 8\coren^2)^{(1-\coren) \cdot T}} < 1$ for each $e \in E$ 
    since $\frac{(1-\coren)^{\ell^T (e)}}{(1 - \coren + 8\coren^2)^{(1-\coren) \cdot T}} \geq 0$.
    This in turn implies that $(1-\coren)^{\ell^T (e)} < (1 - \coren + 8\coren^2)^{(1-\coren) \cdot T}$. 
    \cref{eq:simplify-log} follows by taking $\log_{(1 - \coren + 8\coren^2)}$ of both sides (where $(1 - \coren + 8\coren^2) \in (0, 1)$).
    Then, the load on edge $e$ is defined to be the sum of the loads assigned to it 
    from each phase $t$. Thus, the total load on edge $\ell^T(e) = 
    \sum_{t = 1}^T \left(\alpha_{eu}^t + \alpha_{ev}^t\right)$. Substituting this expression into~\cref{eq:simplify-log} 
    results in~\cref{eq:alpha-weight-greater}.
    We divide both sides by $T$ in~\cref{eq:simplify}. Then, 
    we multiply both sides by $(1+10\coren)$ in~\cref{eq:mult}. 
    Finally, by definition, $\alpha_{eu} = (1+10\coren) \cdot \sum_{t = 1}^T 
    \frac{\alpha_{eu}^t}{T}$. So, because $\frac{(1-\coren)(1+10\coren)}{\log_{(1 - \coren + 8\coren^2)}(1-\coren)} > 1$
    for all $\coren \in (0, \densestmultconst)$,
    we obtain~\cref{eq:final}, and the solutions satisfy the constraint.
    
    We show that $\Phi$ is non-increasing as $t$ increases; then, if initially, $\Phi(0) < 1$, then we directly prove our desired statement.
    The rest of the proof follows almost identically (except for the fix for the typo)
    from the proof of Lemma 3.1 in~\cite{SuVu20} but we repeat the proof for completeness.
    For any $0 \leq t < T$, we have the following:
    \begin{align}
        \Phi(t + 1) &= \sum_{e \in E} \frac{(1-\coren)^{\ell^{t + 1}(e)} \cdot (1 - \coren + 8\coren^2)^{T - t - 1}}{(1 - \coren + 8\coren^2)^{(1-\coren) T}}\label{eq:t+1}\\
        &= \sum_{e \in E} \frac{(1-\coren)^{\ell^t(e)} (1 - \coren)^{\alpha_{eu}^t + \alpha_{ev}^t} (1 - \coren + 8\coren^2)^{T - t - 1}}{(1 - \coren + 8\coren^2)^{(1-\coren)T}}\label{eq:next-core-nums}\\
        &\leq \sum_{e \in E} \frac{(1-\coren)^{\ell^t(e)} (1 - \coren(\alpha_{eu}^t + \alpha_{ev}^t) + \coren^2 (\alpha_{eu}^t + \alpha_{ev}^t)^2/2) (1 - \coren + 8\coren^2)^{T-t-1}}{(1 - \coren + 8\coren^2)^{(1-\coren)T}}\label{eq:upper-bound}\\
        &\leq \left(\sum_{e \in E} \frac{(1-\coren)^{\ell^t(e)} (1 - \coren + 8\coren^2)^{T-t-1}}{(1 - \coren + 8\coren^2)^{(1-\coren)T}}\right) - (\coren - 8\coren^2)\left(
        \sum_{e\in E} \frac{(1-\coren)^{\ell^t(e)} (1 - \coren + 8\coren^2)^{T-t-1}}{(1 - \coren + 8\coren^2)^{(1-\coren)T}}\right)\label{eq:assumption-upper}\\
        &= (1 - \coren + 8\coren^2) \cdot \left(\sum_{e \in E} \frac{(1-\coren)^{\ell^t(e)} (1 - \coren + 8\coren^2)^{T-t-1}}{(1 - \coren + 8\coren^2)^{(1-\coren)T}}\right)\label{eq:cancel-2}\\
        &=  \left(\sum_{e \in E} \frac{(1-\coren)^{\ell^t(e)} (1 - \coren + 8\coren^2)^{T-t}}{(1 - \coren + 8\coren^2)^{(1-\coren)T}}\right) = \Phi(t).\label{eq:final-upper}
    \end{align}
    \cref{eq:t+1} is obtained by definition of the potential function $\Phi(t)$.
    \cref{eq:next-core-nums} follows because $\ell^{t+1}(e) = \ell^t(e) + \alpha_{eu}^t + \alpha_{ev}^t$. 
    Using the well-known fact that $(1 - a)^b \leq e^{-ab} \leq 
    1 - ab + (ab)^2/2$ for $0\leq ab \leq 1$, we let $a = \eta$ and $b = \alpha^{t}_{eu} + \alpha_{ev}^t$; assuming $\eta \leq 1/4$
    ensures $0 \leq \eta \left(\alpha^{t}_{eu} + \alpha_{ev}^t\right) \leq 1$ since $0 \leq \left(\alpha^{t}_{eu} + \alpha_{ev}^t\right) \leq 4$. 
    We obtain~\cref{eq:upper-bound} by using the aforementioned inequalities. 
    Assuming $\sum_{e \in E} w^t_e \left(\alpha_{eu}^t + \alpha_{ev}^t\right) \geq \sum_{e \in E} w_e^t$, we obtain~\cref{eq:assumption-upper} by lower bounding
    $\sum_{e \in E} (1 - \coren)^{\ell^t(e)}(\alpha_{eu}^t + \alpha_{ev}^t) = \sum_{e \in E} w_e^t\left(\alpha^t_{eu} + \alpha_{ev}^t\right) \geq \sum_{e \in E} w_e^t = \sum_{e \in E} (1 - \coren)^{\ell^t(e)}$ 
    since by definition $w_e^t = (1 - \coren)^{\ell^t(e)}$ and upper bounding
    $(\alpha_{eu}^t + \alpha_{ev}^t)^2 \leq 16$. \cref{eq:cancel-2} simplifies the previous expression.
    \cref{eq:final-upper} shows that $\Phi(t + 1) \leq \Phi(t)$ for all $0 \leq t < T$.
    Hence,
    \begin{align*}
        \Phi(T) \leq \Phi(T - 1) \leq \dots &\leq \Phi(0) \\
        &= \sum_{e \in E} \frac{(1 - \coren + 8\coren^2)^{T}}{(1 - \coren + 8\coren^2)^{(1-\coren)T}}\\
        &\leq \sum_{e \in E} \exp\left(-\left(\coren^2 - 8\coren^3\right) T\right)\\
        &\leq m \cdot \exp(-\left(\coren^2 - 8\coren^3\right) \cdot (c_0 \ln n)/\left(\coren^2 - 8\coren^3\right)) = O\left(\frac{1}{\poly(n)}\right).
    \end{align*}
    We obtain the last expression since $T = \frac{c_0\ln n}{\coren^3} \geq \frac{c_0\ln n}{\coren^2 - 8\coren^3}$ for large
    enough constant $c_0 > 0$ and $\coren \in (0, \densestmultconst)$.
    The above shows that $\Phi(T) < 1$ and we conclude our proof of the lemma.
\end{proof}

Knowing the condition that whp at least one phase $t$ exists where $\sum_{e \in E} w_e^t \left(\alpha_{eu}^t + 
\alpha_{ev}^t \right) < \sum_{e \in E}
w_e^t$ allows us to prove~\cref{lem:approx-densest-subgraph}.

Let $S(z)$ be the set of all possible \emph{integer} assignments of weights to $\alpha_{eu}^t$
for all $e = \{u, v\}$ and endpoints $u, v$. 

\begin{lemma}\label{lem:approx-densest-subgraph}
Suppose there exists a set of (non-negative) weights $\{\hat{w}^t_e\}$ 
obtained by our algorithm in phase $t$ such that
\begin{align}
    \sum_{e \in E} \hat{w}^t_e > \sum_{e \in E} \hat{w}^t_e \left(\halpha_{eu}^t + \halpha_{ev}^t\right),\label{eq:weight-lower}
\end{align}
where $\hat{\alpha}^t_{eu}, \hat{\alpha}^t_{ev} \in S(z)$ is determined by our algorithm during phase $t$.
Let $V_{\lambda}''$ denote the set of nodes, determined by our algorithm,
that are incident to at least $\ceil{z/2} + Z$ real
edges $e$ with $\hw^t_e \geq \lambda$ where $Z \sim \geom\left(\cutoffnoise\right)$. Then, there exists $\lambda$ such that 
$\delta\left(G[V''_{\lambda}]\right) > z - \cutoffsubfactor$ whp for sufficiently large constant $c \geq 1$ (which depends
on constant parameter $\coren \in (0, \densestmultconst)$).
\end{lemma}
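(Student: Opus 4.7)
The plan is to adapt the non-private layer-cake argument behind Lemma~3.2 of~\cite{SuVu20}, carefully incorporating the two independent geometric noises used by~\cref{alg:dp-densest-z}. Throughout, let $\lambda_\ell := (1-\coren)^\ell$, $E_\ell := \{e : \hat w^t_e \geq \lambda_\ell\}$, $k_v(\ell) := |\{e \ni v : e \in E_\ell\}|$, and $N_v := [\ceil{z/2}-1+X_v]_0^{\deg(v)}$, so that $\hat\alpha^t_{ev} = 2$ precisely on the $N_v$ smallest-load edges of $v$.

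The first step is the standard layer-cake decomposition: writing $\hat w^t_e = \sum_{\ell\geq 0}(\lambda_\ell-\lambda_{\ell+1})\mathbf{1}[\hat w^t_e\geq\lambda_\ell]$ yields
\[
\sum_e \hat w^t_e = \sum_{\ell\ge 0}(\lambda_\ell-\lambda_{\ell+1})|E_\ell|, \qquad \sum_e \hat w^t_e(\hat\alpha^t_{eu}+\hat\alpha^t_{ev}) = 2\sum_{\ell\ge 0}(\lambda_\ell-\lambda_{\ell+1})\sum_v \min(N_v,k_v(\ell)).
\]
Combining with the hypothesis of the lemma, there must exist some integer $\ell^*\in[0,4T]$ for which $|E_{\ell^*}| > 2\sum_v \min(N_v,k_v(\ell^*))$. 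I would then fix $\lambda := \lambda_{\ell^*}$ and argue that this $\lambda$ witnesses the density conclusion (falling back to a slightly smaller $\lambda$ if needed to ensure $V''_\lambda$ is non-degenerately small).

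To finish, condition on the high-probability event that every geometric draw $X_v, Z_v, Y$ made by~\cref{alg:dp-densest-z} is bounded in magnitude by $O(\log^3 n/\eps)$; by~\cref{lem:noise-whp-bound} and a union bound over the $n$ nodes, the $T = O(\log n/\coren^3)$ phases, and the $O(T)$ tested load thresholds per phase, this event holds whp. On this event, both $N_v$ and the membership threshold $\ceil{z/2}+Z_v$ lie within $O(\log^3 n/\eps)$ of $\ceil{z/2}$. Partitioning $V = V''_\lambda \sqcup W$ and using these noise bounds, one obtains $\min(N_v,k_v(\ell^*)) \ge \ceil{z/2} - O(\log^3 n/\eps)$ for $v \in V''_\lambda$ and $\min(N_v,k_v(\ell^*)) \ge k_v(\ell^*) - O(\log^3 n/\eps)$ for $v \in W$. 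Substituting into the layer-cake inequality, using the identities $\sum_{v\in W}k_v(\ell^*) = 2|E_{\ell^*}(W)|+|E_{\ell^*}(\mathrm{cross})|$ and $|E_{\ell^*}| = |E_{\ell^*}\cap E(G[V''_\lambda])|+|E_{\ell^*}(\mathrm{cross})|+|E_{\ell^*}(W)|$, and rearranging give
\[
|E(G[V''_\lambda])| \;\ge\; |E_{\ell^*}\cap E(G[V''_\lambda])| \;>\; |V''_\lambda|\,z - O\!\left(\tfrac{n\log^3 n}{\eps}\right),
\]
and dividing by $|V''_\lambda|/2$ yields $\delta(G[V''_\lambda]) > z - \cutoffsubfactor$ for $c_2$ large enough, where the extra factor of $\log n$ (relative to the per-query $\log^3 n/\eps$ tail) absorbs the $n/|V''_\lambda|$ prefactor.

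The main obstacle is reconciling the two \emph{independent} geometric noises used in each phase: $X_v$ controls the number $N_v$ of edges of $v$ receiving load $2$, while $Z_v$ controls the membership threshold for $V''_\lambda$. Because these draws are not coupled, the clean non-private identity ``the top $N_v$ edges of $v$ all have weight $\geq \lambda$'' only holds up to a per-node slack of order $\log^3 n/\eps$, so the bookkeeping must be done uniformly over all $n$ nodes, $T$ phases, and $O(T)$ load levels via a union bound. This accumulated slack is precisely what inflates the per-query tail bound $O(\log^3 n/\eps)$ to the $\cutoffsubfactor$ additive error appearing in the lemma statement.
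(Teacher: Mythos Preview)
Your layer-cake decomposition and the identity $\sum_{e\ni v:\,e\in E_{\ell}}\halpha^t_{ev}=2\min(N_v,k_v(\ell))$ are correct and match the paper's integration argument. The genuine gap is in your accounting for nodes in $W=V\setminus V''_\lambda$. Your bound $\min(N_v,k_v(\ell^*))\ge k_v(\ell^*)-O(\log^3 n/\eps)$ for $v\in W$ is valid, but summing this per-node slack over $W$ contributes an $O(|W|\cdot\poly(\log n)/\eps)=O(n\cdot\poly(\log n)/\eps)$ term. After dividing by $|V''_\lambda|$ the additive error becomes $O\!\big(n\cdot\poly(\log n)/(\eps\,|V''_\lambda|)\big)$, and your assertion that ``the extra factor of $\log n$ absorbs the $n/|V''_\lambda|$ prefactor'' is unjustified: nothing in the argument lower-bounds $|V''_\lambda|$ by $n/\log n$; indeed $|V''_\lambda|$ may be constant. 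The parenthetical about ``falling back to a slightly smaller $\lambda$'' does not help either, since no threshold is guaranteed to simultaneously witness the layer-cake inequality and make $|V''_\lambda|$ large.

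The paper eliminates this per-node slack on $W$ entirely by exploiting the $-\,c_1\log^4 n/\eps$ offset that the algorithm subtracts from the membership threshold (\cref{densest:check-induced}). With that offset, $v\in W$ forces $k_v(\ell^*)<\ceil{z/2}+Z-c_1\log^4 n/\eps$, which whp is strictly below $N_v\ge\ceil{z/2}-1-O(\log^3 n/\eps)$ once $c_1$ is chosen large enough; hence $\min(N_v,k_v(\ell^*))=k_v(\ell^*)$ \emph{exactly} for every $v\in W$, with zero slack. The paper packages this as the piecewise lower bound $d_\lambda(v)$, equal to $2\deg_\lambda(v)$ for $v\in W$ and to $z-O(\log^4 n/\eps)$ for $v\in V''_\lambda$, so that subtracting $\sum_{v\in W}\deg_\lambda(v)$ from both sides of $|E_\lambda|>\sum_v d_\lambda(v)$ leaves $|E(G[V''_\lambda])|>|V''_\lambda|\big(z-O(\log^4 n/\eps)\big)$ with no $n$-dependent residue. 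Your argument becomes correct once you incorporate this offset and observe that, for $v\in W$, the minimum is attained at $k_v(\ell^*)$ with no error.
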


\begin{proof}
This proof closely follows the 
proof of Lemma 3.2 in~\cite{SuVu20} (and, hence, also the proof in
\cite{BGM14}). 
The main difference in our proof is to show that the statement still holds when 
we lose some part of the loads due to the dummy edges and the random noise $Z$; 
specifically, we care more about the instances when $Z < 0$, which is the more difficult setting to prove. 
First, note that $\halpha_{eu}^t \in \{0, 2\}$
by definition of $S(z)$ and how we assign load to edges within a phase of the algorithm. Then, let $E_{\lambda} := \{e \in E: \hw^t_e \geq \lambda\}$ be the
set of edges where $\hw^t_e \geq \lambda$ and let $\deg_{\lambda}(v)$ be the number of 
edges in $E_{\lambda}$ that are incident to $v$. To provide some intuition, $\hat{w}^t_e$ \emph{decreases} when
the load on the edges \emph{increases}. Thus, edges with higher weights correspond with edges with lower loads
and new loads within a phase are assigned to edges in order from largest weights (lowest load) to smallest weights (highest load).

Now, we define a variable $d_{\lambda}(v)$ for each node $v$ which lower bounds 
the the sum of current loads $\alpha^t_{eu}$ assigned to all \emph{real} edges adjacent to $v$
from the current phase $t$. Recall from our algorithm that out of the real edges with smallest accumulated load a noise $X_v$ is
picked that can cause more or less of these real edges to obtain new loads. Edges with weight larger than $\lambda$ (equivalent
to number of edges with load smaller than $\log_{(1-\coren)}(\lambda)$) receive loads starting from the
largest weight to smallest weight; however, not all of the $\min(\ceil{z/2}-1, \deg(v))$ edges with the largest weight may
receive loads due to the presence of dummy edges. This difference due to dummy edges is what we capture below. We \emph{lower bound} the sum of 
the loads given to the highest weight edges adjacent to $v$ \whp in terms of $\min(\ceil{z/2}-1, \deg(v))$.
We thus define $d_{\lambda}(v)$ to be
\begin{align*}
    d_{\lambda}(v)=\left\{
                \begin{array}{ll}
                  2\deg_{\lambda}(v), \text{ if } \deg_{\lambda}(v) < \ceil{z/2} - \frac{cT\log^2 n}{\eps}\\
                  \max(0, z - \frac{2cT \log^2 n}{\eps}), \text{ otherwise}.
                \end{array}
              \right.
\end{align*}
This is the key difference between our procedure and the proof of Lemma 3.3 
in~\cite{SuVu20}. We require $\frac{2cT\log n}{\eps}$ to be subtracted from the sum of the loads
of adjacent edges since \whp, at most 
$\frac{cT \log n}{\eps}$ dummy adjacent edges resulting from $X_v$  for each $v$ are added to each node from~\cref{densest:dummy-edges}. This means that part of the total load of $z$ could be spent on these dummy edges, decreasing the 
total load on the real edges. %
Thus, the subtracted value
of $\frac{2cT\log n}{\eps}$ from $\deg_{\lambda}(v)$ reflects the load that is lost to these dummy edges.
We first note that $d_{\lambda}(v) \leq 
\sum_{e \ni v: \hat{w}_e \geq \lambda} \halpha_{ev}^t$ with high probability.
If $\deg_{\lambda}(v) + \frac{cT\log n}{\eps} < \ceil{z/2}$, then
a number of real edges plus at most $\frac{cT\log n}{\eps}$ dummy edges are each given $2$ load. 
If $\deg_{\lambda}(v) \geq \ceil{z/2} - \frac{cT\log n}{\eps}$, %
then $2\deg_{\lambda}(v) \geq z - \frac{2cT\log n}{\eps}$. %
When $X_v > 0$, additional edges receive loads, maintaining our lower bound. 
Thus,

\begin{align}
\sum_{v \in V}d_{\lambda}(v) \leq \sum_{v \in V} \sum_{e \ni v: \hw^t_e \geq \lambda} \halpha_{ev}^t
= \sum_{e \in E: \hw^t_e \geq \lambda} 
(\halpha_{ev}^t + \halpha_{eu}^t).\label{eq:sum-degs}
\end{align}

Suppose the weights are upper bounded by $b$: $a \leq \hw^t_e 
\leq b$ for all $e \in E$. Then, 

\begin{align*}
    \int_{a}^b |E_{\lambda}| d\lambda &= \sum_{e \in E} \hw^t_e \\
    &> \sum_{e \in E} \hw^t_e \left(\halpha_{eu}^t + \halpha_{ev}^t\right)
    && \text{ by~\cref{eq:weight-lower} }\\
    &= \int_{a}^b \sum_{e \in E: \hw_e^t \geq \lambda} (\halpha^t_{eu} + \halpha^t_{ev})
    d\lambda\\
    &\geq \int_{a}^b \sum_{v \in V} d_{\lambda}(v) d\lambda && \text{ 
    by~\cref{eq:sum-degs} }
\end{align*}
Thus, there exists a $a \leq \lambda \leq b$ where $|E_{\lambda}| > 
\sum_{v \in V}d_{\lambda}(v)$. 
Recall $V_{\lambda}''$ is the set of nodes with 
$\deg_{\lambda}(v) \geq \ceil{z/2} + \zfactor - \frac{c_1\log^4 n}{\eps}$ 
where $Z \sim \geom\left(\cutoffnoise\right)$. Let $c_2 \geq 1$ be an appropriately
large constant such that $\frac{c_1\log^4 n}{\eps} \leq \frac{c_2T^2 \log^2n}{\eps}$. 
Then, \whp,

\begin{align}
    |E(V_{\lambda}'')| &\geq |E_{\lambda}| - \sum_{v \in V - V_{\lambda}''} 
    \deg_{\lambda}(v)\label{eq:lambda-density} \\
    &> \sum_{v \in V} d_{\lambda}(v)
    - \sum_{v \in V - V_{\lambda}''} \deg_{\lambda}(v)\label{eq:lambda-substitute}\\
    &\geq \left(|V_{\lambda}''| \cdot \max\left(0, z - \frac{2(c+c_2)T^2\log^2 n}{\eps}\right) 
    + \sum_{v \in V- V_{\lambda}''} \left(2 \deg_{\lambda}(v)\right)\right) - \sum_{v \in V - V_{\lambda}''} \deg_{\lambda}(v)\label{eq:lambda-dv}\\
    &\geq |V_{\lambda}''| \cdot \max\left(0, z - \frac{2(c+c_2)T^2\log^2 n}{\eps}\right).\label{eq:lambda-final}
\end{align}
\cref{eq:lambda-density} follows since the number of edges in the induced subgraph of $E(V_{\lambda}'')$ is lower bounded
by the number of edges in $|E_{\lambda}|$ minus the set of those edges that are adjacent to nodes not in $V_{\lambda}''$; 
this set of edges can be each be counted at most twice.
\cref{eq:lambda-substitute} follows from substituting $|E_{\lambda}| > \sum_{v \in V} d_{\lambda}(v)$. 
\cref{eq:lambda-dv} follows from the definition of $d_{\lambda}(v)$ and the fact that $Z \leq \frac{c T^2\log^2 n}{\eps}$ 
\whp for constant $c \geq 1$; 
thus, every $v \in V_{\lambda}''$ has $\deg_{\lambda}(v) \geq \ceil{z/2} - \frac{(c+c_2) T^2\log^2 n}{\eps}$ \whp
which means that the load for each of these nodes is at least $2\deg_{\lambda}(v) \geq z - \frac{2(c+c_2) T^2\log^2 n}{\eps}$.
Furthermore, we can guarantee that each node in $v \in V - V_{\lambda}''$ has $\deg_{\lambda}(v) \leq \ceil{z/2} -\frac{2c_2 T^2\log^2 n}{\eps} + \frac{c T^2\log^2 n}{\eps} < \ceil{z/2} 
- \frac{cT\log^2 n}{\eps}$ for appropriately large constants $c_1, c_2 \geq 1$
which means that the load given to them is $2\deg_{\lambda}(v)$.
Finally,~\cref{eq:lambda-final} follows because the last two terms is lower bounded by $0$.

The last step is simple; since the right hand side of~\cref{eq:lambda-final} is lower bounded by
$0$, then $|E(V_{\lambda}'')| > 0, |V_{\lambda}''| > 0$ and so 
$\frac{|E(V_{\lambda}'')|}{|V_{\lambda}''|} = \dense\left(G[V_{\lambda}'']\right) 
> z - \frac{2(c+c_2) T^2\log^2 n}{\eps}$. 
\end{proof}

\begin{lemma}\label{lem:integer-solution}
Given a $z < \frac{D^* - \densestsubfactor}{1+10\coren}$ for appropriately large constant $c > 0$, then, for 
$\coren \in (0, \densestmultconst)$,~\cref{alg:dp-densest-z} will output a subgraph of density at least 
$z - O\left(\frac{\log^4 n}{\eps}\right)$, \whp.
\end{lemma}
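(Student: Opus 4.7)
The plan is to combine \cref{lem:feasible-lp} and \cref{lem:approx-densest-subgraph}, exploiting the fact that the loads $\ell^t(e)$ maintained by \cref{alg:dp-densest-z} are always integers: by inspection of \cref{densest:initial-load,densest:assign-new-load,densest:update-load}, every increment $\halpha_{ev}^t$ lies in $\{0,2\}$, so at the start of each phase $t$ the load $\ell^t(e)$ is an integer in $[0,4T]$. Consequently the weight $w_e^t = (1-\coren)^{\ell^t(e)}$ takes only the discrete values $(1-\coren)^\ell$ for $\ell \in \{0,\dots,4T\}$.

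First, I would apply \cref{lem:feasible-lp} to the hypothesis $z < (D^* - \densestsubfactor)/(1+10\coren)$ to obtain, whp, some phase $t^* \in [T]$ with $\sum_e w_e^{t^*}(\halpha_{eu}^{t^*} + \halpha_{ev}^{t^*}) < \sum_e w_e^{t^*}$. Then I would invoke \cref{lem:approx-densest-subgraph} with $\hatw_e^{t^*} = w_e^{t^*}$ to produce some $\lambda^*$ for which $\dense(G[V_{\lambda^*}'']) > z - \cutoffsubfactor$ whp. By the discretization observation, it suffices to take $\lambda^* = (1-\coren)^{\ell^*}$ for some integer $\ell^* \in [0,4T]$, and then the lemma's condition ``weight $\geq \lambda^*$'' coincides with ``load $\leq \ell^*$''. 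This aligns $V_{\lambda^*}''$ with the set $V_{\ell^*}'$ constructed by the algorithm when $\ell = \ell^*$ in \cref{densest:search}, up to the additive slack $c_1 \log^4 n / \eps$ in the degree threshold of \cref{densest:check-induced}.

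To conclude, I would invoke \cref{lem:noise-whp-bound} to bound the noise variables $Z \sim \geom(\cutoffnoise)$ from \cref{densest:induced-degree-noise} and $Y \sim \geom(\densecutoffnoise)$ preceding \cref{densest:density} by $O(\log^4 n/\eps)$ in absolute value whp. With a sufficiently large choice of $c_1$, the set $V_{\ell^*}'$ still has density at least $z - O(\log^4 n/\eps)$, so the density check in \cref{densest:density} passes at $(t^*, \ell^*)$ and the algorithm returns some $V_\ell'$ no later than that point. Any returned set satisfies $\dense(G[V_\ell']) \geq z + Y - \cutoffsubfactor \geq z - O(\log^4 n/\eps)$ whp, which is the claimed bound. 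A union bound over the $O(\log n)$ phases and $O(T)$ integer loads controls the failure probability.

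The hard part will be the careful bookkeeping of three independent noise sources---$\dedge_v$ used for load assignment in \cref{densest:assign-new-load}, $Z$ used for subgraph construction in \cref{densest:check-induced}, and $Y$ used in the density test in \cref{densest:density}---through the chain of lemmas, and tuning the constants $c_1, c_2$ so that the algorithm's $V_{\ell^*}'$ inherits the density guarantee of $V_{\lambda^*}''$ up to only an additive $O(\log^4 n/\eps)$ loss. This may require re-running the degree-counting argument inside \cref{lem:approx-densest-subgraph} with the shifted threshold $\ceil{z/2} + Z - c_1\log^4 n/\eps$ rather than $\ceil{z/2} + Z$, so that the additive error absorbed by the algorithm's slack is propagated consistently into the final bound.
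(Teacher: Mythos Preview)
Your proposal is correct and follows essentially the same route as the paper: invoke \cref{lem:feasible-lp} to get a good phase, then \cref{lem:approx-densest-subgraph} to get a good threshold $\lambda$, observe that integer loads force $\lambda=(1-\coren)^{\ell^*}$ for some $\ell^*\in[0,4T]$ swept by \cref{densest:search}, and bound $|Y|$ via \cref{lem:noise-whp-bound} so the density check in \cref{densest:density} passes. The paper's own proof is terser and does not spell out the constant bookkeeping you flag as the ``hard part''; in particular, note that in the proof of \cref{lem:approx-densest-subgraph} the set $V_\lambda''$ is already taken with the shifted threshold $\ceil{z/2}+Z-\frac{c_1\log^4 n}{\eps}$ (matching \cref{densest:check-induced} exactly), so the alignment $V_{\lambda^*}''=V_{\ell^*}'$ is in fact exact and you need not re-run that argument.
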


\begin{proof}
    \cref{lem:approx-densest-subgraph} proves there exists a $\lambda$ such that $\dense(G[V_{\lambda}'']) > z - O\left(\frac{\log^4 n}{\eps}\right)$.
    Since all loads given by~\cref{alg:dp-densest-z} are integers, the load on any edge is an integer in $[0, 4T]$.
    This means that the weights on the edges $(1-\coren)^{\ell^t(e)}$ in the 
    proof of~\cref{lem:approx-densest-subgraph} are also obtained from these integer loads. Thus, it is sufficient 
    to look at all possible integer loads $\ell \in [0, 4T]$ which give the appropriate weight cutoffs
    $\lambda = (1-\coren)^\ell$ and we prove the lemma by~\cref{lem:approx-densest-subgraph,lem:feasible-lp} 
    which ensure such a $\lambda$ exists.
    
    Finally, to pass the last check just requires setting the constant $c_3 \geq 1$ to a large enough value since $|Y| = O\left(\frac{\log^4 n}{\eps}\right)$ \whp. By the proof above, 
    for small enough $z$, we are guaranteed 
    a set of nodes of sufficiently large density.
\end{proof}

\begin{lemma}\label{lem:densest-subgraph-approx}
    \cref{alg:dp-densest} returns a $\left(\kcoremultfactor, O\left(\frac{\log^4 n}{\eps}\right)\right)$-approximate densest
    subgraph for any constant $\coren \in (0, \densestmultconst)$ with high probability.
\end{lemma}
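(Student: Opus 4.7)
The plan is to combine \cref{lem:integer-solution} with a careful choice of index $i^\ast$ in the outer loop of \cref{alg:dp-densest}, perform an elementary algebraic translation from the bound $z - O(\log^4 n/\eps)$ into the stated $(1-12\coren)$-multiplicative form, and finally union-bound over the $O(\log n)$ invocations of \cref{alg:dp-densest-z}.

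First, I would handle the trivial case. If $D^\ast \leq 2\densestsubfactor = O(\log^3 n/\eps)$, then $(1-12\coren) D^\ast = O(\log^3 n/\eps)$ is already absorbed by the additive error, and \emph{any} returned vertex set (in particular the default $V_{\max} = V$) satisfies the bound. Henceforth assume $D^\ast > 2\densestsubfactor$. Pick $i^\ast = \bigl\lfloor \log_{(1+\coren)}\bigl((D^\ast - \densestsubfactor)/(1+10\coren)\bigr)\bigr\rfloor$, which lies in the iteration range $[\lfloor \log_{(1+\coren)} n\rfloor]$ since $D^\ast \leq n$. Then $z^\ast := (1+\coren)^{i^\ast}$ satisfies
\[
\frac{D^\ast - \densestsubfactor}{(1+\coren)(1+10\coren)} \;\leq\; z^\ast \;<\; \frac{D^\ast - \densestsubfactor}{1+10\coren}.
\]
By \cref{lem:integer-solution}, the call \FnDPDensestZ$(G, z^\ast, \coren, \eps)$ returns a non-empty $S$ with $\dense(G[S]) \geq z^\ast - O(\log^4 n/\eps)$ with high probability, so \cref{line:check-returned} triggers an update of $V_{\max}$ at iteration $i^\ast$.

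Second, I would carry out the algebra for $\coren \in (0, \densestmultconst) = (0, 1/12)$: since $10\coren^2 \leq \coren$, we have $(1+\coren)(1+10\coren) = 1 + 11\coren + 10\coren^2 \leq 1 + 12\coren$; since $(1-12\coren)(1+12\coren) = 1 - 144\coren^2 \leq 1$, we have $1/(1+12\coren) \geq 1 - 12\coren$. Chaining these yields
\[
z^\ast \;\geq\; (1-12\coren)\bigl(D^\ast - \densestsubfactor\bigr) \;\geq\; (1-12\coren)\, D^\ast - O\!\left(\frac{\log^3 n}{\eps}\right).
\]
Since \cref{alg:dp-densest} iterates $i$ in increasing order (hence $z$ in increasing order) and overwrites $V_{\max}$ on every non-empty return, the final $V_{\max}$ corresponds to some successful iteration $i^\dagger \geq i^\ast$; density is monotone in the successful $z$ value up to the additive slack, so the returned subgraph has density at least $z^\ast - O(\log^4 n/\eps) \geq (\kcoremultfactor)\, D^\ast - O(\log^4 n/\eps)$, as required.

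The main (and only mild) obstacle is the union bound. Each of the $\lfloor \log_{(1+\coren)} n\rfloor = O(\log n)$ invocations of \cref{alg:dp-densest-z} supplies its guarantee with probability at least $1 - 1/n^{c'}$ for any desired constant $c' \geq 1$, by choosing the internal constants in \cref{lem:integer-solution} large enough. Taking a union bound over all $O(\log n)$ invocations absorbs the extra logarithmic factor into the constant, so every invocation simultaneously succeeds with probability at least $1 - 1/n^c$ for any preselected $c \geq 1$, concluding the high-probability claim.
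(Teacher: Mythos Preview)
Your proposal is correct and follows essentially the same route as the paper: pick the largest power $(1+\coren)^{i^\ast}$ below the threshold $(D^\ast-\densestsubfactor)/(1+10\coren)$, invoke \cref{lem:integer-solution} at that value, and do the algebra $(1+\coren)(1+10\coren)\le 1+12\coren$ to land on the $(1-12\coren)$ multiplicative factor; the degenerate case where $D^\ast$ is already $O(\log^3 n/\eps)$ is dispatched separately in both. You are somewhat more explicit than the paper on two points the paper leaves implicit: (i) the observation that if the final overwrite of $V_{\max}$ happens at some $z^\dagger> z^\ast$, the density check inside \cref{alg:dp-densest-z} (line~\ref{densest:density}) still forces $\dense(G[V_{\max}])\ge z^\dagger-O(\log^4 n/\eps)\ge z^\ast-O(\log^4 n/\eps)$ whp, and (ii) the union bound over the $O(\log n)$ invocations, which is absorbed by enlarging the constant in the whp exponent.
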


\begin{proof}
    First, as long as $z \leq \frac{\optdensity - \densestsubfactor}{1+10\coren}$, we obtain our desired approximation.
    \cref{alg:dp-densest} searches for the largest value of $z$ that satisfies this constraint by searching all 
    $(1+\coren)^i$ for all $i \in [\floor{\log_{(1+\coren)}n}]$. In the worst case, $(1+\coren)^i = \frac{\optdensity - \densestsubfactor}{1+10\coren} + y$ by some very small $y \in (0, 1)$. This means that the closest 
    $(1+\coren)^{i-1}$ is the largest value of $z$ that is smaller than the cutoff; 
    in other words, $z = (1+\coren)^{i-1} \geq \frac{\optdensity - \densestsubfactor}{(1+\coren)(1+10\coren)}$. 
    Thus, by~\cref{lem:integer-solution}, we obtain a set of nodes whose induced subgraph has density at least
    $z - \frac{c\log^4 n}{\eps}$ for an appropriately large constant $c \geq 1$. Together, the returned 
    set of nodes has density at least $z - \frac{c\log^4 n}{\eps} \geq \frac{\optdensity - \densestsubfactor}{(1+\coren)(1+10\coren)} - \frac{c\log^4 n}{\eps} \geq (\kcoremultfactor)D^* 
    - O\left(\frac{\log^4 n}{\eps}\right)$. In the case when no value of $z$ returns a set of nodes, all the nodes in the 
    graph are returned. This case occurs when all values of $z$ passed into~\cref{alg:dp-densest-z} are too large
    which means that $1 \geq \frac{\optdensity - \densestsubfactor}{(1+10\coren)}$ and $D^* \leq  \densestsubfactor + 1 + 10\coren$ and returning all the nodes in the graph will result in an induced subgraph with density 
    $\dense(G) \leq D^* \leq \densestsubfactor + 1 + 10\coren = O\left(\frac{\log^4 n}{\eps}\right)$ which falls within 
    our additive error bound.
\end{proof}

\subsection{Efficiency Analysis}\label{sec:densest-runtime}

We analyze the runtime of our $\eps$-DP densest subgraph algorithm in this section.
Specifically, we show that we obtain the same runtime as the original non-private algorithm.
As before, we assume that $\coren \in (0, \densestmultconst)$ is constant. Throughout our runtime
analysis, we use the standard assumption
that obtaining a sample from the symmetric geometric
distribution requires $O(1)$ time. Such an assumption is reasonable for the following reason. 
One can generate a random variable from the geometric distribution in $O(1)$ time by generating 
a uniformly random real number in binary (up to a certain precision) and outputting the position of the first $1$. 
A recent paper of Balcer and Vadhan~\cite{BV18} shows how to generate such random variables on finite memory
machines. Once such a random variable has been generated, we can generate a \emph{symmetric geometric} random variable
by first deciding whether the random variable is $0$, positive, or negative with appropriate probability determined
by the parameter $b$. If the chosen symmetric random variable is either positive or negative, then we 
use the random variable generated by the geometric distribution with the appropriate sign as the output.

\begin{theorem}\label{thm:runtime-densest}
    The runtime of~\cref{alg:dp-densest} is $O\left((n + m)\log^3 n\right)$.
\end{theorem}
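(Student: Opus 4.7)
The plan is to bound the runtime by walking through the two nested algorithms and multiplying the per-phase cost by the number of phases and the number of outer calls, using the standing assumption (stated just before the theorem) that each draw from the symmetric geometric distribution takes $O(1)$ time.

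First I would analyze the cost of a single phase of \FnDPDensestZ. The work at \cref{densest:sort-load,densest:dummy-edges,densest:initialize-new-load,densest:assign-new-load} consists of, for each node $v$, sorting its adjacency list by current edge load, drawing one geometric sample, and writing $O(\deg(v))$ load values. Summed over all nodes this is $O(\sum_v \deg(v)\log \deg(v)) = O(m\log n)$. The inner search loop at \cref{densest:search} iterates over integer loads $\ell \in [0,4T]$, giving $O(T) = O(\log n/\eta^3) = O(\log n)$ iterations because $\eta$ is a fixed constant. For each $\ell$, building $V_\ell'$ (\cref{densest:iterate-node,densest:induced-degree-noise,densest:check-induced,densest:induced-subgraph}) scans every node's edge list in $O(n+m)$ time, and computing the induced density in \cref{densest:density} and the additive noise takes $O(n+m)$ time as well (by a single pass marking vertices of $V_\ell'$ and counting edges both of whose endpoints are marked). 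Thus the inner loop contributes $O((n+m)\log n)$ per phase, and the edge-load update loop at \cref{densest:update-load} adds only $O(m)$. A single phase therefore costs $O((n+m)\log n)$.

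Next I would multiply by the number of phases. Since $T = c_0 \log n/\eta^3 = O(\log n)$ for constant $\eta$, one invocation of \FnDPDensestZ uses $O((n+m)\log^2 n)$ time, and the trivial branch when $z=0$ runs in $O(n)$. Finally, \FnDPDensest calls \FnDPDensestZ once for each $i\in[\lfloor\log_{(1+\eta)} n\rfloor]$, that is $O(\log n)$ times, and does only constant work outside the calls (\cref{line:compute-z,line:check-returned,line:return-max-v}). Multiplying yields the claimed $O((n+m)\log^3 n)$ bound.

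The only mildly delicate point is justifying that the $\ell$-loop can be implemented in $O(n+m)$ per value of $\ell$ rather than naively rescanning edges in a way that would blow up the bound. The straightforward implementation — precompute each node's count of incident edges with load $\leq \ell$ incrementally as $\ell$ increases (using the sorted adjacency lists already produced in \cref{densest:sort-load}), and then compute the induced-subgraph density with a single pass that checks whether both endpoints of each edge lie in $V_\ell'$ — accomplishes this. I expect this bookkeeping detail (reusing the per-node sort and amortizing the adjacency scans across all $4T+1$ values of $\ell$) to be the main thing to spell out; everything else is routine arithmetic combining $O(\log n)$ outer calls, $T=O(\log n)$ phases, and $O(\log n)$ search iterations per phase against the $O(n+m)$ per-pass cost.
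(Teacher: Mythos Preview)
Your proposal is correct and follows essentially the same decomposition as the paper: $O(\log n)$ values of $z$, times $T=O(\log n)$ phases, times $O(\log n)$ values of $\ell$ in the inner search loop, with $O(n+m)$ work per innermost pass. The paper's own proof is even terser and uses radix sort (loads are integers in $[0,4T]$) to get $O(m)$ for the per-phase sorting step rather than your $O(m\log n)$ comparison-sort bound, but since the $\ell$-loop already contributes $O((n+m)\log n)$ per phase, this difference does not affect the final bound.
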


\begin{proof}   
    This analysis assumes the input parameter $\coren \in (0, \densestmultconst)$ is constant.
    First, the search in~\cref{line:i-iteration} of~\cref{alg:dp-densest} 
    requires $O(\log n)$ iterations of~\cref{alg:dp-densest-z}, which means that the runtime of
    \cref{alg:dp-densest-z} is multiplied by $O(\log n)$. %
    Next, we determine the runtime of~\cref{alg:dp-densest-z}.
    For a given $z$, our algorithm runs through $T = O\left(\log n\right)$ phases. Each phase 
    requires $O(m)$ time to sort the loads (using radix sort) 
    and to determine the new loads to apply to each edge. 
    Then, we run $O\left(\log n\right)$ trials of finding a subgraph of sufficient density. 
    Each of these trials requires $O(m)$ time. Altogether, the runtime for a fixed parameter $z$ of this algorithm is 
    $O\left((m + n)\log^2 n\right)$. %
    Thus, in total, our running for~\cref{alg:dp-densest} is dominated by $O(\log n)$ iterations of~\cref{alg:dp-densest-z}
    which take a total of $O\left((n + m)\log^3 n\right)$ time.
\end{proof}

} %
\section{Differential Privacy from \LA Algorithms}\label{sec:framework}

The graph algorithms studied in this paper all have a generalizable structure that is conducive to privacy. 
Our framework applies to a number of parallel, distributed, and dynamic
algorithms~\cite{LSYDS22,BHNT15,SuVu20,HNW20,BGM14,chan2021},
described in detail in~\cref{sec:alg,sec:densest}.
In this section, we describe a general class of algorithms and 
show that we can transform them to be $\eps$-edge DP (\cref{sec:framework-privacy}).
\conffull{}{
Some even become $\eps$-LEDP (\cref{sec:framework-ledp}).}
We call the algorithms that have these characteristics \emph{\localadjust}
algorithms. Beyond the various algorithms we study in this paper, we believe
that our generalization and the privacy framework can be applied to a 
broader set of non-private graph algorithms, most naturally, in the parallel and distributed settings.
Furthermore, interestingly, the techniques that we use to obtain our privacy guarantees result
in only a small \emph{additive} polylogarithmic error while maintaining the \emph{same multiplicative} approximation
factor of each of the original non-private algorithms given in~\cref{sec:alg,sec:densest}.
However, we do not have a general statement bounding the utility (or error) of private algorithms obtained via our framework
and leave this as an interesting open question. %

\subsection{\LA Graph Algorithms}

We call a graph algorithm $\alg$ on input graph $G = (V, E)$ \defn{\localadjust} if it
has the following characteristics. The algorithm proceeds in at most $K$ total phases.
Two or more phases may occur in parallel if the pairs of phases $k_1, k_2 \leq K$ do not depend on each other. Each node $v$ maintains an 
internal state $I_{v, p}$ parameterized by the phase number $p$; similarly, each edge $e$ maintains an internal state $I_{e, p}$
also parameterized by $p$. Initially, in phase $0$, all states
are set to default identical values.
Since the phases may be processed in parallel, for each phase $p$, 
we refer to the previous phase that phase $p$ depends on as $\prev$, where 
$\prev < p$, but $\prev$ is not necessarily equal to $p - 1$.

During each of the at most $K$ phases, 
each node $v \in V$ computes a function using \emph{only} 
the previous states $I_{w, \prev}, I_{e, \prev}$
of its immediate one-hop neighborhood, where $w \in N(v)$ and $e = \{v, a\}$ for 
any $a \in N(v)$.
Notably, these functions determine for a node $v$ whether its neighbors and/or its 
incident edges satisfy a condition. 
Then, $v$ uses another function with the \emph{number} of neighbors 
or incident edges that satisfy the condition as input
to compute a new state. 
Each edge $e \in E$ also computes a function using \emph{only} 
information \emph{received} from its two endpoints.
Formally, these functions are defined in the following paragraphs.
Importantly, all of the functions satisfy a ``local'' property 
where any edge insertion or deletion, $e' = (u, v)$, in the graph 
affects the count of the number of neighbors that satisfy
the condition of only $u$ or $v$ and no other nodes. 
The output of the function is not changed for any other node $w \not\in \{u, v\}$
or any other edge $e = (i, j)$ where neither $i$ or $j$
is $u$ or $v$. This is the \emph{locally adjustable} specification of the type of algorithms that
we are considering. 

\paragraph{Node functions} 

Let $B$ be a predicate that can be satisfied (or not) by a neighbor 
$w \in N(v)$ of $v$. Node $v$ has a deterministic function that is evaluated in each phase $p$:
\begin{align*}
\neighbf_v(w) = 
\left\lbrace
\begin{array}{r@{}l}
    1, & \text{ if $w\in N(v)$ and $I_{w, \prev}$ }\\
       & \text{ satisfies condition $B$ } \\
    0, & \text{ otherwise }
\end{array}
\right.
\end{align*}
that takes as input a neighbor, $w \in N(v)$, of $v$
and outputs a  $0$ or $1$ bit for the neighbor indicating whether the neighbor satisfies $B$ using
the previous state of the neighbor $I_{w, \prev}$. 
Whether $w$ satisfies $B$ is determined by the state $I_{w, \prev}$ of node $w$, 
parameterized by the last phase $\prev$ that $p$ depends
on. 

For clarity, we emphasize a few crucial observations regarding $B$.
Suppose, without loss of generality, that an edge $e = \{u, v\}$ is inserted in the beginning of phase $p$. 
This means that only the
count of the number of neighbors of $u$ or $v$ that satisfy $B$ is affected compared
to the case when $e$ is not inserted. 
Specifically, this count can increase by at most $1$. Since the previous 
states $I_{v, p'}$ for all $p' < p$ are fixed prior to the insertion, the insertion
cannot affect the output $\neighbf_v(w)$ for any other $w \neq u \in N(v)$. Hence, 
the count for the number of neighbors of $v$ that satisfy $B$ increases by $1$ (compared to the case
when $e$ is not inserted)
when $\neighbf_v(u) = 1$. Symmetrically, for an edge deletion, the number of neighbors that satisfy
$B$ can decrease by at most $1$ compared to the case when $e$ is not deleted.
If the update is not incident to a node $w$, then whether $B$ is satisfied or not
\emph{does not change} for any neighbor of $w$. This means that $\neighbf_v$ 
is a ``local'' function where
edge updates in the graph can only affect their incident endpoints.

Similarly, let $C$ be a condition that can be satisfied (or not) by an 
incident edge to $v$. Again, we specify the trivial ``local'' requirement for $C$
that the addition or deletion of any edge $e = \{u, v\}$ (with an arbitrary state $I_{e, \prev}$)
at the beginning of phase $p$ changes 
whether $C$ is satisfied only for edge $e$; this is a trivial requirement
since edge $e$ did not exist prior to the insertion of $e$
(and the edge $e$ no longer exists after the deletion of $e$).
Then, $v$ %
has another deterministic function that is also evaluated in phase $p$,
\begin{align*}
\adjf_v(\{v, w\}) = 
\left\lbrace
\begin{array}{r@{}l}
1, & \text{ if $\{v, w\} \in E$ and $I_{\{v, w\}, \prev}$} \\ 
   & \text{ satisfies condition $C$ }\\
0, & \text{ otherwise } 
\end{array}
\right.
\end{align*}
that takes as input an incident edge to $v$ and outputs a  $0$ or $1$ bit depending on whether the edge satisfies
$C$. As before, whether $\{v, w\}$ satisfies $C$ depends on the previous state $I_{\{v, w\}, \prev}$ of the edge $\{v, w\}$. 

Let $\nodef_v$ be a deterministic function that 
updates the state of $v$ using only the \emph{number} of neighbors, $n_{v, p}$ and/or edges, $e_{v, p}$, 
that satisfy the conditions $B$ and $C$, respectively. Notably, the function does not require knowledge about the state 
of the neighbors or edges that satisfy the condition. 
Specifically, let $n_{v, p} = \left|\left\{w \in N(v) : \neighbf_v(w) = 1\right\}\right|$ and $e_{v, p} = \left|\left\{\{v, w\} \in E :
\adjf_v\left(\{v, w\}\right) = 1\right\}\right|$. In phase $p$, the function $\nodef_v(I_{v, \prev}, n_{v, p}, e_{v, p}) \rightarrow I_{v, p}$
outputs the next state of $v$ provided the state $I_{v, \prev}$ of $v$ from the most recent phase $\prev$ that $p$ depends on
and the computed values $n_{v, p}$ and $e_{v, p}$.

Finally, on the set of incident edges to $v$ that satisfy condition $C$, 
$\{\{v, w\} \in E : \adjf_v(\{v, w\}) = 1\}$,
$\alg$ uses each edge's function, $\edgef_{\{v, w\}}$, to update the state of the edge, the details of which are given next.

\paragraph{Edge function}

Each edge $e = \{u, v\}$ has a function $\edgef_e$ that takes its previous edge state and real-valued
inputs from its adjacent nodes and outputs its current state. Namely, in phase $p$, edge $e$ computes $\edgef_{e}(I_{e, \prev}, i_u, i_v) \rightarrow I_{e, p}$
to determine its next state $I_{e, p}$ using its previous state $I_{e, \prev}$ and inputs from its endpoints, $i_u$ and $i_v$.
\\\\
The algorithm proceeds with the next phases until a \emph{stopping condition} is satisfied for the entire graph.
The stopping function of the algorithm is based on a \emph{threshold function}, which
takes as input the states of the nodes and edges computed in the current phase $p$. 
If the number of nodes/edges that satisfy the condition
exceeds a threshold, then the algorithm terminates. 
Specifically, these stopping functions are defined as follows.

\paragraph{Stopping functions} The stopping functions determine whether the algorithm stops running or continues running
with the next phase. There are stopping functions for each individual node and also a global stopping function that
determines whether a certain number of nodes and edges
that satisfy a condition $F$ is at least some threshold $T$. The individual stopping function prevents a particular node from 
participating in the next phases while a global stopping function stops the algorithm.
The individual stopping function for each node, $\stopf_v$, relies on how many 
neighbors' states or neighboring adjacent edges' states satisfy a 
condition $F$. We denote the number of neighbors and adjacent edges that satisfy $F$ by 
$s_{v, p} = \left|\left\{w \in N(v) : I_{w, p} \text{ satisfies } F\right\}\right|$
and $t_{v, p} = \left|\left\{w \in N(v): I_{\{v, w\}, p} \text{ satisfies } F\right\}\right|$, respectively.
Then, we define $\stopf_v$ as follows, for some fixed constants $c_1, c_2 \geq 0$ and fixed threshold $T \geq 0$:

\begin{align*}
    \stopf_v\left(s_{v, p}, t_{v, p}\right) = 
    \left\lbrace
    \begin{array}{r@{}l}
        1, & \text{ if } c_1 \cdot s_{v, p} + c_2 \cdot t_{v, p} \geq T;\\
        0, & \text{ otherwise. } 
    \end{array}
    \right.
\end{align*}

We define the global stopping function, in each phase $p$, using $\gcount_p = \left|\left\{v \in V : I_{v, p} \text{ satisfies } F \right\}\right|$ and $t_p = \left|\left\{e \in E : I_{e, p} \text{ satisfies } F \right\}\right|$, for some fixed constants $c_3, c_4 \geq 0$ and fixed threshold $T \geq 0$:

\begin{align*}
    \globalstopf\left(s_p, t_p\right) = 
    \left\lbrace
    \begin{array}{r@{}l}
        1, & \text{ if $c_3 \cdot s_p + c_4 \cdot t_p \geq T$ };\\
        0, & \text{ otherwise. } 
    \end{array}
    \right.
\end{align*}

\paragraph{Output function} 
Once the algorithm terminates, each node outputs an answer to the problem using output functions $\outf_v(I_{v, p}) 
\rightarrow \mathbb{Z}$, where $\mathbb{Z}$ is the set of integers. There may exist a global function
$\globaloutf(\{I_{c, p} : c \in V \cup E\}) \rightarrow \mathbb{Z}$, which takes the internal states of the nodes and edges and outputs an integer answer.
\\\\
A non-trivial number of parallel and distributed graph algorithms 
are \localadjust,
including the non-private algorithms from prior sections 
for \kc decomposition, densest subgraphs, and low out-degree orderings. 
In the next section, we discuss how to obtain $\eps$-edge DP
graph algorithms from \localadjust graph algorithms.

\subsection{Edge Differential Privacy from Local Adjustability}\label{sec:framework-privacy}

We first show how to obtain $\eps$-edge DP \localadjust algorithms. Then, a slight modification to the \localadjust 
conditions also allows us to obtain $\eps$-\emph{LEDP} algorithms. We leave as an interesting open question
proving general utility bounds for our framework.

The main idea here is to show that, on edge-neighboring
graphs $G = (V, E)$ and $G'=(V, E')$, the probability 
that the \emph{same} states are maintained
over the at most $K$ phases satisfy~\cref{def:dp}. We prove this by conditioning
on the states from previous phases.
Then, we show via the chain rule that this implies that our algorithm is $\eps$-edge DP. To do 
this, we make several modifications to the node, edge, stopping, and output functions. 
Our privacy framework is given as follows:

\paragraph{Privacy Framework}
Suppose we are provided a \localadjust algorithm $\alg$. Then
we formulate the following mechanism $\mech(G, \interns_{\prev}, \alg, p) \rightarrow \left(\interns_p, \outputs\right)$ 
performed by the curator which takes as input
the graph $G$, the node and edge states of $G$ from phase $\prev$, the \localadjust algorithm $\alg$, 
and the phase number $p$. Mechanism
$\mech$ outputs the next set of states $\interns_p$ for phase $p$
if the algorithm is still running or $\emptyset$ if the algorithm has stopped. 
It also outputs the set of outputs $\outputs$ if the algorithm has stopped
or $\emptyset$ if the algorithm is still running.
The mechanism modifies $\alg$ in the following ways:

\begin{itemize}
    \item For each node $v$, the node computes $\hatn_{v, p}$ and $\he_{v, p}$ by first sampling $X_{v, p}, Y_{v, p} \sim \geom(\eps/20K)$ and 
    calculates $\hatn_{v, p} = n_{v, p} + X_{v, p}$ and $\he_{v, p} = e_{v, p} + Y_{v, p}$. Node $v$ uses $\hatn_{v, p}$ and $\he_{v, p}$ instead of $n_{v, p}$
    and $e_{v, p}$ in
    $\nodef_v(I_{v, \prev}, \hatn_{v,p}, \he_{v, p})$. 
    \item When determining the stopping condition, node $v$ samples $S_{v, p}, T_{v, p} \sim \geom(\eps/(20K))$
    and computes $\hs_{v, p} = s_{v, p} + S_{v, p}$ and $\hatt_{v, p} = t_{v, p} + T_{v, p}$. Node $v$ uses $\hs_{v, p}$ and $\hatt_{v, p}$ 
    instead of $s_{v, p}$ and $t_{v, p}$ in $\stopf_{v}(\hs_{v, p}, \hatt_{v, p})$.
    The curator also samples $S_{p}, T_p \sim \geom(\eps/(5K))$ and computes $\hs_p = s_p + S_p$ and $\hatt_p = t_p + T_p$. 
    The curator uses $\hs_p$ and $\hatt_p$ as input into $\globalstopf(\hs_p, \hatt_p)$.
    \item Let $\gs_{\globaloutf}$ and $\gs_{\outf_v}$ be the global sensitivities of $\globaloutf$ and $\outf_v$, respectively. To compute the output after satisfying the stopping condition, each node $v$ samples $Q_{v, p} \sim \geom(\eps/(10 \cdot \gs_{\outf_{v}} \cdot K))$
    and outputs $\outf_{v}(I_{v, p}) + Q_{v, p}$. Then, the curator samples $W_{p} \sim \geom(\eps/(5 \cdot \gs_{\globaloutf} \cdot K))$ and outputs 
    $\globaloutf(\interns_p) + W_p$. 
\end{itemize}

The main intuition behind our privacy framework is derived from our $\eps$-edge DP densest subgraph
algorithm (\cref{sec:densest}) regarding the creation of 
\defn{dummy} edges that satisfy the conditions of the various functions.
These dummy edges account for the case when 
the extra edge $e' \in E' \setminus E$ where $e' \in E'$ 
satisfies any of the functions.
The number of these dummy edges that are added for each node
is drawn from a symmetric geometric distribution using the sensitivities of the
appropriate functions 
\conffull{that we analyze in the full version of our paper.}{in~\cref{cor:sensitivity-count,lem:sensitivity-adjacents,lem:sensitivity-st,cor:sensitivity-threshold}.}

\conffull{}{
We now prove some useful lemmas that will help us in proving the privacy of our framework.

\paragraph{Sensitivity}
We first make a few intuitive observations before proving the privacy of our DP framework. 
First, instead of using the global sensitivity of function $\nodef_v$, for node $v$, 
we instead use the sensitivity of a function $F$ which takes a graph $G = (V, E)$, 
a phase $p$, and a set of valid states, $\mathcal{I}_V$, for all nodes $v \in V$
and a set of valid states, $\mathcal{I}_{V \times V}$, for every pair of two 
nodes. Function $F\left(G, p, \mathcal{I}_V, \mathcal{I}_{V \times V}\right)$ then outputs
$\sum_{v \in V} n_{v, p} + \sum_{e \in E} e_{v, p}$ 
for phase $p$ using the states given in $I_V$ and $I_{V \times V}$ as the previous
states (assuming $p$, $I_V$ and $I_{V \times V}$ are public). 
By definition of $n_{v, p}$ and $e_{v, p}$, we can in fact
bound the sensitivity of $F$ (on edge-neighboring inputs $G= (V, E)$ and $G' = (V, E \cup \{u, v\})$).

\begin{lemma}\label{lem:sensitivity-adjacents}
    The sensitivity of function $F\left(\adj_v, p, \mathcal{I}_V, \mathcal{I}_{V \times V}\right)$ 
    that computes the sum 
    $\sum_{v \in V} n_{v, p} + \sum_{e \in E} e_{v, p}$ for a given phase $p$ using the states of neighbors in
    $I_{V}$ and adjacent edges in $\mathcal{I}_{V \times V}$
    is $4$.
\end{lemma}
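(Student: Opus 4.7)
The plan is to unpack both summands in $F$ and track how each responds to the insertion (resp.\ deletion) of a single edge $\{u,v\}$, using crucially the ``locally adjustable'' property that edge updates do not change previously-fixed states in $\mathcal{I}_V$ and $\mathcal{I}_{V\times V}$ at phase $\prev < p$.

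First I would handle the term $\sum_{w\in V} n_{w,p}$. By the local-adjustability specification for $\neighbf$, the previous states of all nodes except the endpoints $u,v$ are untouched, and the value $\neighbf_w(x)$ depends only on $I_{x,\prev}$. Therefore $n_{w,p}$ is identical under $G$ and $G'$ for every $w \notin \{u,v\}$. For $w=u$, the only new contribution is the indicator $\neighbf_u(v)$, which changes $n_{u,p}$ by at most $1$; symmetrically for $w=v$. This bounds the change in the first summand by $2$.

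Next I would handle $\sum_{e\in E} e_{v,p}$ (interpreting the second sum as aggregating $e_{w,p}$ across $w$, i.e.\ the total number of node--incident-edge pairs satisfying condition $C$). By the analogous ``local'' requirement for $\adjf$, only the states of edges incident to a modified edge could change; since the only inserted edge is $\{u,v\}$ itself, the indicator $\adjf_w(\{w,x\})$ is unchanged for every pre-existing edge. The new edge $\{u,v\}$ contributes $\adjf_u(\{u,v\})$ to $e_{u,p}$ and $\adjf_v(\{u,v\})$ to $e_{v,p}$, each being $0$ or $1$. This bounds the change in the second summand by $2$.

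Combining these two bounds by the triangle inequality gives $|F(G',\dots)-F(G,\dots)| \le 2+2 = 4$ on any edge-neighboring pair, which is exactly the claimed sensitivity. The step requiring the most care is the second one: I need to be precise that $\adjf$ is a purely local function of the edge's own previous state $I_{e,\prev}$, so that no ``pre-existing'' edge's contribution can shift due to the insertion, and that the indexing convention in the second sum (over nodes, counting incident satisfying edges, as in the definition of $e_{v,p}$) is what makes both endpoints contribute independently. No obstacle beyond bookkeeping is expected; the proof is essentially a sensitivity calculation riding on the locality axioms built into the definition of \localadjust algorithms.
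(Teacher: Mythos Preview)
Your proposal is correct and follows essentially the same approach as the paper's proof: decompose $F$ into the two summands, observe that for each summand only the two endpoints $u,v$ of the inserted/deleted edge can have their counts change (each by at most $1$), and add the resulting bounds of $2+2=4$. The paper's argument is terser but identical in substance, including your correct reading of the second sum as aggregating $e_{w,p}$ over nodes $w$.
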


\begin{proof}
    The count $n_{v, p}$ for each node $v$ is the \emph{number} of neighbors $w$ of $v$ for which $\neighbf_v(w)$ outputs $1$. 
    Suppose $\{u, v\} \in E' \setminus E$. Then, $\neighbf_v(u)$ may equal $1$ (and similarly $\neighbf_u(v)$ may equal $1$). 
    The edge would not affect $n_{w, p}$ of any $w \notin \{u, v\}$. 
    Thus, the additional edge in 
    $E'$ can make at most two endpoints increase their $n_{v, p}$, each by $1$, and $\sum_{v \in V} n_{v, p}$ increases by at most $2$.
    By the same argument, $\sum_{e \in E} e_{v, p}$ increases by at most $2$. 
\end{proof}

Suppose, we have another function $f_n(\adj_v, p, \mathcal{I}_V, \mathcal{I}_{V \times V})$ that takes 
as input the adjacency list of a node $\adj_v$ and outputs
$n_{v, p}$ using the states of neighbors and adjacent edges in $\mathcal{I}_V$ and $\mathcal{I}_{V \times V}$.
(Define the function $f_e$ similarly except it outputs $e_{v, p}$.)
An immediate corollary of the above proof bounds the sensitivity of $n_{v, p}$ and $e_{v, p}$.

\begin{corollary}\label{cor:sensitivity-count}
    The sensitivity of $f_n$ for phase $p$ is $1$. The sensitivity of $f_e$ is also $1$.
\end{corollary}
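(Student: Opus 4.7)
The plan is to extract this statement as an immediate corollary of the argument already made in the preceding lemma. In that proof it was shown that, for edge-neighboring graphs $G = (V, E)$ and $G' = (V, E \cup \{u, v\})$, inserting the extra edge $\{u, v\}$ can change $\neighbf_w(\cdot)$ only when evaluated on the neighbor across the new edge, and this only happens at the two endpoints $u$ and $v$. Consequently, at most two of the counts $\{n_{w, p}\}_{w \in V}$ are affected, and each is affected by at most $1$. Picking out a single node $v$, this directly gives that $|f_n(\adj_v, p, \mathcal{I}_V, \mathcal{I}_{V\times V}) - f_n(\adj'_v, p, \mathcal{I}_V, \mathcal{I}_{V\times V})| \leq 1$.

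For the second statement about $f_e$, the same structural argument applies to $e_{v, p}$. By the locality property built into the definition of the condition $C$ for incident edges, the insertion (or deletion) of the single differing edge $\{u, v\}$ can only change whether $C$ is satisfied on that one edge itself; it does not alter the satisfaction status of any other edge. Therefore $e_{w, p}$ is unchanged for every $w \notin \{u, v\}$, and for $w \in \{u, v\}$ it changes by at most $1$. So $f_e$ also has sensitivity $1$.

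The main (minor) subtlety to be careful about is the distinction between the two contributing cases: if the differing edge $\{u, v\}$ is incident to the node whose function is being evaluated, then one relevant neighbor (resp.\ incident edge) is added or removed and the count moves by at most one; if it is not incident, then $\adj_v$ is identical in $G$ and $G'$ and the count is unchanged. In both situations the $\ell_1$ difference is bounded by $1$, giving sensitivity exactly $1$ (and this is tight, since a single qualifying neighbor or edge realizes the bound). This corollary is what we will invoke later when calibrating the geometric noise $X_{v,p}, Y_{v,p} \sim \geom(\eps/(20K))$ added in the privacy framework, since each node only needs to mask its individual count $n_{v,p}$ or $e_{v,p}$ rather than the global sum.
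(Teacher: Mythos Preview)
Your proposal is correct and takes essentially the same approach as the paper: the corollary is stated there without a separate proof, simply as ``an immediate corollary of the above proof'' (i.e., of \cref{lem:sensitivity-adjacents}), and you have spelled out exactly the per-node specialization of that argument. Your additional case split (differing edge incident vs.\ not incident to the node under consideration) just makes explicit what the paper leaves implicit.
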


We define similar functions to the above for $s_{v, p}, t_{v, p}, s_p, t_p$. Namely, 
we pass in the states $I_{V}$ and $I_{V \times V}$ into the functions (so they are public information)
and compute the sensitivity for edge-neighboring inputs whose states come from $I_V$ and $I_{V \times V}$.
For simplicity, we do not define these functions explicitly and instead 
give the sensitivity for $s_{v, p}, t_{v, p}, s_p, t_p$ in lieu of these functions.
Our framework also relies on the sensitivity of $s_{v, p}$ and $t_{v, p}$, whose proof is identical to~\cref{lem:sensitivity-adjacents}.

\begin{lemma}\label{lem:sensitivity-st}
The sensitivity of $s_{v, p}$ for phase $p$ is $1$. The same holds for $t_{v, p}$.
\end{lemma}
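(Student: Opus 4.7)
The plan is to follow the same case-analysis template used in the proof of~\cref{lem:sensitivity-adjacents}, but now for a single fixed node $v$ and with the neighbor/edge-state indicators produced by the condition $F$ rather than by $\neighbf_v$ and $\adjf_v$. Fix any pair of edge-neighboring graphs $G = (V, E)$ and $G' = (V, E \cup \{u^*, v^*\})$ (the symmetric case of deletion is analogous), and treat the public inputs $\mathcal{I}_V$ and $\mathcal{I}_{V\times V}$ as fixed collections of states indexed by nodes and (potential) edges.

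First I would bound $|s_{v,p}(G) - s_{v,p}(G')|$. If $v \notin \{u^*, v^*\}$, then $N(v)$ is identical in $G$ and $G'$, and since the node states are drawn from the same fixed $\mathcal{I}_V$, the indicator $\mathbf{1}[I_{w,p} \text{ satisfies } F]$ is unchanged for every $w \in N(v)$, so $s_{v,p}$ is unchanged. If instead $v \in \{u^*, v^*\}$, say $v = v^*$, then $N(v)$ in $G'$ differs from $N(v)$ in $G$ by exactly the single neighbor $u^*$, so $s_{v,p}$ gains at most one indicator (namely $\mathbf{1}[I_{u^*,p} \text{ satisfies } F]$). Either way $|s_{v,p}(G) - s_{v,p}(G')| \le 1$.

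Next I would repeat the argument verbatim for $t_{v,p}$, using instead the fixed edge-state collection $\mathcal{I}_{V\times V}$: if the modified edge is not incident to $v$, no incident-edge indicator changes, and if it is incident to $v$ then exactly one new incident edge appears (or disappears), contributing at most one indicator. Hence $|t_{v,p}(G) - t_{v,p}(G')| \le 1$.

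The main obstacle is essentially notational rather than mathematical: one has to be careful that $s_{v,p}$ and $t_{v,p}$ are being viewed as functions of the graph alone with $\mathcal{I}_V$ and $\mathcal{I}_{V\times V}$ treated as public, so that the indicator values cannot change through the hidden states of neighbors. Once this setup matches the function $F(\cdot, p, \mathcal{I}_V, \mathcal{I}_{V\times V})$ convention from~\cref{lem:sensitivity-adjacents}, the local-update property of the \localadjust framework (an edge update touches only its two endpoints) yields the bound immediately, and no additional estimates are needed.
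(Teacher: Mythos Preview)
Your proposal is correct and matches the paper's approach exactly: the paper simply states that the proof is identical to that of \cref{lem:sensitivity-adjacents}, and your case analysis (the extra edge either is or is not incident to $v$, changing at most one indicator) is precisely that argument specialized to a single node. Your remark about treating $\mathcal{I}_V$ and $\mathcal{I}_{V\times V}$ as public is the right way to make this rigorous and is consistent with how the paper sets up these sensitivity functions.
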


We bound the sensitivity of $\gcount_p + t_p= \left|\left\{c \in (V \cup E): I_{c, p} \text{ satisfies } F \right\}\right|$.

\begin{corollary}\label{cor:sensitivity-threshold}
    The sensitivity of $s_p = \left|\left\{v \in V: I_{v, p} \text{ satisfies } F \right\}\right|$ for any phase $p$ is $0$. 
    The sensitivity of $t_p$ is $1$.
\end{corollary}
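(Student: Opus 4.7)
The plan is to proceed by a direct case analysis on edge-neighboring graphs $G = (V, E)$ and $G' = (V, E \cup \{e'\})$, treating the state collections $\mathcal{I}_V$ and $\mathcal{I}_{V \times V}$ as public inputs to an auxiliary counting function, exactly as was done in the preceding sensitivity lemmas for $f_n$, $f_e$, $s_{v,p}$, and $t_{v,p}$. Under that convention, the per-element predicates ``$I_{v,p}$ satisfies $F$'' and ``$I_{e,p}$ satisfies $F$'' are evaluated using inputs that do not vary with the graph, so the only way the two counts can differ is through a change in the \emph{universe} being summed over.

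For the claim about $s_p$: by the definition of edge-neighboring graphs in \cref{def:edge-adjacent-graphs}, the node set $V$ is identical in $G$ and $G'$. Since the value $I_{v,p}$ used to decide membership is supplied by the fixed public collection $\mathcal{I}_V$ rather than recomputed from $E$, each node contributes the same indicator in both graphs. Hence $s_p(G) = s_p(G')$ exactly, giving sensitivity $0$.

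For the claim about $t_p$: the universes being summed over are $E$ and $E \cup \{e'\}$, which differ in exactly one element. Every edge $e \in E$ is evaluated against the same predicate with the same state $I_{e,p} \in \mathcal{I}_{V \times V}$ in both graphs and thus contributes identically to $t_p(G)$ and $t_p(G')$. The single edge $e'$ contributes $1$ to $t_p(G')$ exactly when $I_{e',p}$ satisfies $F$ and $0$ otherwise, while contributing nothing to $t_p(G)$. Therefore $|t_p(G) - t_p(G')| \in \{0,1\}$, so the $\ell_1$-sensitivity is at most $1$; taking any instance where $I_{e',p}$ satisfies $F$ shows this bound is tight.

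The main obstacle is conceptual rather than technical: one must recognize that, as in \cref{lem:sensitivity-adjacents} and \cref{lem:sensitivity-st}, the per-element states are being passed in as public inputs to the counting functions, so that no downstream propagation through the algorithm's state-update rules needs to be analyzed at the sensitivity step. Once this convention is fixed, both bounds reduce to comparing the sizes of two universes that differ by one element (for $t_p$) or are identical (for $s_p$), and the corollary follows immediately from the definitions.
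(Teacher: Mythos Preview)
Your proof is correct and follows essentially the same approach as the paper's proof, which is equally brief: the paper simply observes that the additional edge $e'$ in $G'$ may or may not satisfy $F$ (so $t_p$ changes by at most $1$), while the node set is unchanged (so $s_p$ is unaffected). Your treatment is more explicit about the convention that the states are public inputs and also notes tightness, but the underlying argument is identical.
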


\begin{proof}
At most one additional edge, $I_{e', p}$ in $G'$ can satisfy $F$. Thus, $t_p$ increases by at most 
$1$. All nodes remain the same and hence, $s_p$ does not change.
Thus, the sensitivity is $1$.
\end{proof}

We first show the following for $\mech$.

\begin{lemma}\label{lem:mech-dp}
For any pair of edge-neighboring graphs $G = (V, E)$ and $G' = (V, E \cup \{e'\})$,
let $e'$ be the edge that is present in $E'$ but not in $E$.
Let $R$ and $R'$ be the events where given
valid state values $state_{v, \prev}$ and $state_{e, \prev}$ 
(for all $v \in V, e \in E$), for all $I_{v, \prev}, I_{e, \prev} \in \interns_{\prev}$ and 
$I'_{v, \prev}, I'_{e, \prev} \in \interns'_{\prev}$, it holds that
$I_{v, \prev} = I'_{v, \prev} = state_{v, \prev}$ for all $v \in V$ and
$I_{e, \prev} = I'_{e, \prev} = state_{e, \prev}$ for all $e \in E$. Conditioned on $R$ and $R'$, the
following hold for all $S \in \range(\mech)$ on inputs $A$, and $p$:\footnote{We abuse notation slightly 
to indicate $(\interns_{\prev} \setminus \{I_{e, p}\}, \outputs)$ by $\mech(G, 
\interns_{\prev}, \alg, p) \setminus \{I_{e, p}\}$.}

\begin{align*}
    \frac{1}{\exp(\eps/K)} \leq \frac{\prob\left[\mech(G', \interns'_{\prev}, A, p) \setminus \{I_{e, p}\} = S \mid R'\right]}{\prob\left[\mech(G, \interns_{\prev}, A, p) = S \mid R\right]} \leq
    \exp(\eps/K).
\end{align*}
\end{lemma}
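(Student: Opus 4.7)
The plan is to write $\mech(G, \interns_\prev, \alg, p)$ as a tuple of perturbed random variables and apply the chain rule to the joint likelihood ratio, following the template of the proof of \Cref{lem:densest-z-dp}. After removing the state $I_{e,p}$ of the extra edge $e'$ (which has no analogue in $G$), the output of $\mech$ decomposes into the new node states $\{I_{v,p}\}_v$, the new edge states $\{I_{e,p}\}_{e \in E}$, the individual stopping bits $\{\stopf_v\}_v$, the global stopping bit $\globalstopf$, and (if the algorithm halts) the per-node noisy outputs $\outf_v(I_{v,p}) + Q_{v,p}$ and the global noisy output $\globaloutf(\interns_p) + W_p$. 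Every step except the injection of geometric noise is deterministic, so by \Cref{thm:post-processing} it suffices to track the privacy cost of the noise injections.

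Conditioning on $R$ and $R'$, the functions $\neighbf_w$ and $\adjf_w$ evaluated against the previous states agree in $G$ and $G'$ for every $w \notin \{u', v'\}$, where $u', v'$ are the endpoints of $e'$. Hence $n_{w,p}, e_{w,p}, s_{w,p}, t_{w,p}$ are identical in the two graphs for such $w$, the conditional distributions of $\hatn_{w,p}, \he_{w,p}, \hs_{w,p}, \hatt_{w,p}$ coincide, and the deterministic post-processed quantities $I_{w,p}, \stopf_w$, and $\outf_w(I_{w,p}) + Q_{w,p}$ agree in conditional distribution. The analysis therefore reduces to bounding the privacy cost contributed by the two affected endpoints together with the three global noise injections.

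For each remaining perturbed quantity I invoke the geometric mechanism: by \Cref{cor:sensitivity-count}, \Cref{lem:sensitivity-st}, and \Cref{cor:sensitivity-threshold} the functions $f_n, f_e, s_v, t_v$ each have sensitivity $1$, $s_p$ has sensitivity $0$, and $t_p$ has sensitivity $1$; so each $\geom(b)$ perturbation of a sensitivity-$\Delta$ function yields $(\Delta b)$-indistinguishable conditional distributions. The four counts at each of the two endpoints contribute $8 \cdot \eps/(20K) = 2\eps/(5K)$; the global stopping contributes $\eps/(5K)$ (only $t_p$ is sensitive); the two affected per-node outputs, calibrated to $\gs_{\outf_v}$, contribute $2 \cdot \eps/(10K) = \eps/(5K)$; and the global output, calibrated to $\gs_{\globaloutf}$, contributes $\eps/(5K)$. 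Summing:
\begin{align*}
\frac{2\eps}{5K} + \frac{\eps}{5K} + \frac{\eps}{5K} + \frac{\eps}{5K} = \frac{\eps}{K}.
\end{align*}
Expanding the joint probability as a product of conditionals (in the order listed above) and applying each bound in turn, as in the proof of \Cref{lem:densest-z-dp}, then shows the joint ratio lies in $[e^{-\eps/K}, e^{\eps/K}]$.

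The main obstacle is the edge-state step $I_{e,p} = \edgef_e(I_{e,\prev}, i_u, i_v)$: for edges incident to $u'$ or $v'$ the endpoint inputs $i_u, i_v$ already have different distributions in $G$ and $G'$, so one might fear an additional privacy charge here. The key observation is that $\edgef_e$ injects no fresh noise and takes only previously-released quantities as input, so $I_{e,p}$ is post-processing and is absorbed into the costs already charged at the endpoints. A secondary subtlety is that the excluded element $I_{e',p}$ must be stripped before the comparison (as the statement does), since it has no counterpart in $G$; once this is done, the support of the $G'$-distribution matches that of the $G$-distribution and the chain-rule bound applies cleanly.
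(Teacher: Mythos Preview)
Your proposal is correct and tracks the paper's proof closely: both decompose the phase-$p$ output into node states, edge states, individual and global stopping bits, and output values, then bound each conditional factor in the chain-rule expansion via the geometric mechanism together with the sensitivity bounds in \Cref{cor:sensitivity-count}, \Cref{lem:sensitivity-st}, and \Cref{cor:sensitivity-threshold}, summing to $\eps/K$.

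One minor wrinkle worth tightening: in your second paragraph you assert that $s_{w,p}$ and $t_{w,p}$ are identical in $G$ and $G'$ for $w \notin \{u',v'\}$ conditioned only on $R, R'$. That is not literally true, because these counts depend on the \emph{current}-phase states $I_{x,p}$ of $w$'s neighbors, which are random and, for $x \in \{u',v'\}$, distributed differently in the two graphs until you have already conditioned on the node-state layer. The paper handles this by explicitly introducing a second conditioning event $W,W'$ fixing the phase-$p$ states before bounding the stopping terms. Your chain-rule paragraph does the same thing implicitly (you process node states before stopping bits), so the argument still goes through; just move the claim about $s_{w,p}, t_{w,p}$ to after that conditioning. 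Also, your $\eps/(5K)$ charge for the two per-node outputs is a harmless overcount: once all $I_{v,p}$ are fixed, $\outf_v(I_{v,p})$ is identical in $G$ and $G'$, so those terms actually contribute zero and only $\globaloutf$ consumes that budget.
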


\begin{proof}
    In each phase $p$, mechanism $\mech$ uses the states of the edges and nodes from 
    the most recent phase $\prev$ it depends on and 
    evaluates the node, edge, stopping and output functions. 
    Each function $\neighbf_{v}$ is deterministic.
    By~\cref{cor:sensitivity-count}, $n_{v, p}$ increases by at most $1$ in 
    $G'$ compared to $G$
    conditioned on the events $R$ and $R'$. In the below proofs, we remove the 
    conditioning on the previous states for simplicity in notation, but
    all probabilities are conditioned on the events $R$ and $R'$. Similarly, $\adjf_v$ is also 
    deterministic and $e_{v, p}$ increases by at most $1$, 
    by~\cref{cor:sensitivity-count}, in $G'$ compared to $G$. Since
    $\nodef_v$ is a deterministic function on $I_{v, \prev}$, $\hatn_{v, p}$, 
    and $\he_{v, p}$, 
    the outputs of the functions are equal in $G$ and $G'$ 
    if $\hatn_{v, p}$ and $\he_{v, p}$ are 
    equal in $G$ and $G'$. What remains here is to bound the probabilities 
    that $\hatn_{v, p}$ and $\he_{v, p}$ equal particular values in $\mathbb{Z}$
    for both $G$ and $G'$. 
    Let $\hatn_{v, p}$ and $\he_{v, p}$ represent the values in $G$ and 
    $\hatn_{v, p}'$ and $\he_{v, p}'$ represent the values in $G'$. 
    
    Conditioned on $R$ and $R'$, $\hatn_{v, p}$ is drawn from a conditional 
    symmetric geometric distribution (SGD) with
    mean $n_{v, p}$ and $\he_{v, p}$ is drawn from a conditional SGD with mean $e_{v, p}$.
    By~\cref{cor:sensitivity-count}, $\hatn_{v, p}'$ is drawn from a conditional SGD with mean 
    $n_{v, p} \leq n_{v, p}' \leq n_{v, p} + 1$ (and similarly for $\he_{v, p}'$). 
    All the distributions have the same parameter, $\eps/20K$, by definition of our mechanism. 
    Thus, the conditional distributions from which $\hatn_{v, p}$ and $\hatn_{v, p}'$ are 
    drawn are $\eps$-indistinguishable (see the beginning of~\cref{sec:densest-privacy} for the definition). 
    The same holds for all other pairs of random variables for nodes $v$ and $u$ in $G$ and $G'$.
    There are four pairs of $\eps/(20K)$-indistinguishable conditional distributions. We now show that
    they are independent. Conditioned on $R$ and $R'$, the values $n_{v, p}, e_{v, p}, n_{u, p}$ and $e_{u, p}$
    are fixed and the noises drawn for each of these values are independent. Thus, the distributions are 
    independent and the joint distributions for $(\hatn_{v, p}, \he_{v, p}, \hatn_{u, p}, \he_{u, p})$
    and $(\hatn_{v, p}', \he_{v, p}', \hatn_{u, p}', \he_{u, p}')$ are $4\cdot \eps/(20K) = \eps/(5K)$-indistinguishable.

    We just showed that the new states of the nodes are the same for $G$ and $G'$ with similar probabilities.
    Now, we condition on the outputs of $\nodef_v$ being identical 
    in $G$ and $G'$ for every $v \in V$. This means that the inputs sent to the adjacent edges will be identical. 
    Since the function
    $\edgef_e$ for each $e \in E\cap E'$ is deterministic, all $I_{e, p} = I'_{e, p}$ for $e \in E \cap E'$. 
    
    Finally, we must determine the outputs of the stopping functions 
    and the final outputs of the mechanism are identical in $G$ and $G'$. Since each $\stopf_v$ is a 
    deterministic function, it will output the same output in $G$ and $G'$ as long as 
    $\hs_{v, p} = \hs'_{v, p}$ and $\hatt_{v, p} = \hatt'_{v, p}$. By~\cref{lem:sensitivity-st},
    the sensitivity of $s_{v, p}$ is $1$ (same for $t_{v, p}$). 
    We now condition on the events $W$ and $W'$ where $I_{v, p} = I'_{v, p} = state_{v, p}, 
    I_{e, p} = I'_{e, p} = state_{e, p}$.
    Hence, by  the same argument as above using~\cref{lem:sensitivity-st}, 
    the fact that $s_{v, p}, t_{v, p}, s'_{v, p}, t'_{v, p}$ are fixed conditioned on $W$ and $W'$, 
    and the independence of the noises, $\hs_{v, p}$ and $\hs'_{v, p}$ (resp.\ $\hatt_{v, p}$ and 
    $\hatt_{v, p}'$) are drawn from 
    $\eps/(20K)$-indistinguishable conditional distributions.

    The same argument holds for the global stopping function.
    Given that the stopping functions are met,
    the $(\eps/5K)$-edge DP of the outputs hold by~\cref{lem:mech-dp} provided
    the sensitivities of $\outf_v$ and $\globaloutf$ are 
    $\gs_{\outf_v}$ and $\gs_{\globaloutf}$, respectively. 
    
    Altogether, by the chain rule over the probabilities that the outputs of all functions, $\nodef_v$, $\edgef_e$, $\stopf_v$, $\globalstopf$,
    $\outf_v$, and $\globaloutf$ are identical, we obtain the conditions of the lemma.
\end{proof}

Let $\hat{\alg}$ be the algorithm obtained from \localadjust algorithm $A$ that 
runs $\mech$ during each phase of the algorithm. $\hat{\alg}$ outputs $\outputs$ obtained from $\mech\left(G, \interns_{\prev}, \alg, p\right)$ 
We now show our main theorem.

\begin{theorem}\label{thm:la-dp}
$\hat{\alg}$ is $\eps$-edge DP provided \localadjust algorithm $A$.
\end{theorem}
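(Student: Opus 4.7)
The plan is to derive $\eps$-edge differential privacy of $\hat{\alg}$ from the per-phase privacy guarantee in \cref{lem:mech-dp} by an inductive chain-rule argument over the (at most) $K$ phases. Fix edge-neighboring graphs $G=(V,E)$ and $G'=(V,E\cup\{e'\})$, and let $\pi$ denote the full transcript of $\hat{\alg}$, which is the sequence $(\interns_1,\outputs_1),(\interns_2,\outputs_2),\ldots,(\interns_K,\outputs_K)$ where $\interns_p$ is the collection of node and edge states after phase $p$ and $\outputs_p$ is either $\emptyset$ or the final output tuple. Because $e'$ exists only in $G'$, the state $I_{e',p}$ is meaningless in $G$; we therefore compare the \emph{visible} transcript obtained by deleting the coordinates $\{I_{e',p}\}_p$ from the $G'$-transcript, as already anticipated by the notation $\mech(\cdot)\setminus\{I_{e,p}\}$ in \cref{lem:mech-dp}. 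Since this deletion is a fixed post-processing step (it depends only on the identity of $e'$, which is part of the neighboring relation and not on any randomness), \cref{thm:post-processing} guarantees that privacy of the full visible transcript suffices.

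Next I would apply the chain rule. For any fixed target transcript value $\pi^\star=(\interns^\star_1,\outputs^\star_1),\ldots,(\interns^\star_K,\outputs^\star_K)$,
\begin{align*}
\prob[\hat{\alg}(G)=\pi^\star] \;=\; \prod_{p=1}^{K} \prob\!\left[\,\mech(G,\interns^\star_{\prev(p)},\alg,p)=(\interns^\star_p,\outputs^\star_p)\ \Big|\ \interns_{p'}=\interns^\star_{p'}\ \forall p'<p\right],
\end{align*}
and analogously for $G'$ with the $e'$-coordinates removed. The key point is that the conditioning event in the $p$-th factor fixes \emph{identical} previous states $\interns_{\prev(p)}^\star$ in both graphs, which is exactly the event $R$ (resp.~$R'$) required by \cref{lem:mech-dp}. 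Therefore each factor satisfies
\begin{align*}
\frac{\prob[\mech(G',\interns^\star_{\prev(p)},\alg,p)\setminus\{I_{e',p}\}=(\interns^\star_p,\outputs^\star_p)\mid R']}{\prob[\mech(G,\interns^\star_{\prev(p)},\alg,p)=(\interns^\star_p,\outputs^\star_p)\mid R]} \in \left[e^{-\eps/K},\,e^{\eps/K}\right].
\end{align*}
Multiplying the $K$ factors and summing over output sets (discreteness lets us pass freely between pointwise and event bounds) yields the desired $e^{\pm\eps}$ ratio.

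Equivalently, one can view this as a direct application of the adaptive composition theorem (\cref{thm:composition}): phase $p$ is an adaptive mechanism whose input is the history, and by \cref{lem:mech-dp} it is $(\eps/K)$-edge DP as a mechanism on the underlying edge set conditioned on that history. Composing $K$ such mechanisms gives $K\cdot(\eps/K)=\eps$-edge DP for the sequence of per-phase releases, and post-processing (removing $e'$-coordinates, and then reading off the declared output of $\hat{\alg}$ from the transcript) preserves this guarantee.

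The main obstacle is technical rather than conceptual: formalizing the coupling between transcripts on $G$ and $G'$ so that the chain rule applies cleanly despite the extra edge $e'$. In particular, the per-phase lemma is stated conditionally on the previous states being \emph{literally equal} in the two graphs, so I need to argue that summing/integrating over the (removed) $e'$-coordinates preserves indistinguishability and that the stopping-time structure (the algorithm may halt at an earlier phase in one world than the other) is handled by treating halted phases as deterministic $\emptyset$ outputs — i.e., padding the transcript to length $K$ — so that the chain rule is applied over a fixed number of factors and no subtle conditioning issues arise from random-length transcripts.
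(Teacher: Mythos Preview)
Your proposal is correct and follows essentially the same approach as the paper: an inductive chain-rule argument over the at most $K$ phases that invokes \cref{lem:mech-dp} at each step and multiplies the per-phase $e^{\pm\eps/K}$ bounds to obtain $e^{\pm\eps}$. Your treatment is in fact slightly more explicit than the paper's on two points---padding halted phases to length $K$ and handling the extra $e'$-coordinate via post-processing---but these are exactly the technicalities the paper glosses over, so there is no genuine difference in strategy.
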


\begin{proof}
    Given $K$ as the upper bound on the number of phases used by the 
    algorithm, we show this theorem via induction on the phase number $1 \leq p \leq K$ 
    using~\cref{lem:mech-dp}. We show that in the $p$-th phase, the ratio of the probabilities that the outputs in $G$ and $G'$ for 
    phase $p$ equals $Z_p \in \range(\mech)$ is 
    upper bounded by $\exp(\eps \cdot p/K)$, conditioned on the
    outputs of the previous phases being equal to $Z_1, \dots, Z_{p-1}$. 
    Our base case consists of the first phase of the 
    algorithm when the states of nodes and edges are identical as they are set
    to default identical values. By~\cref{lem:mech-dp}, since $\interns_{0} = \interns_{0}'$,
    the ratio of the probabilities that the outputs are both equal to $Z_1$ is upper bounded by
    $\exp(\eps/K)$. 
    
    We assume for our induction step that in the $p$-th phase, the ratio of the probabilities is 
    upper bounded by $\exp(\eps \cdot p/K)$, conditioned on the outputs
    of the previous phases, and prove this for phase $p+1$. Let the previous phase that phase $p + 1$ depends on be
    $\prev$. By our induction 
    hypothesis, we have that the probability ratio is upper bounded by $\exp(\eps \cdot p/K)$ and that $\interns_{p} = \interns_{p}' \setminus \{I_{e, p}'\}$
    where $e \in E' \setminus E$. Let $W$ be the event that the outputs of the previous phases equal $Z_1, \dots, Z_{p-1}$.
    
    Then, since $G$, $A$, and $p+1$ are fixed, by~\cref{lem:mech-dp}, $\frac{\prob\left[\mech(G, \interns_{\prev}, A, p+1) = Z_{p+1} | W \right]}{\prob\left[\mech(G', \interns'_{\prev}, A, p+1) \setminus \{I_{e, p+1}\} = Z_{p+1} | W\right]} \leq 
    \exp(\eps/K)$ and $\frac{\prob\left[\mech(G', \interns'_{\prev}, A, p+1) \setminus \{I_{e, p+1}\} = Z_{p+1} | W\right]}{\prob\left[\mech(G, \interns_{\prev}, A, p+1) = Z_{p+1} | W\right]} \leq 
    \exp(\eps/K)$. Then, by the chain rule over all the conditional probabilities,
    the ratio of the probabilities that the outputs equal $Z_1, \dots, Z_{p+1}$ is upper bounded by
    $\exp(\eps \cdot p/K) \cdot \exp(\eps/K) = \exp(\eps \cdot (p+1)/K)$. 
    
    Since $K$ is an upper bound on the total number of phases, the ratio of the probabilities for all phases up to when the algorithm terminates 
    is upper bounded by $\exp(\eps \cdot K/K) = \exp(\eps)$. 
    After the algorithm $\hat{\alg}$ terminates, the internal states
    $\interns_{K} = \interns_{K}' = \emptyset$ (i.e., the curator outputs empty sets for the states of the 
    nodes and edges). 
    
    Finally, it is sufficient to show that $\frac{\prob[\hat{A}(G) = Z]}{\prob[\hat{A}(G') = Z]} \leq \exp(\eps)$ and 
    $\frac{\prob[\hat{A}(G') = Z]}{\prob[\hat{A}(G) = Z]} \leq \exp(\eps)$ for all $Z \in \range(\hat{A})$ for \emph{discrete} probability distributions
    because for any $S \subseteq \range(\hat{A})$, $\frac{\prob[\hat{A}(G) \in S]}{\prob[\hat{A}(G') \in S]}$ is given by $\frac{\prob[\hat{A}(G)= S_1] + \cdots + \prob[\hat{A}(G) = S_j]}{\prob[\hat{A}(G') = S_1] + \cdots + \prob[\hat{A}(G') = S_j]}$ (for all $S_1, \dots, S_j \in S$). This expression can be simplified to 
    
    \begin{align*}
        \frac{\prob[\hat{A}(G)= S_1] + \cdots + \prob[\hat{A}(G) = S_j]}{\prob[\hat{A}(G') = S_1] + \cdots + \prob[\hat{A}(G') = S_j]} \leq \frac{\exp(\eps) \cdot \prob[\hat{A}(G') = S_1] + \cdots + \exp(\eps) \cdot \prob[\hat{A}(G') = S_j]}{\prob[\hat{A}(G') = S_1] + \cdots + \prob[\hat{A}(G') = S_j]} = \exp(\eps).
    \end{align*}
    
    Hence, $\hat{\alg}$ is $\eps$-edge DP since the algorithm outputs identical outputs when it terminates, with probability ratio bounded
    by $\exp(\eps)$.
\end{proof}

\subsection{$\eps$-LEDP from Locally Adjustable Algorithms}\label{sec:framework-ledp}
Our framework can also be extended to the $\eps$-LEDP setting. In fact,
our $\eps$-LEDP \kc decomposition result is based on this framework. 
Our modified LEDP framework modifies our DP framework in 
the following ways. We remove the $\adjf_v(\{v, w\})$ and 
$\edgef_{\{v, w\}}$ functions for edges. We also remove all global functions
$\globalstopf$ and $\globaloutf$ since
the curator no longer has access to the private graph. We modify all 
$\nodef_v$ so they no longer have $e_{v, p}$ as input.
Then, we modify all $\stopf_v\left(s_{v, p}\right)$ functions \
so they only take as input 
$s_{v, p}$ for each $v \in V$ and no longer take
the states of adjacent edges. We prove that the modified
functions give a $\eps$-LEDP algorithm for any 
locally adjustable algorithm $\alg$ using the 
modified framework we give below. 

\paragraph{Modified Privacy Framework}
Suppose we are provided a \localadjust algorithm $\alg$ by our
modified definition above. Then,
we formulate the following distributed mechanism 
$\mech(\adj_i, \interns_{\prev}, \alg, p) \rightarrow \left(I_{i, p}, \outputs_i\right)$ 
performed by the nodes in the input
the graph $G$; the mechanism takes as input adjacency 
list $\adj_i$ for node $i$,
a set of public (previous) node states $\interns_{\prev}$ 
of $G$, the \localadjust algorithm $\alg$, 
and the phase number $p$. Mechanism
$\mech$ outputs the next state $I_{i, p}$ of node $i$ for phase $p$
if the algorithm is still running or $\emptyset$ if the algorithm has stopped. 
It also outputs the set of outputs $\outputs_i$ for node
$i$ if the algorithm has stopped
or $\emptyset$ if the algorithm is still running.
The mechanism modifies $\alg$ in the following ways:

\begin{itemize}
    \item For each node $i$, the node computes $\hatn_{i, p}$ by using the public node states, 
    sampling $X_{i, p} \sim \geom(\eps/6K)$ and 
    calculating $\hatn_{i, p} = n_{i, p} + X_{i, p}$. 
    Node $i$ uses $\hatn_{i, p}$ 
    instead of $n_{i, p}$ in
    $\nodef_v(I_{i, \prev}, \hatn_{i,p})$ where $I_{i, \prev}$ is from $\interns_{\prev}$. 
    \item When determining the stopping condition, node $i$ samples $S_{i, p}\sim \geom(\eps/(6K))$
    and computes $\hs_{i, p} = s_{i, p} + S_{i, p}$. 
    Node $i$ uses $\hs_{i, p}$
    instead of $s_{i, p}$ in 
    $\stopf_{i}(\hs_{i, p})$.
    \item Let $\gs_{\outf_i}$ be the global sensitivity of $\outf_i$. To compute the output after satisfying the stopping condition, each node $i$ samples $Q_{i, p} \sim \geom(\eps/(6 \cdot \gs_{\outf_{i}} \cdot K))$
    and outputs $\outf_{i}(I_{i, p}) + Q_{i, p}$.
\end{itemize}

\begin{theorem}\label{alg:ledp}
$\hat{\alg}$ is $\eps$-LEDP provided \localadjust algorithm $A$.
\end{theorem}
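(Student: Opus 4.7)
The plan is to follow the template of \cref{lem:incremental-LEDP}: exhibit $\hat{\alg}$ as a protocol that accesses the private graph only through local randomizers run by each node on its own adjacency list, and then combine the privacy of each randomizer using adaptive composition (\cref{thm:composition}) and post-processing (\cref{thm:post-processing}) to obtain the overall $\eps$-LEDP guarantee per \cref{def:LEDP}. This is cleaner than the conditional-distribution chain-rule argument used in the central-model proof of \cref{thm:la-dp}, because in the local model we can invoke the local-randomizer abstraction directly.

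First, I would enumerate the local randomizers each node $i$ runs. In each phase $p$, node $i$ accesses its adjacency list $\adj_i$ and the publicly posted previous states $\interns_{\prev}$ and releases two quantities: the noisy count $\hatn_{i,p} = n_{i,p} + X_{i,p}$ with $X_{i,p} \sim \geom(\eps/(6K))$ and the noisy stopping count $\hs_{i,p} = s_{i,p} + S_{i,p}$ with $S_{i,p} \sim \geom(\eps/(6K))$; after the stopping condition fires, it additionally releases $\outf_i(I_{i,p}) + Q_{i,p}$ with $Q_{i,p} \sim \geom(\eps/(6K\gs_{\outf_i}))$. Because $\neighbf_i(w)$ and the condition $F$ used by $\stopf_i$ depend only on the publicly posted states of $w$, viewed as functions of the private input $\adj_i$ both $n_{i,p}$ and $s_{i,p}$ have sensitivity $1$ (flipping a single edge incident to $i$ changes each count by at most one, analogously to \cref{cor:sensitivity-count} and \cref{lem:sensitivity-st}), and $\outf_i$ has sensitivity $\gs_{\outf_i}$ by assumption. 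The geometric mechanism (\cref{lem:sgd-private}) with the scalings above therefore makes each of these three releasing steps an $\eps/(6K)$-local randomizer in the sense of \cref{def:local-randomizer}.

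Next, I would apply adaptive composition node-by-node. Each node $i$ invokes at most $2K+1$ such local randomizers, and the curator's choice of which randomizer to invoke next is a deterministic function of the public transcript only; \cref{thm:composition} then yields that the marginal distribution of the messages released by node $i$ is $(2K+1)\eps/(6K) \leq \eps/2$-edge DP in $\adj_i$, using $K \geq 1$. For edge-neighboring graphs $G$ and $G'$ differing in a single edge $\{u,v\}$, the adjacency lists $\adj_w$ are identical for every $w \notin \{u,v\}$, so those nodes' releases have identical conditional distributions on $G$ and $G'$; only the factors for $u$ and $v$ contribute to the privacy loss, and multiplying them gives a bound of $e^{\eps}$ on the ratio of transcript probabilities. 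Because the curator's aggregation into $\interns_p$, its selection of active nodes, and the final publication of the transcript are deterministic functions of already-released values, \cref{thm:post-processing} preserves the bound for the full transcript, establishing $\eps$-LEDP.

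The main obstacle I expect is verifying adaptivity cleanly: one must confirm that, in the modified framework, the conditions $B$ and $F$ really do depend only on the publicly posted states in $\interns_{\prev}$ (not on any private neighbor information beyond what $\adj_i$ contributes), so that node $i$ can evaluate them given $\adj_i$ plus public information, and that the curator's branching (which nodes are queried in which round and with which privacy scale) is measurable with respect to the public transcript. A secondary bookkeeping point is to check that the scaling of the output noise by $1/\gs_{\outf_i}$ rebalances to the common $\eps/(6K)$ privacy bound, so that the single output release can be folded into the per-node composition on equal footing with the $2K$ count releases without blowing up the constant beyond $(2K+1)/(6K) \leq 1/2$.
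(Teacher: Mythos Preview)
Your proposal is correct and follows essentially the same approach as the paper's own proof: identify each noisy release (for $\nodef$, $\stopf$, and $\outf$) as a geometric-mechanism-based $(\eps/(6K))$-local randomizer via the relevant sensitivity bounds, then combine across phases by adaptive composition (\cref{thm:composition}) and across the two endpoints of the differing edge---the paper cites group privacy (\cref{thm:group-dp}) for this last step, which is exactly your explicit ``only $u$ and $v$ contribute'' argument. Your write-up is in fact more detailed than the paper's brief proof, particularly in checking that the curator's branching depends only on the public transcript.
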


\begin{proof}
    We implement our LEDP algorithm using the following local
    randomizers. First, by a modified version of~\cref{lem:sensitivity-adjacents} and~\cref{lem:sgd-private}, our mechanism implements an $(\eps/6)$-LR
    for computing the output of $\nodef_v$. Similarly,
    by~\cref{lem:sensitivity-st} and~\cref{lem:sgd-private},
    our mechanism implements a $(\eps/6)$-LR for obtaining
    the output of $\stopf_{i}$. Finally, by~\cref{lem:sgd-private} and because the global sensitivity of 
    $\outf_i$ is $\gs_{\outf_i}$, our mechanism 
    implements a $(\eps/3)$-LR for producing the output
    of each node. Finally, by~\cref{thm:composition} and \cref{thm:group-dp} over all calls to LRs over the $K$
    phases, our algorithm is $\eps$-LEDP.
\end{proof}
}

\section*{Acknowledgements}
We thank Talya Eden for helpful discussions. 
This research was supported by 
DOE Early Career Award \#DE-SC0018947,
NSF CAREER Award \#CCF-1845763, Google Faculty Research Award, Google Research Scholar Award, DARPA
SDH Award \#HR0011-18-3-0007, and Applications Driving Architectures
(ADA) Research Center, a JUMP Center co-sponsored by SRC and DARPA.

\appendix

\section{Proof of the Adaptive Composition Theorem}\label{app:adaptive}

Here we present the proof of~\cref{thm:composition} (\cite{DMNS06,DL09,DNPR10})
for completeness. 

\begin{proof}[Proof of~\cref{thm:composition}]
    Let $\mech$ be an $\eps$-LDP mechanism and $\pmb{y} = (y_1, \dots, y_{p-1})$ be a set of $p$ outputs for 
    adaptively chosen adjacent inputs $\pmb{x} = (x_0, x_1, \dots, x_{p-1})$ and $\pmb{x}' = (x_0', x_1', \dots, x_{p-1}')$
    to the mechanism where $y_i \in \range(\mech)$ for every $i \in [p-1]$. Let the randomness over
    $\mech$ be a discrete probability distribution.
    Below, we abuse notation and let $\mech(\pmb{x})$ denote the set of outputs obtained by running 
    $\mech$ on adaptive inputs $\pmb{x}$.
    We then have:
    
    \begin{align*}
        \frac{\prob[\mech(\pmb{x}) = \pmb{y}]}{\prob[\mech(\pmb{x'}) = \pmb{y}]} &= \left(\frac{\prob[\mech(x_0) = y_1]}{\prob[\mech(x_0') = y_1]}\right) \cdot \prod_{i = 1}^{p-1} \frac{\prob[\mech(x_i) = y_i| y_1, \dots, y_{i-1}]}{\prob[\mech(x_i') = y_i| y_1, \dots, y_{i - 1}]} \\
        &\leq \prod_{i = 1}^{p} \exp(\eps) = \exp(p\eps).
    \end{align*}
    
    The second inequality follows since $\mech$ is $\eps$-LDP, each pair of inputs are adjacent, and previous
    outputs are public information. 
\end{proof}

\section{$k$-Core Decomposition Approximation Proofs}\label{app:ledp-approx}

\begin{proof}[Proof of~\cref{lem:expectation}]
    This proof is a simple modification of the proof of Lemma 5.12 of~\cite{LSYDS22}.
    In this proof, when we refer to the level of a node $i$, we mean the level of $i$
    in $L_{\numlevels-1}$. Furthermore, all expressions are given \emph{in expectation}. 
    For simplicity, we omit the phrase \emph{in expectation} from now on in this proof. 
    Using notation from previous work, let $\kest(i)$ be the core number estimate of $i$ and $\core(i)$ be the
    core number of $i$. First, we show that 
    \begin{align}
        \text{if } \kest(i) \leq (2+\lambda)(1+\lf)^{g'}, \text{ then }
        \core(i) \leq (1+\lf)^{g' + 1}\label{expect-eq:first-inequality}
    \end{align}
    for any group $g'$. Let $T(g')$ be the topmost level of group $g'$ and 
    let $Z_{\lcur}$ be the set of nodes at level $\lcur$ and above.
    In order for $(2+\lambda)(1+\psi)^{g'}$ to be the estimate of node $i$'s core number, 
    the level of $i$ is bounded by $T(g') \leq \level(i) \leq T(g' + 1) - 1$. 
    Let $\lcur$ be $i$'s level. By a modified~\cref{inv:degree-1} for the expectation setting, 
    if $\lcur < T(g' + 1)$, then $|\adj_i \cap Z_\lcur| \leq \upexp^{\gn(\lcur)} \leq
    \upexp^{g' + 1}$. Furthermore,
    each node $w$ at the same or lower level $\lcur' \leq \lcur$ has $|\adj_w \cap Z_{\lcur'}|
    \leq (1+\lf)^{g' + 1}$. 
    
    Suppose we perform the following iterative procedure:
    starting from level $\lcur = 0$,
    remove all nodes in level $\lcur$ during this turn and set $\lcur
    \leftarrow \lcur + 1$ for the next turn and perform the removal again; we perform this
    iterative procedure until all nodes are removed. Using this procedure, the nodes in
    level $0$ are removed in the first turn, the nodes in level $1$ are
    removed in the second turn, and so on. Let $d_{\lcur}(i)$ be the induced degree of
    any node $i$ that is in the graph before the removal in the $\lcur$-th turn. Since we showed above
    that $|\adj_i \cap Z_{\lcur}| \leq (1+\lf)^{g' + 1}$ for any node
    $i$ at level $\lcur \leq T(g' + 1)$, node $i$ on level $\lcur \leq T(g' + 1)$
    after the $\lcur$-th turn has $d_{\lcur}(v) \leq \upexp^{g' + 1}$.
    Thus, when $i$ is removed on the $(\lcur+1)$-st turn, it has degree $\leq
    \upexp^{g' + 1}$. Since all nodes removed before $i$
    also had degree $\leq \upexp^{g' + 1}$ when it was removed,
    by~\cref{lem:folklore}, node $i$
    has core number $\core(i) \leq \upexp^{g' + 1}$.
    
    Now we prove our lower bound on $\kest(i)$. For the below proof, let $d_S(i)$ be the 
    induced degree of node $i$ in the induced subgraph consisting of nodes in $S$.
    We prove that for any $g' \geq 0$, 
    \begin{align}
        \text{if } \kest(i) \geq (1+\lf)^{g'}, \text{ then }
        \core(i) \geq \frac{(1+\lf)^{g'}}{\upexpold}\label{expect-eq:second-inequality}
    \end{align}
    for all nodes $i \in [n]$
    in the graph. We assume for contradiction that there exists a node $i$ where
    $\kest(i) \geq (1+\lf)^{g'}$ and $\core(i) < \frac{(1+\lf)^{g'}}{\upexpold}$.
    To consider this case, we use the \emph{pruning} process defined
    in~\cref{lem:folklore}. For a given subgraph $S$, we
    \emph{prune} $S$ by repeatedly removing all nodes $i \in S$ whose $d_S(i)
    < \frac{(1+\lf)^{g'}}{\upexpold}$. As in the proof of Lemma 5.12
    of~\cite{LSYDS22}, we only consider levels from the same group $g'$. 
    Let $j$ be the number of levels below level $T(g')$. We prove via induction that the
    number of nodes pruned from the subgraph induced by $Z_{T(g') - j}$ must
    be at least
    \begin{align}
        \left(\frac{\upexpold}{2}\right)^{j-1}\left(\lbexpexpect\right)\label{expect-eq:pruned}.
    \end{align}

    We first prove the base case when $j = 1$. In this case, we know that
    $d_{Z_{T(g') - 1}}(i) \geq (1+\lf)^{g'}$ by a modified version of~\cref{inv:degree-2} that holds in expectation.
    In order to prune $i$ from the graph, we must prune at least

    \begin{align*}
        (1+\lf)^{g'} - \frac{(1+\lf)^{g'}}{\upexpold} = \downexp^{g'} \cdot \left(1 -
            \frac{1}{\upexpold}\right)
    \end{align*}
    neighbors of $i$ from $Z_{T(g') - 1}$. We must prune at least this many neighbors
    in order to reduce the degree of $i$ to below the cutoff for pruning a node (as we show more formally below).

    Then, if fewer than $\lbexpexpect$
    neighbors of $i$ are pruned from the graph, then $i$ is not pruned from the
    graph. If $i$ is not pruned from the graph, then $i$ is part of a $\left(
    \frac{(1+\lf)^{g'}}{\upexpold}\right)$-core (by~\cref{lem:folklore})
    and $\core(i) \geq \frac{(1+\lf)^{g'}}{\upexpold}$, a contradiction. Thus, it must be
    the case that at least $\lbexpexpect$
    neighbors of $v$ are pruned in $Z_{T(g') - 1}$.
    For our induction hypothesis, we
    assume that at least the number of nodes as indicated in~\cref{expect-eq:pruned}
    is pruned for $j$ and prove this for $j + 1$.

    Each node $w$ in levels $T(g') - j$ and above has
    $d_{Z_{T(g') - j - 1}}(w) \geq
    (1+\lf)^{g'}$ by~\cref{inv:degree-2} (recall that all $j$ levels
    below $T(g')$ are in group $g'$). For simplicity of expression,
    we denote $\lbabrv \triangleq \lbexpexpect$.
    Then, in order to prune the
    $\left(\frac{\upexpold}{2}\right)^{j-1}\lbabrv$
    nodes by our induction hypothesis, we must prune at least

    \begin{align}
        &\left(\frac{\upexpold}{2}\right)^{j-1}\lbabrv \cdot
        \left(\frac{(1+\lf)^{g'}}{2}\right)
        \label{expect-eq:pruned-edges}
    \end{align}
    edges where each such edge is ``charged'' to the endpoint
    that gets pruned last. (Note that we actually need to prune at least 
    $\downexp^{g'} \cdot \left(1 -
    \frac{1}{\upexpold}\right)$ edges per pruned node
    as in the base case but 
    $\frac{\downexp^{g'}}{2}$ lower bounds this amount.)
    Each pruned node prunes less than
    $\frac{(1+\lf)^{g'}}{\upexpold}$
    edges. Thus, using~\cref{expect-eq:pruned-edges}, the number of nodes
    that must be pruned from $Z_{T(g') - j - 1}$ is at least
    \begin{align}
        \left(\frac{\upexpold}{2}\right)^{j-1} \lbabrv \cdot
        \frac{(1+\lf)^{g'}}{2\left(\frac{(1+\lf)^{g'}}
        {\upexpold}\right)} = \left(\frac{\upexpold}{2}\right)^j \lbabrv.
        \label{expect-eq:final-induction}
    \end{align}%
    \cref{expect-eq:final-induction} proves our induction step. 
    Using~\cref{expect-eq:pruned}, the number of nodes that must be pruned from
    $Z_{T(g') - 2\log_{\upexpold/2}\left(n\right)}$ is
    greater than $n$ (when $n > 2$) since $J \geq 1/2$:
    \begin{align}
        \left(\frac{\upexpold}{2}\right)^{2\log_{\upexpold/2}\left(n\right)} \cdot \lbabrv \geq \frac{n^2}{2}.
        \label{expect-eq:final-eq}
    \end{align}
    Thus, at $j =
    2\log_{\upexpold/2}\left(n\right)$, we run out of
    nodes to prune. We have reached a contradiction as we require pruning greater than 
    $n$ nodes in expectation
    assuming $\core(i) < \frac{(1+\lf)^{g'}}{\upexpold}$. 
    This contradicts the fact that more than $n$ nodes is pruned with $0$ probability.
    
    From the above, we can first obtain the inequality $\core(i) \leq \kest(i)$ from~\cref{expect-eq:first-inequality} since
    this bounds the case when $\kest(i) = (2+\lambda)(1+\psi)^{g'}$; if $\kest(i) < (2+\lambda)(1+\psi)^{g'}$
    then the largest possible value for $\kest(i)$ is $(2+\lambda)(1+\psi)^{g'-1}$ by~\cref{alg:estimate} and
    we can obtain the tighter bound of $\core(i) \leq (1+\psi)^{g'}$. We can substitute $\kest(i) = 
    (2+\lambda)(1+\psi)^{g'}$ since $(1+\psi)^{g' + 1} < (2+\lambda)(1+\psi)^{g'}$ for all $\psi \in (0, 1)$ and 
    $\lambda > 0$. 
    
    Second, by~\cref{expect-eq:second-inequality}, for any estimate $(2+\lambda)(1+\psi)^{g}$, 
    the largest $g'$ for which this estimate
    has $(1+\psi)^{g'}$ as a lower bound is $g' = g + \floor{\log_{(1+\psi)}(2+\lambda)} \geq g 
    + \log_{(1+\psi)}(2+\lambda) -1$. Substituting this $g'$ into $\frac{(1+\lf)^{g'}}{\upexpold}$ 
    results in 
    \begin{align*}
        \frac{(1+\lf)^{g + \log_{(1+\psi)}(2+\lambda) -1}}{\upexpold} = \frac{\frac{(2+\lambda)(1+\lf)^{g}}{1+\lf}}{\upexpold} = \frac{\frac{\kest(v)}{1+\lf}}{\upexpold}.
    \end{align*}
    Thus,
    we can solve $\core(v) \geq \frac{\frac{\kest(v)}{1+\lf}}{(2+\lambda)(1+\lf)}$
    and $\core(v) \leq \kest(v)$ to obtain
    \begin{align*}
        \core(v) \leq \kest(v) \leq 
        \upexpold^2 \core(v)
    \end{align*}
    which is consistent with the definition of
    a $(2+\const, 0)$-factor approximation algorithm (in expectation)
    for core number for any constant $\coren > 0$ and appropriately
    chosen constants $\lambda, \lf \in (0, 1)$ that depend on $\coren$.
\end{proof}

\section{Challenges with Using Previous Techniques}\label{app:challenges}

Let us consider a simple algorithm for obtaining the $k$-core decomposition. The 
classic algorithm of Matula and Beck~\cite{Matula83} 
for static, centralized, sequential $k$-core decomposition repeatedly \emph{peels} the 
node with smallest degree; once a node is peeled, all adjacent edges to the node
are removed and the process repeats until the graph is empty. The core number of each
node is the degree of the node when it is removed from the graph. In fact, 
many current non-DP algorithms for $k$-core decompositions use some variant of peeling.

Suppose we attempt to turn this algorithm into a DP algorithm. Let $f(v_i)$ be a function
that takes a node and returns the core number of $v_i$. On edge-adjacent graphs
$G$ and $G'$, the sensitivity of $f$ is one, $\gs_f = 1$. Furthermore, suppose 
$F(V) = \sum_{v \in V} f(v)$ is the function that returns the sum of all outputs of $f$ on 
input $v \in V$. Then, $\gs_F = n = |V|$ since \emph{every} node's core number can increase
or decrease by $1$ as a result of the deletion or addition of a single edge. (Consider a 
cycle; all nodes in the cycle have core number $2$ but removing one edge reduces the core
numbers to $1$.) 

From the above sensitivity analysis, in order to transform the Matula and Beck algorithm into
an $\eps$-edge DP algorithm, we must add $\geom(\eps/\gs_1) = \geom(\eps)$ noise to the core numbers of every
node. By the composition theorem, this results in a $n\eps$-edge DP algorithm. Thus, instead
of adding $\geom(\eps)$ noise, we must instead add $\geom(\eps/\gs_F) = \geom(\eps/n)$ noise.
However, adding this much noise results in an additive error of $\tO(n/\eps)$ (hiding $\poly(\log n)$ factors).
This is no better than simply guessing the core number!
Thus, this simple algorithm fails to provide us with good approximation factors.

One can instead consider other approximate non-DP algorithms (instead of exact algorithms) but it is
difficult to bound the $\gs_F$ of these algorithms directly. Thus, we must find a function that 
simultaneously has small sensitivity and for which it is easy to bound the $\gs$ of the function. 
Then, we need to show that the function is not queried too many times.
(We do precisely this in our result.)

\section{Dual of the Densest Subgraph LP}\label{app:dual}

The classic densest subgraph ILP introduced by Charikar~\cite{Charikar00}
associates each node $v$ with a variable $x_v \in \{0, 1\}$ where $x_v = 1$
indicates that node $v$ is in the densest subgraph (and $x_v = 0$ otherwise). Each edge is associated with a variable $y_e \in \{0, 1\}$ that indicates
whether it is part of the densest subgraph. Relaxing the variables to take real values in $0 \leq x_v \leq 1$ and $0 \leq y_e \leq 1$
allows us to obtain an LP whose optimal has been shown to equal the density of the densest subgraph.
This LP is shown below in~\cref{lp:densest}.

\begin{equation}\label{lp:densest}
\begin{matrix}
\displaystyle \max \sum_{e \in E} y_e  \\
\textrm{such that} & y_e \leq x_u, x_v, & \forall e = \{u, v\} \in E  \\
& \sum_{v \in V} x_v \leq 1, &  \\
& y_e \geq 0, x_v \geq 0 & \forall e \in E, \forall v \in V
\end{matrix}
\end{equation}

One can show that the dual of the densest subgraph LP naturally corresponds to the minimum outdegree orientation problem where edges are provided an orientation
and the goal is to minimize the maximum outdegree of any node. The LP itself corresponds to this problem in the following way.
Each edge $e = \{u, v\}$ has a load of $1$ which it wants to assign to its endpoints. The variables $f_e(u)$ and $f_e(v)$ represent these
loads. The objective is to minimize the maximum load assigned to any node for any feasible load assignment. The dual of the densest subgraph
LP is given below in~\cref{lp:dual}.

\begin{equation}\label{lp:dual}
\begin{matrix}
\displaystyle \min D  \\
\textrm{such that} & f_e(u) + f_e(v) \geq 1, & \forall e = \{u, v\} \in E  \\
& \sum_{e \ni v} f_e(v) \leq D, &  \\
& f_e(u), f_e(v) \geq 0 & \forall e = \{u, v\} \in E
\end{matrix}
\end{equation}

The dual is used in the proof of~\cref{lem:feasible-lp}.

\ifsubmit
    \bibliographystyle{ACM-Reference-Format}
\else
    \bibliographystyle{alpha}
\fi
\bibliography{ref}

\end{document}